%&latex
\documentclass[12pt]{article}
\usepackage{amsmath}
\usepackage{amssymb}
\usepackage{amsfonts}
\usepackage{graphicx,psfrag,epsf}
\usepackage{natbib}
\usepackage[utf8]{inputenc}
 
\usepackage{hyperref}
\usepackage{mathtools}
\usepackage{threeparttable}
\usepackage{booktabs}
\usepackage{bookmark}
 
\usepackage{url} % not crucial - just used below for the URL 
\usepackage{appendix}
 
\usepackage[figuresright]{rotating}

\usepackage{amsthm}
 
\def\bSig\mathbf{\Sigma}

\newcommand{\indep}{\perp \!\!\! \perp}
\DeclareMathOperator*{\argmax}{arg\,max}
\DeclareMathOperator*{\argmin}{arg\,min}

\newtheorem{theorem}{Theorem}
\newtheorem{lemma}{Lemma}
%\newtheorem{proof}{Proof}

%  The rotating package allows you to have tables displayed in landscape
%  mode.  The rotating package is NOT included in this distribution, but
%  can be obtained from the CTAN archive.  USE OF LANDSCAPE TABLES IS
%  STRONGLY DISCOURAGED -- create landscape tables only as a last resort if
%  you see no other way to display the information.  If you do do this,
%  then you need the following command.

\addtolength{\oddsidemargin}{-.5in}%
\addtolength{\evensidemargin}{-.5in}%
\addtolength{\textwidth}{1in}%
%\addtolength{\textheight}{-.3in}%
%\addtolength{\topmargin}{-.8in}%

\begin{document}

%%%%%%%%%%%%%%%%%%%%%%%%%%%%%%%%%%%%%%%%%%%%%%%%%%%%%%%%%%%%%%%%%%%%%%%%%%%%%%

  \title{\bf Estimating optimal individualized treatment rules with multistate processes}
  \author{Giorgos Bakoyannis \\
    Department of Biostatistics and Health Data Science \\
     Indiana University Fairbanks School of Public Health \\ 
     Indianapolis, IN 46202, U.S.A.}
  \date{}
  \maketitle

%\bigskip
\def\spacingset#1{\renewcommand{\baselinestretch}%
{#1}\small\normalsize} \spacingset{1}

%%%%%%%%%%%%%%%%%%%%%%%%%%%%%%%%%%%%%%%%%%%%%%%%%%%%%%%%%%%%%%%%%%%%%%%%%%%%%%

%\bigskip
\begin{abstract}
Multistate process data are common in studies of chronic diseases such as cancer. These data are ideal for precision medicine purposes as they can be leveraged to improve more refined health outcomes, compared to standard survival outcomes, as well as incorporate patient preferences regarding quantity versus quality of life. However, there are currently no methods for the estimation of optimal individualized treatment rules with such data. In this article, we propose a nonparametric outcome weighted learning approach for this problem in randomized clinical trial settings. The theoretical properties of the proposed methods, including Fisher consistency and asymptotic normality of the estimated expected outcome under the estimated optimal individualized treatment rule, are rigorously established. A consistent closed-form variance estimator is provided and methodology for the calculation of simultaneous confidence intervals is proposed. Simulation studies show that the proposed methodology and inference procedures work well even with small sample sizes and high rates of right censoring. The methodology is illustrated using data from a randomized clinical trial on the treatment of metastatic squamous-cell carcinoma of the head and neck.
\end{abstract}

\noindent%
{\it Keywords:} Multistate model; outcome weighted learning; patient preferences; precision medicine; response.
\vfill

\newpage
\spacingset{1.45} % DON'T change the spacing!

\section{Introduction}
\label{s:intro}

Modern precision medicine acknowledges the heterogeneity of the majority of human diseases and aims to develop and deliver therapies that are tailored to the individual patient. At the heart of these efforts is the development of data-driven individualized treatment assignment rules \citep{Kosorok19}. The purpose of such rules is to provide the right treatment to a given patient \citep{Lipkovich17} and, thereby, to improve health outcomes among patients overall. Over the past decade there has been a large number of new statistical methods for the estimation of optimal individualized treatment rules (ITRs) with various types of outcomes, including continuous, binary, survival, and competing risks outcomes. However, to the best of our knowledge, there are currently no methods for the estimation of optimal ITRs with stochastic processes that evolve through multiple discrete states over (continuous) time, also known as \textit{multistate processes}. Such processes are commonly encountered in studies of chronic diseases, such as cancer and HIV infection, where disease evolution is often characterized by multiple discrete health states. An example is the SPECTRUM trial \citep{Vermorken13}, a phase III randomized clinical trial on recurrent or metastatic squamous-cell carcinoma of the head and neck, where patient event history was characterized by the states ``tumor response" (i.e., tumor shrinkage per \citet{RECIST}), ``disease progression'', and ``death''. This paper addresses the issue of optimal ITR estimation with right-censored multistate processes data in randomized clinical trial settings. 

Multistate process data are ideal for precision medicine purposes as they can be leveraged to improve more refined health outcomes (compared to, e.g., overall or progression-free survival) as well as incorporate patient preferences. Typically, such data contain information about one or more transient health states (e.g., tumor response) that is not fully captured by standard survival or competing risks data. Therefore, they provide more comprehensive information about more refined health outcomes that may reflect both quantity and quality of life. Improvement of such outcomes may be more desirable than mere life extension to both patients and clinicians. In oncology, for example, such a desirable health outcome is sustained tumor response, which is associated with better quality of life, extended progression-free survival (PFS) time, and a prolonged treatment-free interval \citep{Kaufman17}. Tumor response was also a desirable outcome in the SPECTRUM trial. Another important outcome is extended quality-adjusted lifetime, which can be defined as the weighted sum of the times spent in a set of desirable health states \citep{Pelzer17, Oza20}. Quality-adjusted lifetime can be defined in multiple different ways according to patient preferences by using different weighting schemes that reflect the needs of different groups of patients. In this way, individual priorities towards quality versus quantity of life can be taken into account.

The methods for optimal ITR estimation can be classified into two broad categories: (i) backward induction methods such as Q-learning \citep{Murphy05, Zhao09} and A-learning \citep{Murphy03, Robins04} and (ii) direct-search methods, also known as policy-search methods, such as outcome weighted learning \citep{Zhao12} and value search estimation \citep{Zhang12}. The first class of methods typically estimates the optimal ITR by modeling either the conditional expectation of the outcome given the covariates (Q-learning) or the interactions between treatment and the covariates (A-learning). The second class of methods estimates the optimal ITR directly by optimizing an appropriate objective function. Given that imposing a realistic model for the conditional mean outcome or the treatment interactions is difficult for general multistate processes, the focus in this article is on direct-search methods. The issue of ITR estimation with survival outcomes has been well addressed in the literature. \citet{Zhao15} extended the outcome weighted learning framework of \citet{Zhao12} to situations where the outcome of interest is a right-censored survival time by proposing (i) an inverse censoring weighting (ICO) and (ii) a doubly robust (DR) outcome weighted learning approach. Estimation in these approaches relies on a weighted version of support vector machines \citep{Cortes95}. A similar method for ITR estimation with survival outcomes was proposed by \citet{Bai17}. A value search estimation approach for maximizing the $t$-year survival probability was developed by \citet{Jiang17}. This method approximates the nonsmooth objective function of the problem by a smooth one using kernel smoothing. Recently, there has also been some interest in the issue of ITR estimation with competing risk outcomes \citep{Zhou21,He21}. Nevertheless, the methods for survival and competing risk outcomes cannot be used for the estimation of optimal ITRs with right-censored multistate processes.

In this work, we extend the outcome weighted learning framework \citep{Zhao12,Zhao15} to deal with situations where the outcome of interest is an arbitrary right-censored multistate process, incorporating also patient preferences. The proposed method does not impose Markov assumptions or model assumptions on the multistate process of interest. The novelty of this paper is twofold. First, we devise a novel objective function which, in contrast to the ICO approach by \citet{Zhao15}, utilizes information from the censored cases. Importantly, this is achieved without imposing and estimating a model for the conditional expectation of the outcome given treatment and the covariates, as opposed to the DR approach by \citet{Zhao15}. Second, in addition to showing Fisher and universal consistency of the proposed method, we establish the asymptotic distribution of the proposed estimator for the expected outcome under the estimated optimal ITR and derive a consistent closed-form variance estimator. Based on our theoretical results, we also propose a method for the calculation of simultaneous confidence intervals over a set of patient preferences to account for multiplicity. The simulation studies provide evidence that the proposed estimator and inference procedures work well even with small sample sizes and under high rates of right censoring. Furthermore, the simulation studies reveal that the proposed method performs better than the previously proposed ICO and DR approaches for censored failure times.

\section{Methodology}
\label{s:estimation}

\subsection{Notation and data}
Consider a multistate process $\{X(t):t\in[0,\tau]\}$ with (finite) state space $\mathcal{S}=\{1,\ldots, S\}$, where $\tau<\infty$ is the maximum observation time. Let $Z\in\mathcal{Z}\subset \mathbb{R}^p$ denote a vector of baseline covariates that may be related to the effect of treatment on the multistate process of interest. For simplicity, the treatment $A$ is considered to be a binary variable taking its values in the treatment set $\{-1,1\}$. With multistate processes, the goal of treatment is typically to prolong the time spent in a set of desirable health states. Given that some health states may be more desirable than others and that patient preferences may vary, we define the benefit processes
\[
\left\{Y_w(t)=w'\tilde{X}(t):t\in[0,\tau],w\in\mathcal{W}\right\},
\]
where $\tilde{X}(t)=(I\{X(t)=1\},\ldots,I\{X(t)=S\})'$, $w$ is an $S$-dimensional vector of preference weights that satisfies $0\leq w\leq 1$ and the sum of its elements is positive and less than $S$, and $\mathcal{W}=\{w_1,\ldots,w_M\}$ is a prespecified finite set of preference weight vectors that reflect different patient preferences/priorities. Based on the latter processes, we define the utilities
\begin{equation}
\int_0^{\tau}Y_w(t)\textrm{d}m(t) \label{Yt}, \ \ \ \ w\in\mathcal{W},
\end{equation}
where the integrator function is $m(t)=t$ and induces the Lebesgue measure on the Borel $\sigma$-algebra $\mathcal{B}([0,\tau])$. Since $\int_0^{\tau}I\{X(t)=j\}\textrm{d}m(t)$ is the time spent in the $j$th state during the time interval $[0,\tau]$, the utilities \eqref{Yt} represent weighted sums of the (restricted) times spent in each state. For example, consider the states ``initial disease state'' (state 1), ``tumor response'' (state 2), and ``disease progression or death'' (state 3), in the setting of the SPECTRUM trial mentioned in the Introduction. A potential choice for the set of preference weights in this example is $\mathcal{W}=\{(0,1,0)',(0.5,1,0)',(1,1,0)'\}$. When $w=(0,1,0)'$, the utility $\int_0^{\tau}Y_w(t)\textrm{d}m(t)=\int_0^{\tau}I\{X(t)=2\}\textrm{d}m(t)$ corresponds to the restricted time spent in the tumor response state. The utility under the choice $w=(0.5,1,0)'$ corresponds to the restricted quality-adjusted lifetime $\int_0^{\tau}Y_w(t)\textrm{d}m(t)=0.5\times\int_0^{\tau}I\{X(t)=1\}\textrm{d}m(t) + \int_0^{\tau}I\{X(t)=2\}\textrm{d}m(t)$. This utility represents the restricted (quality-adjusted) time lived, where the time spent in the initial disease state (i.e., without tumor response) is reduced by 50\% to reflect a quality of life loss due to disease symptoms and/or side effects due to treatment continuation. When $w=(1,1,0)'$, the utility is the (restricted) PFS time, in which the times lived in the initial disease state and the tumor response state are equally important. Different choices of $w$ reflect different patient priorities towards quality versus quantity of life.

In practice, the multistate and benefit processes are not fully observed for all individuals due to the usual right censoring. Let $C$ denote the right censoring time and $T^*$ the time of arrival at an absorbing state (i.e., death). Letting $T=T^*\wedge\tau$, with $a\wedge b=\min(a,b)$ for any $a,b\in\mathbb{R}$, $\tilde{T}_i=T_i\wedge C_i$, and $\Delta_i=I(T_i\leq C_i)$, the observed data consist of independent and identically distributed copies of $
D_i=(Z_i,A_i,\tilde{T}_i,\Delta_i,\{X_i(t)I(C_i\geq T_i\wedge t):t\in[0,\tau]\})$, $i=1,\ldots,n$, where $X_i(t)I(C_i\geq T_i\wedge t)$, $t\in[0,\tau]$, is the censored version of the multistate process which is equal to 0 for the censored individuals after their censoring time. Note that $Y_{i,w}(t)=w'(I\{X_i(t)=1\},\ldots,I\{X_i(t)=S\})'$ is the benefit process for the $i$th individual.

\subsection{Optimal ITR estimation with multistate processes}
An ITR is a deterministic function $d:\mathcal{Z}\mapsto\{-1,1\}$ which suggests treatment choice $d(z)\in\{-1,1\}$ for a patient with covariates $Z=z$. Since the estimation of an optimal ITR is essentially a causal inference problem, we utilize the potential outcomes causal framework \citep{Rubin05,Tsiatis19}. Let $Y_w^*(t;1)$ and $Y_w^*(t;-1)$, $w\in\mathcal{W}$, denote the potential benefit processes under treatment choices 1 and $-1$, respectively, at time $t\in[0,\tau]$. Since the potential outcomes cannot be directly observed in real-world settings, we need to impose the following causal assumptions.
\begin{itemize}
\item[A1.] Stable unit treatment value assumption: $Y_w(\cdot)=Y_w^*(\cdot;1)I(A=1)+Y_w^*(\cdot;-1)I(A=-1)$. 
\item[A2.] Independent treatment assignment assumption: $\{Y_w^*(\cdot;1),Y_w^*(\cdot;-1),Z\}\indep A$.
\item[A3.] Positivity assumption: $\pi_0=P(A=1)\in[c_1,c_2]$, with $0<c_1<c_2<1$.
\end{itemize}
Assumptions A2 and A3 are automatically satisfied in randomized clinical trials. It must be highlighted that assumption A1 implies that the \textit{observed} benefit process $Y_w(\cdot)$, unlike the \textit{potential} benefit processes $Y_w^*(\cdot;1)$ and $Y_w^*(\cdot;-1)$, is associated with treatment $A$. Now, we can define the potential benefit processes under an ITR  $d$ as 
\[
Y_w^*(t;d)=Y_w^*(t;1)I\{d(Z)=1\} + Y_w^*(t;-1)I\{d(Z)=-1\}, \ \ \ \ w\in\mathcal{W}, \ \ t\in[0,\tau].
\] 
These processes are essential for defining potential benefit under an ITR for multistate processes. For such processes, we define the \textit{value} functions as
\[
\mathcal{V}_{w}(d)=E\left\{\int_0^{\tau}Y_w^*(t;d)\textrm{d}m(t)\right\}, \ \ \ \ w\in\mathcal{W},
\]
where $\mathcal{V}_w(d)$ is the expected sum of the weighted (under $w$) times spent in each state of the process during the time interval $[0,\tau]$, under the ITR $d$. Under assumptions A1--A3, the independent right censoring assumption, and the fact that
\[
\frac{Y_w(t)I(C\geq T\wedge t)}{\exp\{-\Lambda_0(T\wedge t)\}}=\frac{Y_w(t)I(C\geq T\wedge t)}{\exp\{-\Lambda_0(\tilde{T}\wedge t)\}}, \ \ \ \ t\in[0,\tau], \ \ w\in\mathcal{W},
\]
it can be shown that the value functions can be expressed in terms of the observable data as 
\[
\mathcal{V}_w(d)=E\left(\left[\int_0^{\tau}\frac{Y_{w}(t)I(C\geq T\wedge t)}{\exp\{-\Lambda_0(\tilde{T}\wedge t)\}}\textrm{d}m(t)\right]\frac{I(A=d(Z))}{A\pi_0 + (1-A)/2}\right), \ \ \ \ w\in\mathcal{W},
\]
where $\Lambda_0(t)$ is the cumulative hazard function of the right censoring variable $C$ at time $t$. An optimal ITR $d_w^*$, $w\in\mathcal{W}$, is a maximizer of the corresponding value function, i.e.
\[
d_w^*\in\underset{d}{\argmax}\mathcal{V}_w(d).
\]
Given that any rule $d:\mathcal{Z}\mapsto\{-1,1\}$ can be expressed as $d(z)=\textrm{sgn}\{f(z)\}$, where $\textrm{sgn}(x)=I(x\geq 0)-I(x<0)$, for some measurable function $f:\mathcal{Z}\mapsto\mathbb{R}$, and since
\begin{eqnarray*}
\mathcal{V}_w(d)&=&E\left(\left[\int_0^{\tau}\frac{Y_{w}(t)I(C\geq 
T\wedge t)}{\exp\{-\Lambda_0(\tilde{T}\wedge t)\}}\textrm{d}m(t)\right]\frac{1}{A\pi_0 + (1-A)/2}\right) \\
&&-E\left(\left[\int_0^{\tau}\frac{Y_{w}(t)I(C\geq T\wedge t)}{\exp\{-\Lambda_0(\tilde{T}\wedge t)\}}\textrm{d}m(t)\right]\frac{I(A\neq d(Z))}{A\pi_0 + (1-A)/2}\right),
\end{eqnarray*}
an optimal ITR is $d_w^*=\textrm{sgn}\{f_w^*(Z)\}$, where $f_w^*$ is a minimizer of the \textit{risk} function
\begin{eqnarray*}
\mathcal{R}_w(f)&=&E\left(\left[\int_0^{\tau}\frac{Y_{w}(t)I(C\geq T\wedge t)}{\exp\{-\Lambda_0(\tilde{T}\wedge t)\}}\textrm{d}m(t)\right]\frac{I(A\neq \textrm{sgn}(f(Z))}{A\pi_0 + (1-A)/2}\right) \\
&=&E\left(\left[\int_0^{\tau}\frac{Y_{w}(t)I(C\geq T\wedge t)}{\exp\{-\Lambda_0(\tilde{T}\wedge t)\}}\textrm{d}m(t)\right]\frac{I(Af(Z)<0)}{A\pi_0 + (1-A)/2}\right),
\end{eqnarray*}
over all measurable functions $f:\mathcal{Z}\mapsto\mathbb{R}$. It is not hard to see that 
\[
\mathcal{V}_w(\textrm{sgn}(f_w^*))-\mathcal{V}_w(\textrm{sgn}(f))=\mathcal{R}_w(f)-\mathcal{R}_w(f_w^*), \ \ \ \ w\in\mathcal{W},
\]
for any measurable $f$. Minimizing $\mathcal{R}_w(f)$ over all measurable functions is clearly not feasible. Therefore, we will consider minimization over a subset of the class of all measurable functions $f:\mathcal{Z}\mapsto\mathbb{R}$. In particular, we consider either of the following subsets.
\begin{itemize}
\item Class of linear functions $\{f(\cdot)=\beta_0+\langle\beta_1,\cdot\rangle:\beta_0\in \mathbb{R},\beta_1\in \mathbb{R}^p\}$, where $\langle \beta_1,z\rangle=\beta_1'z$.
\item Reproducing kernel Hilbert space (RKHS) $\mathcal{H}_k$ with kernel $k$, which is the completion of the space $\{f(\cdot)=\sum_{j=1}^m \alpha_jk(\cdot,z_j):m\in\mathbb{N},z_j\in\mathcal{Z},\alpha_j\in\mathbb{R}\}$. Here we consider the RKHS with the Gaussian kernel, $k(z_1,z_2)=\exp(-\sigma_n\|z_1-z_2\|^2)$, $z_1,z_2\in\mathcal{Z}$.
\end{itemize}
Minimizing the empirical version of $\mathcal{R}_w(f)$, over the chosen class $\mathcal{F}$, is challenging because it involves a discontinuous and nonconvex function of $f$. To alleviate, we follow the paradigm of outcome weighting learning \citep{Zhao12,Zhao15} and support vector machines \citep{Steinwart08} and utilize the surrogate risk
\[
\mathcal{R}_{\phi,w}(f)=E\left(\left[\int_0^{\tau}\frac{Y_{w}(t)I(C\geq T\wedge t)}{\exp\{-\Lambda_0(\tilde{T}\wedge t)\}}\textrm{d}m(t)\right]\frac{\phi(Af(Z))}{A\pi_0 + (1-A)/2}\right),
\]
where $\phi(x)=\max(0,1-x)$ is the hinge loss, which is convex in $f$. The cumulative hazard function $\Lambda_0$ of the right censoring distribution can be estimated using the nonparametric Nelson--Aalen estimator
\[
\hat{\Lambda}_n(t)=\int_0^t\frac{\sum_{i=1}^n\textrm{d}N_i(u)}{\sum_{i=1}^nY_{i}(u)}, \ \ \ \ t\in[0,\tau],
\]
where $N_i(t)=(1-\Delta_i)I(\tilde{T}_i\leq t)$ and $Y_i(t)=I(\tilde{T}_i\geq t)$. An obvious estimator of $\pi_0$ is $\hat{\pi}_n=n^{-1}\sum_{i=1}^nI(A_i=1)$. Thus, the empirical version of the surrogate risk $\mathcal{R}_{\phi,w}$ is
\begin{equation}
\hat{\mathcal{R}}_{\phi,w}(f)=\frac{1}{n}\sum_{i=1}^n\left(\left[\int_0^{\tau}\frac{Y_{i,w}(t)I(C_i\geq T_i\wedge t)}{\exp\{-\hat{\Lambda}_n(\tilde{T}_i\wedge t)\}}\textrm{d}m(t)\right]\frac{\phi(A_if(Z_i))}{A_i\hat{\pi}_n + (1-A_i)/2}\right). \label{proposed_R}
\end{equation}
Note that, even though $\pi_0$ is known in clinical trials, the estimate $\hat{\pi}_n$ is used in \eqref{proposed_R} as this typically leads to some efficiency gain in inverse probability weighting type estimators \citep{Tsiatis19}. The empirical surrogate risk for the ICO estimator for censored failure times (e.g., $\tilde{T}$) by \citet{Zhao15} for the setting considered here is
\[
\frac{1}{n}\sum_{i=1}^n\left(\left[\frac{\Delta_i\tilde{T}_i}{\exp\{-\hat{\Lambda}_n(\tilde{T}_i)\}}\right]\frac{\phi(A_if(Z_i))}{A_i\pi_0+(1-A_i)/2}\right).
\]
The latter incorporates inverse censoring weighting in the censored time of interest and discards the censored observations ($\Delta_i=0$). In contrast, the proposed empirical surrogate risk \eqref{proposed_R} utilizes information from both uncensored and censored cases by incorporating inverse censoring weighting in the underlying stochastic process $\{Y_w(t):t\in[0,\tau]\}$. Importantly, this is achieved without imposing and estimating a model for the conditional expectation of the time of interest given $A$ and $Z$, as opposed to the DR approach \citep{Zhao15}. Simulation studies summarized in Section~\ref{s:sims} reveal that these characteristics of the proposed method lead to a better performance compared to the ICO and DR estimators \citep{Zhao15} for censored failure times.

Similarly to \citet{Zhao12} and \citet{Zhao15}, the estimators of the optimal decision functions within the class $\mathcal{F}$ are obtained using penalized minimization of $\hat{\mathcal{R}}_{\phi,w}$ as 
\[
\hat{f}_{n,w,\lambda_n}=\argmin_{f\in\mathcal{F}}\left\{\hat{\mathcal{R}}_{\phi,w}(f)+\lambda_n\|f\|^2\right\}, \ \ \ \ w\in\mathcal{W},
\]
where $\lambda_n$ is a positive tuning parameter that controls the complexity of $f$ and $\|\cdot\|$ is a norm on the chosen class $\mathcal{F}$. For example, $\|\cdot\|$ is the Euclidean norm if $\mathcal{F}$ is the class of linear functions. For notational simplicity we omit the subscript $\lambda_n$ from the estimated decision functions and use the more compact notation $\hat{f}_{n,w}$. Based on the estimated decision functions $\{\hat{f}_{n,w}:w\in\mathcal{W}\}$ the estimated optimal ITRs are
\[
\hat{d}_{n,w}(z)=\textrm{sgn}\{\hat{f}_{n,w}(z)\}, \ \ \ \ w\in\mathcal{W}, \ \ z\in\mathcal{Z}.
\]
Based on the class of estimated ITRs $\{\hat{d}_{n,w}:w\in\mathcal{W}\}$, treatment assignment for a given patient utilizes $\hat{d}_{n,w}$, for the weight vector $w$ that is closest, with respect to the Euclidean norm, to the preference weight $w_0$ that reflects the patient's own preferences/priorities.

\subsection{Estimation of the value function}
The value function of an ITR $d$ can be estimated as
\[
\hat{\mathcal{V}}_{n,w}(d)=\frac{1}{n}\sum_{i=1}^n\left(\left[\int_0^{\tau}\frac{Y_{i,w}(t)I(C_i\geq T_i\wedge t)}{\exp\{-\hat{\Lambda}_n(\tilde{T}_i\wedge t)\}}\textrm{d}m(t)\right]\frac{I(A_i=d(Z_i))}{A_i\hat{\pi}_n + (1-A_i)/2}\right), \ \ \ \ w\in\mathcal{W}.
\]
The value $\hat{\mathcal{V}}_{n,w}(d)$ can be seen as the (estimated) performance of the ITR $d$ under the preference weight vector $w$. The estimated value of the estimated optimal ITR $\hat{\mathcal{V}}_{n,w}(\hat{d}_{n,w})$ is expected to be an over-optimistic estimate of the performance of $\hat{d}_{n,w}$ in a future (out of sample) patient when sample size is small relatively to $p$ or relatively to the complexity of $\mathcal{F}$. This is because the estimator $\hat{\mathcal{V}}_{n,w}(\hat{d}_{n,w})$ uses the same set of data $\{D_i: i=1,\ldots,n\}$, for both the estimation of the optimal ITR and the evaluation of its performance. This phenomenon is similar to the behaviour of the training error in support vector machines \citep{Hastie09}. A better estimate of $\mathcal{V}_w(\hat{d}_{n,w})$ in finite samples is expected to be the jackknife or leave-one-out cross-validation value estimator
\begin{equation}
\hat{\mathcal{V}}^{jk}_{n,w}(\hat{d}_{n,w})=\frac{1}{n}\sum_{i=1}^n\left(\left[\int_0^{\tau}\frac{Y_{i,w}(t)I(C_i\geq T_i\wedge t)}{\exp\{-\hat{\Lambda}_n(\tilde{T}_i\wedge t)\}}\textrm{d}m(t)\right]\frac{I(A_i=\hat{d}_{n,w}^{(-i)}(Z_i))}{A_i\hat{\pi}_n + (1-A_i)/2}\right), \ \ \ \ w\in\mathcal{W}, \label{V_jack}
\end{equation}
where $\hat{d}_{n,w}^{(-i)}=\textrm{sgn}(\hat{f}_{n,w}^{(-i)})$ is the optimal ITR estimated under the preference weight vector $w$ using all but the data of the $i$th individual.

\section{Theoretical Properties}
\label{s:properties}
The first theorem justifies the use of the surrogate risk $\mathcal{R}_{\phi,w}$, instead of the original risk $\mathcal{R}_w$, for the estimation of optimal ITR $d_w^*$, $w\in\mathcal{W}$.
\begin{theorem}[Fisher consistency]
Suppose that assumptions A1--A3 and condition C1 in Appendix A hold. Then, for any $w\in\mathcal{W}$, if $\tilde{f}_w$ minimizes $\mathcal{R}_{\phi,w}$, $d_w^*(z)=\textrm{sgn}\{\tilde{f}_w(z)\}$ for all $z\in\mathcal{Z}$.
\end{theorem}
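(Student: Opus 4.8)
The plan is to reduce the minimization of the surrogate risk $\mathcal{R}_{\phi,w}$ over all measurable functions to a collection of one‑dimensional minimizations, one for each $z\in\mathcal{Z}$, solve that scalar problem explicitly for the hinge loss, and then match the resulting sign with the contrast that defines $d_w^*$.

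First I would introduce the shorthand $Q_w=\int_0^{\tau}\frac{Y_{w}(t)I(C\geq T\wedge t)}{\exp\{-\Lambda_0(\tilde{T}\wedge t)\}}\,\textrm{d}m(t)$ and record that $Q_w\geq 0$ (since $0\leq w\leq 1$ forces $Y_w(t)\geq 0$) and that $Q_w$ is bounded by $\tau e^{\Lambda_0(\tau)}$, which is finite under condition C1; hence the conditional means $\mu_w(a,z)=E[Q_w\mid A=a,Z=z]$, $a\in\{-1,1\}$, are well defined, nonnegative, and finite. Conditioning successively on $(A,Z)$ and on $Z$, and using A2 (so that $P(A=a\mid Z)=\pi_0^{I(a=1)}(1-\pi_0)^{I(a=-1)}$, which exactly cancels the denominator $A\pi_0+(1-A)/2$) together with A3 (so that this denominator stays bounded away from $0$), I obtain
\[
\mathcal{R}_{\phi,w}(f)=E_Z\big[\mu_w(1,Z)\,\phi(f(Z))+\mu_w(-1,Z)\,\phi(-f(Z))\big].
\]
Because no constraint links the value of $f$ at distinct points, minimizing this expectation over all measurable $f$ is equivalent to minimizing, separately for each fixed $z$, the scalar map $g_z(\alpha)=\mu_w(1,z)\,\phi(\alpha)+\mu_w(-1,z)\,\phi(-\alpha)$ over $\alpha\in\mathbb{R}$.

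The core computation is the explicit minimization of $g_z$. Writing $\eta_+=\mu_w(1,z)\geq 0$, $\eta_-=\mu_w(-1,z)\geq 0$ and using $\phi(\alpha)=\max(0,1-\alpha)$, the function $g_z$ is piecewise linear with breakpoints at $\pm 1$: nonincreasing on $(-\infty,-1]$, affine with slope $\eta_--\eta_+$ on $[-1,1]$, and nondecreasing on $[1,\infty)$. A short case analysis then shows that whenever $\eta_+\neq\eta_-$ every minimizer $\alpha^*$ of $g_z$ satisfies $\textrm{sgn}(\alpha^*)=\textrm{sgn}(\eta_+-\eta_-)$ (with $\alpha^*=1$ if $\eta_+>\eta_-$ and $\alpha^*=-1$ if $\eta_+<\eta_-$, up to the flat piece on $[1,\infty)$ or $(-\infty,-1]$ that appears when one of $\eta_\pm$ vanishes). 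Hence any minimizer $\tilde{f}_w$ of $\mathcal{R}_{\phi,w}$ has $\textrm{sgn}\{\tilde{f}_w(z)\}=\textrm{sgn}\{\mu_w(1,z)-\mu_w(-1,z)\}$ at every $z$ with $\mu_w(1,z)\neq\mu_w(-1,z)$. Applying the same iterated‑expectation argument to the representation $\mathcal{V}_w(d)=E\big(Q_w\,I(A=d(Z))/(A\pi_0+(1-A)/2)\big)$ established in Section~\ref{s:estimation} gives $\mathcal{V}_w(d)=E_Z[\mu_w(1,Z)I(d(Z)=1)+\mu_w(-1,Z)I(d(Z)=-1)]$, which is maximized pointwise by $d(z)=\textrm{sgn}\{\mu_w(1,z)-\mu_w(-1,z)\}$; equivalently, $\mu_w(a,z)$ is the conditional expected utility $E\{\int_0^\tau Y_w^*(t;a)\,\textrm{d}m(t)\mid Z=z\}$, so $d_w^*(z)=\textrm{sgn}\{\mu_w(1,z)-\mu_w(-1,z)\}$. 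Combining the two displays yields $d_w^*(z)=\textrm{sgn}\{\tilde{f}_w(z)\}$ for all $z\in\mathcal{Z}$.

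I expect the only delicate point to be the ``no treatment effect'' set $\{z:\mu_w(1,z)=\mu_w(-1,z)\}$, on which $g_z$ is flat on $[-1,1]$, so that $\textrm{sgn}\{\tilde{f}_w(z)\}$ is not pinned down by the minimization even though $d_w^*$ is also non‑unique there. Reconciling the two at such points is precisely the role of condition C1 in Appendix~A — it either rules this set out, relegates it to a null set, or builds it into the choice of $d_w^*$ — and I would invoke C1 at exactly this step. The remainder is routine: the measure‑theoretic interchange in passing from the expectation to the pointwise problem, the boundedness/integrability of $Q_w$, and the elementary piecewise‑linear minimization of the hinge loss.
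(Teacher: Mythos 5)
Your overall route is the same as the paper's: rewrite the surrogate risk as an expectation over $Z$ of a pointwise weighted hinge criterion, minimize the scalar piecewise-linear function for each $z$, and match the sign of the minimizer with the sign of the conditional treatment contrast that characterizes $d_w^*$. The scalar hinge analysis and the value-function representation you give are fine. However, there is a genuine gap at the step you pass over with the word ``equivalently'': the identity
\[
\mu_w(a,z)\;=\;E\Big[\int_0^{\tau}\frac{Y_{w}(t)I(C\geq T\wedge t)}{\exp\{-\Lambda_0(\tilde{T}\wedge t)\}}\,\textrm{d}m(t)\,\Big|\,A=a,Z=z\Big]\;=\;E\Big\{\int_0^{\tau}Y_w^*(t;a)\,\textrm{d}m(t)\,\Big|\,Z=z\Big\}
\]
is precisely the nontrivial content of the theorem's hypotheses and you never prove it. Your conditioning argument only uses A2/A3 to cancel the propensity denominator; it does not remove the inverse-censoring weight. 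That removal is where condition C1 enters: as in the paper's proof, one applies Tonelli's theorem to move the expectation inside the time integral and then uses independent censoring to get $E\{I(C\geq T\wedge t)\mid T,Y_w(t),A,Z\}=\exp\{-\Lambda_0(T\wedge t)\}$, which cancels the weight $\exp\{-\Lambda_0(\tilde T\wedge t)\}$ (after noting $\Lambda_0(T\wedge t)=\Lambda_0(\tilde T\wedge t)$ on the event $C\geq T\wedge t$); A1 (SUTVA) and A2 are then needed to replace $Y_w(t)$ on $\{A=a\}$ by $Y_w^*(t;a)$ and drop the conditioning on $A$. Without this chain, your $\mu_w(a,z)$ is just an observed-data functional with no causal interpretation, and the final matching with $d_w^*$ does not follow.

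Relatedly, your proposed role for C1 is wrong: C1 is the independent-censoring condition, and it has nothing to do with the tie set $\{z:\mu_w(1,z)=\mu_w(-1,z)\}$. On that set both $\textrm{sgn}\{\tilde f_w(z)\}$ and $d_w^*(z)$ are non-unique (any assignment is value-optimal there), so the conclusion holds for a suitable version of $d_w^*$; the paper handles this implicitly and does not (and could not) use C1 for it. If you repair the identification step by inserting the Tonelli-plus-independent-censoring argument and the explicit use of A1--A2, your proof becomes essentially the paper's proof.
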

The proof of Theorem 1 can be found in Appendix A.1. The next theorem ensures that $\mathcal{R}_{\phi,w}(\hat{f}_{n,w})$, which is the true surrogate risk of the estimated decision function $\hat{f}_{n,w}$, converges (in probability) to the minimal surrogate risk over the chosen class $\mathcal{F}$. It also asserts that, if the chosen class $\mathcal{F}$ is appropriate, then the proposed estimator $\hat{d}_{n,w}$ is universally consistent, i.e., its value converges (in probability) to the optimal value $\mathcal{V}_w(d_w^*)$. 
\begin{theorem}
Suppose that assumptions A1--A3 and conditions C1, C3, and C4 in Appendix A hold. Then, for $\lambda_n>0$ with $\lambda_n\rightarrow 0$, and $n\lambda_n\rightarrow\infty$, 
\[
\left|\mathcal{R}_{\phi,w}(\hat{f}_{n,w})-\inf_{f\in\mathcal{F}}\mathcal{R}_{\phi,w}(f)\right|\overset{p}{\rightarrow} 0, \ \ \ \ w\in\mathcal{W}, 
\]
as $n\rightarrow\infty$, for any distribution $P$ of the data $D$. Moreover, if (i) $\mathcal{F}$ is the space of linear functions and $f_w^*\in\mathcal{F}$ or (ii) $\mathcal{F}$ is the RKHS with the Gaussian kernel and the marginal distribution $\mu$ of $Z$ is regular, then
\[
\left|\mathcal{V}_w(\hat{d}_{n,w})-\mathcal{V}_w(d_w^*)\right|\overset{p}{\rightarrow} 0, \ \ \textrm{as} \ \ n\rightarrow\infty.
\]
\end{theorem}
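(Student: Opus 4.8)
The plan is to adapt the standard consistency argument for regularized support vector machines \citep{Steinwart08}, as developed for outcome-weighted learning by \citet{Zhao12,Zhao15}, while accounting for the extra feature that the empirical objective \eqref{proposed_R} contains the plug-in nuisance estimators $\hat{\Lambda}_n$ (Nelson--Aalen) and $\hat{\pi}_n$. Fix $w\in\mathcal{W}$ and write $\hat f=\hat f_{n,w}$. The first step is an a priori norm bound: evaluating the penalized objective at $f\equiv0$ and using that, under condition C1, the reweighting factor obeys $1\le\exp\{\Lambda_0(\tilde T\wedge t)\}\le\exp\{\Lambda_0(\tau)\}<\infty$ while the propensity weight is bounded and $\phi(0)=1$, one gets $\lambda_n\|\hat f\|^2\le\hat{\mathcal{R}}_{\phi,w}(0)=O_p(1)$, hence $\|\hat f\|\le M_n=O_p(\lambda_n^{-1/2})$. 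It therefore suffices to argue over the ball $\mathcal{F}_n=\{f\in\mathcal{F}:\|f\|\le M_n\}$, on which the decision values are uniformly controlled (bounded by $\|f\|$ in the Gaussian RKHS and by $M_n(1+\sup_{z\in\mathcal{Z}}\|z\|)$ in the linear class), so that $0\le\phi(Af(Z))\le 1+O(M_n)$.

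The heart of the proof is a two-part uniform bound over $\mathcal{F}_n$. Let $\tilde{\mathcal{R}}_{\phi,w}$ be the analogue of $\hat{\mathcal{R}}_{\phi,w}$ with $\hat{\Lambda}_n,\hat{\pi}_n$ replaced by $\Lambda_0,\pi_0$. First I would bound $\sup_{f\in\mathcal{F}_n}|\hat{\mathcal{R}}_{\phi,w}(f)-\tilde{\mathcal{R}}_{\phi,w}(f)|$ using $|e^{-a}-e^{-b}|\le|a-b|$ on $[0,\tau]$, the uniform rate $\sup_{t\le\tau}|\hat{\Lambda}_n(t)-\Lambda_0(t)|=O_p(n^{-1/2})$ (valid under C1, since the risk-set fraction $n^{-1}\sum_i Y_i(t)$ is bounded away from $0$ in probability on $[0,\tau]$) and $\hat{\pi}_n-\pi_0=O_p(n^{-1/2})$; the blow-up factor $1+M_n=O_p(\lambda_n^{-1/2})$ is absorbed because $n\lambda_n\rightarrow\infty$. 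Second I would show $\sup_{f\in\mathcal{F}_n}|\tilde{\mathcal{R}}_{\phi,w}(f)-\mathcal{R}_{\phi,w}(f)|\overset{p}{\rightarrow}0$ by a uniform law of large numbers: the summands are products of the bounded reweighted-outcome and propensity factors with $\phi(Af(Z))$, and since $\phi$ is $1$-Lipschitz the induced function class inherits the covering-number behaviour of the rescaled $\mathcal{F}_n$ --- a VC-subgraph (Euclidean) class in the linear case and a ball of a Gaussian RKHS with controlled metric entropy (conditions C3--C4) in the kernel case --- so a maximal inequality for a class of radius $\lambda_n^{-1/2}$, together again with $n\lambda_n\rightarrow\infty$, gives the claim.

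These pieces combine through the basic inequality for the penalized minimizer: for any fixed $g\in\mathcal{F}$,
\[
\mathcal{R}_{\phi,w}(\hat f)\le\hat{\mathcal{R}}_{\phi,w}(\hat f)+\lambda_n\|\hat f\|^2+o_p(1)\le\hat{\mathcal{R}}_{\phi,w}(g)+\lambda_n\|g\|^2+o_p(1)\le\mathcal{R}_{\phi,w}(g)+\lambda_n\|g\|^2+o_p(1),
\]
where the outer steps use the uniform bounds above (the last evaluated at the fixed $g$, where it reduces to ordinary laws of large numbers) and the middle step is optimality of $\hat f$. Taking the infimum over $g\in\mathcal{F}$ and letting $\lambda_n\rightarrow0$ gives $\mathcal{R}_{\phi,w}(\hat f_{n,w})\le\inf_{f\in\mathcal{F}}\mathcal{R}_{\phi,w}(f)+o_p(1)$; the reverse inequality is immediate, which proves the first display. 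For value consistency I would combine the weighted hinge-loss calibration inequality \citep{Zhao12}, $\mathcal{R}_w(f)-\inf_g\mathcal{R}_w(g)\le\mathcal{R}_{\phi,w}(f)-\inf_g\mathcal{R}_{\phi,w}(g)$ over measurable $f$, with the identity $\mathcal{V}_w(d_w^*)-\mathcal{V}_w(\textrm{sgn}(f))=\mathcal{R}_w(f)-\mathcal{R}_w(f_w^*)$ to obtain
\[
0\le\mathcal{V}_w(d_w^*)-\mathcal{V}_w(\hat d_{n,w})\le\Big[\mathcal{R}_{\phi,w}(\hat f_{n,w})-\inf_{f\in\mathcal{F}}\mathcal{R}_{\phi,w}(f)\Big]+\Big[\inf_{f\in\mathcal{F}}\mathcal{R}_{\phi,w}(f)-\inf_g\mathcal{R}_{\phi,w}(g)\Big].
\]
The first bracket is $o_p(1)$ by the above; the second is the approximation error of $\mathcal{F}$, which vanishes under case (i) because $f_w^*\in\mathcal{F}$ (equivalently, an unconstrained hinge-risk minimizer is linear) and under case (ii) because the Gaussian RKHS approximates $\mathcal{R}_{\phi,w}$ down to its unconstrained minimum when $\mu$ is regular \citep{Steinwart08} (condition C4 governing the bandwidth $\sigma_n$ if it is data-dependent). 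Hence $|\mathcal{V}_w(\hat d_{n,w})-\mathcal{V}_w(d_w^*)|\overset{p}{\rightarrow}0$.

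I expect the main obstacle to be the two uniform bounds of the second paragraph: one must control the plug-in nuisance error \emph{and} the empirical process simultaneously over a function ball whose radius grows like $\lambda_n^{-1/2}$, and verify that the prescribed rates $\lambda_n\rightarrow0$ and $n\lambda_n\rightarrow\infty$ are precisely what makes this growth dominated by the $n^{-1/2}$ fluctuations of $\hat{\Lambda}_n$, $\hat{\pi}_n$ and of the empirical measure. A secondary subtlety is the approximation-error term in the Gaussian-RKHS case when the bandwidth is allowed to vary with $n$.
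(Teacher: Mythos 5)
Your proposal is correct and follows essentially the same route as the paper's proof: the a priori bound $\lambda_n\|\hat f_{n,w}\|^2=O_p(1)$ from evaluating the penalized objective at $f\equiv 0$, replacement of $\hat{\Lambda}_n,\hat{\pi}_n$ by $\Lambda_0,\pi_0$ via uniform consistency of the Nelson--Aalen estimator, an empirical-process (Donsker/maximal-inequality) bound over the ball of radius $O(\lambda_n^{-1/2})$ absorbed by $n\lambda_n\rightarrow\infty$, the basic inequality for the penalized minimizer, and finally the hinge-loss calibration inequality with zero approximation error in cases (i) and (ii). The only differences are cosmetic (the paper rescales the class by $\sqrt{\lambda_n}$ and compares to the $\mathcal{F}$-minimizer $\tilde f_w$ via a limsup argument, while you bound uniformly over the ball and take an infimum over fixed $g$), and your explicit use of the $n^{-1/2}$ rate to absorb the growing hinge-loss factor in the nuisance-error term is, if anything, slightly more careful than the paper's corresponding step.
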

Theorem 2 is proved in Appendix A.2. When $\mathcal{F}$ is the space of linear functions, universal consistency requires that the optimal decision function $f_w^*$ is linear, i.e. $f_w^*\in\mathcal{F}$. This requirement can be made more plausible by considering an enlarged covariate space $\tilde{\mathcal{Z}}$ that includes polynomial terms and/or two-way interaction terms between the original covariates $Z$. If $f_w^*\notin\mathcal{F}$, $\mathcal{V}_w(\hat{d}_{n,w})$ is expected to converge to a value that is lower than the optimal value $\mathcal{V}_w(d_w^*)$. Nevertheless, the limit of $\mathcal{V}_w(\hat{d}_{n,w})$ can be seen as an approximation to the optimal value $\mathcal{V}_w(d_w^*)$ by the first statement of Theorem 2 and the fact that $\mathcal{R}_w(f_w^*)\leq \mathcal{R}_w(f)\leq\mathcal{R}_{\phi,w}(f)$ for any $f\in\mathcal{F}$, since the hinge loss satisfies $\phi(x)\geq I(x<0)$ for all $x\in\mathbb{R}$. 

Interestingly, when $\mathcal{F}$ is the RKHS with the Gaussian kernel and the marginal distribution of $Z$ is regular, the estimated ITR is always universally consistent. However, the so-called no-free-lunch theorem \citep{Steinwart08} implies that the corresponding rate of convergence can be extremely slow for at least some distributions of the data. This means that an extremely large sample size may be required in practice in order to obtain an ITR $\hat{d}_{n,w}$ with a value reasonably close to the optimal value. For this reason, we will restrict our attention to the case where $\mathcal{F}$ is the space of linear functions in the remainder of this paper. 

The next theorem characterizes the asymptotic distribution of the estimated value function of a fixed decision function $f$. 
\begin{theorem}
Under assumptions A1--A3 and conditions C1, C2, and C4 in Appendix A, we have that
\[
G_{n,w}(f)\coloneqq\sqrt{n}\left\{\hat{\mathcal{V}}_{n,w}(\textrm{sgn}(f))-\mathcal{V}_w(\textrm{sgn}(f))\right\}=\frac{1}{\sqrt{n}}\sum_{i=1}^n\psi_{i,w}(f)+\epsilon_{n,w}(f), \ \ \ \ w\in\mathcal{W},
\]
for any given measurable decision function $f$, where the explicit formula for the influence function $\psi_{i,w}(f)$ is provided in Appendix B and $\epsilon_{n,w}(f)=o_p(1)$. Moreover, if $\mathcal{F}$ is the space of linear functions, then $\sup_{f\in\mathcal{F}}|\epsilon_{n,w}(f)|=o_p(1)$ and the class of influence functions $\{\psi_w(f):f\in\mathcal{F}\}$ is $P$-Donsker.
\end{theorem}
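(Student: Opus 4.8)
The plan is to derive the asymptotically linear expansion through a ``three-nuisance'' von Mises / functional-delta decomposition of the plug-in estimator, and then to promote it to a uniform statement using the fact that, for linear $f$, the induced decision regions are half-spaces. Write $W_i(\Lambda)=\int_0^{\tau}Y_{i,w}(t)I(C_i\ge T_i\wedge t)\exp\{\Lambda(\tilde T_i\wedge t)\}\,\mathrm{d}m(t)$ and $g_{w,d}(D_i)=W_i(\Lambda_0)I(A_i=d(Z_i))/\{A_i\pi_0+(1-A_i)/2\}$, so that $\mathcal{V}_w(d)=Pg_{w,d}$ and $\hat{\mathcal{V}}_{n,w}(d)-\mathcal{V}_w(d)=(P_n-P)g_{w,d}+R^{\Lambda}_{n,w}(d)+R^{\pi}_{n,w}(d)+R^{\Lambda\pi}_{n,w}(d)$, where $R^{\Lambda}$ collects the effect of replacing $\Lambda_0$ by $\hat\Lambda_n$ (with $\pi_0$ held fixed), $R^{\pi}$ the effect of replacing $\pi_0$ by $\hat\pi_n$ (with $\Lambda_0$ held fixed), and the cross term $R^{\Lambda\pi}_{n,w}(d)=O_p(n^{-1/2})\cdot O_p(n^{-1/2})=O_p(n^{-1})$ is absorbed into $\epsilon_{n,w}$. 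The first term is already the i.i.d.\ average $(P_n-P)g_{w,d}$ and contributes $g_{w,d}(D_i)-Pg_{w,d}$ to the influence function.

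The term $R^{\pi}_{n,w}$ is straightforward: only the stratum $A_i=1$ involves $\pi$ (when $A_i=-1$ the weight equals $2$), and a first-order expansion of $1/\hat\pi_n$ about $\pi_0$ combined with $\hat\pi_n-\pi_0=n^{-1}\sum_j\{I(A_j=1)-\pi_0\}$ and the law of large numbers for $n^{-1}\sum_iW_i(\Lambda_0)I(A_i=1)I(d(Z_i)=1)\to c_{w,d}\coloneqq E\{W(\Lambda_0)I(A=1)I(d(Z)=1)\}$ yields $R^{\pi}_{n,w}(d)=-\pi_0^{-2}c_{w,d}\,n^{-1}\sum_j\{I(A_j=1)-\pi_0\}+o_p(n^{-1/2})$, the second i.i.d.\ summand of the influence function. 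The term $R^{\Lambda}_{n,w}$ is the crux. I would first invoke the standard uniform linearization of the Nelson--Aalen estimator, $\hat\Lambda_n(s)-\Lambda_0(s)=n^{-1}\sum_j\int_0^s y(u)^{-1}\,\mathrm{d}M_j(u)+r_n(s)$ with $\sup_{s\le\tau}|r_n(s)|=o_p(n^{-1/2})$, where $M_j=N_j-\int_0^{\cdot}Y_j\,\mathrm{d}\Lambda_0$ is the censoring martingale and $y(u)\coloneqq E\{Y_1(u)\}=P(\tilde T\ge u)$; condition C2 furnishes $\inf_{u\le\tau}y(u)>0$, $\Lambda_0(\tau)<\infty$ (hence $W_i(\Lambda_0)\le\tau S e^{\Lambda_0(\tau)}$ and the total variation of $M_j$ on $[0,\tau]$ is at most $1+\Lambda_0(\tau)$), and the continuity required for this expansion. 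A first-order expansion of $\exp\{\hat\Lambda_n(\tilde T_i\wedge t)\}$ about $\exp\{\Lambda_0(\tilde T_i\wedge t)\}$ (the quadratic remainder being a uniform-in-$d$ $O_p(n^{-1})$), followed by substitution of the Nelson--Aalen linearization and an interchange of the $\mathrm{d}m$- and $\mathrm{d}M_j$-integrals (Fubini, the inner integrand being $D_i$-measurable and bounded), reduces $R^{\Lambda}_{n,w}(d)$ up to $o_p(n^{-1/2})$ to the second-order V-statistic $n^{-2}\sum_{i}\sum_j h_d(D_i,D_j)$ with $h_d(D_i,D_j)=\{I(A_i=d(Z_i))/(A_i\pi_0+(1-A_i)/2)\}\int_0^{\tau}B_{w,i}(u)\,y(u)^{-1}\,\mathrm{d}M_j(u)$ and $B_{w,i}(u)=I(\tilde T_i\ge u)\int_u^{\tau}Y_{i,w}(t)I(C_i\ge T_i\wedge t)e^{\Lambda_0(\tilde T_i\wedge t)}\,\mathrm{d}m(t)$, which is bounded.

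Since $E\{\mathrm{d}M_j(u)\}=0$ we have $E\{h_d(D_i,D_j)\mid D_i\}=0$, so the Hoeffding projection of this (asymmetric) V-statistic collapses to $n^{-1}\sum_j h_d^{(2)}(D_j)$, the diagonal being $O_p(n^{-1})$ and the doubly-centered part having variance $O(n^{-2})$; here $h_d^{(2)}(D_j)=\int_0^{\tau}q_{w,d}(u)\,y(u)^{-1}\,\mathrm{d}M_j(u)$ with $q_{w,d}(u)=E\{B_{w,i}(u)I(A_i=d(Z_i))/(A_i\pi_0+(1-A_i)/2)\}$, and this is the third i.i.d.\ summand; adding the three pieces reproduces the influence function of Appendix~B, and pointwise $\epsilon_{n,w}(f)=o_p(1)$ follows. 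For the uniform statement over the linear class $\mathcal{F}$, the organizing fact is that $d=\mathrm{sgn}(f)$, $f(z)=\beta_0+\langle\beta_1,z\rangle$, makes $H_f\coloneqq\{z:f(z)\ge0\}$ a half-space, and half-spaces form a VC class of index $p+1$; hence $\{g_{w,d}:f\in\mathcal F\}$ and $\{W(\Lambda_0)I(A=1)I(d(Z)=1):f\in\mathcal F\}$ are VC-subgraph classes with bounded (hence square-integrable) envelopes. Consequently $\{g_{w,d}-Pg_{w,d}:f\in\mathcal F\}$ is $P$-Donsker; in $R^{\pi}_{n,w}$ and $R^{\Lambda}_{n,w}$ each ``replace a sample average by its mean'' step costs $O_p(n^{-1/2})\cdot\sup_d|(P_n-P)(\cdot)|=o_p(n^{-1/2})$ by the Glivenko--Cantelli theorem over the relevant VC class; $\sup_d$ of the doubly-centered degenerate $U$-process is $o_p(n^{-1/2})$ by maximal inequalities for degenerate $U$-processes over polynomial-entropy classes; and the remainders $r_n$ and the $\exp$-quadratic term do not depend on $f$. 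Hence $\sup_{f\in\mathcal F}|\epsilon_{n,w}(f)|=o_p(1)$.

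Finally, $\{\psi_w(f):f\in\mathcal F\}$ is $P$-Donsker, being a sum of three $P$-Donsker classes: the $g_{w,d}$ piece (above); the $\pi$ piece, a uniformly bounded scalar ($-\pi_0^{-2}c_{w,d}$) times the fixed function $I(A=1)-\pi_0$; and the martingale-integral piece $h_d^{(2)}$, for which one uses $A\indep Z$ to show $q_{w,d}(u)=\int_{\mathcal Z}\beta^{(-1)}_w(u,z)\,\mathrm{d}\mu(z)+\int_{H_f}\{\beta^{(1)}_w(u,z)-\beta^{(-1)}_w(u,z)\}\,\mathrm{d}\mu(z)$ for fixed kernels $\beta^{(\pm1)}_w$, so that $h_d^{(2)}(D)=(\text{fixed function of }D)+\int_{H_f}K_w(z;D)\,\mathrm{d}\mu(z)$ with $K_w$ bounded uniformly in $(z,D)$ by C2; since $\|h_d^{(2)}-h_{d'}^{(2)}\|_{L^2(P)}\le(\sup|K_w|)\,\mu(H_f\triangle H_{f'})=(\sup|K_w|)\,\|I\{\cdot\in H_f\}-I\{\cdot\in H_{f'}\}\|_{L^1(\mu)}$ and the half-space indicators have polynomial $L^1(\mu)$-covering numbers (being VC), this class has polynomial $L^2(P)$-entropy with a bounded envelope and is $P$-Donsker. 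The principal obstacle is thus the treatment of $R^{\Lambda}_{n,w}$: the uniform Nelson--Aalen linearization, the V-statistic projection that turns the double sum into an i.i.d.\ average, the control of the resulting degenerate $U$-process, and the Donsker property of the martingale-integral component of $\psi_w(f)$ indexed by half-spaces.
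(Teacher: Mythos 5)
Your overall architecture is the same as the paper's: the same four-way decomposition into the main i.i.d.\ term, the censoring-estimation term, the $\hat\pi_n$-term, and a negligible cross term, with uniformity over the linear class obtained from VC/Donsker properties of half-space indicators. Your treatment of the censoring term is genuinely different in technique: you substitute the Nelson--Aalen martingale linearization, interchange the $\mathrm{d}m$- and $\mathrm{d}M$-integrals, and project the resulting V-statistic, controlling the degenerate part by U-process maximal inequalities, whereas the paper splits that term into an empirical-process piece (handled by a Donsker class plus an asymptotic-equicontinuity argument in the spirit of Lemma 3.3.5 of van der Vaart and Wellner) and a deterministic piece handled by Hadamard differentiability of $\Lambda\mapsto Pg_{\pi_0}^{-1}\xi_{\Lambda}u_f$ and the functional delta method. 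These are interchangeable routes, and your $h_d^{(2)}(D_i)=\int_0^{\tau}q_{w,d}(u)y(u)^{-1}\,\mathrm{d}M_i(u)$ does coincide with the paper's $\eta_{\Lambda_0,f}'(\gamma_i)$ after Fubini, since $\gamma_i(t)=\int_0^t y(s)^{-1}\,\mathrm{d}M_i(s)$; your half-space argument for the Donsker property of the martingale-integral component is also a reasonable alternative to the paper's appeal to preservation results for Donsker classes.

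There is, however, a concrete error in the $\hat\pi_n$-term. Under the paper's $\{-1,1\}$ coding the denominator is $A\hat\pi_n+(1-A)/2$, which equals $\hat\pi_n$ when $A=1$ and $1-\hat\pi_n$ when $A=-1$; it is \emph{not} the constant $1/2$ (weight $2$) on the $A=-1$ stratum, as you assert. Hence your linearization $R^{\pi}_{n,w}(d)=-\pi_0^{-2}c_{w,d}\,n^{-1}\sum_j\{I(A_j=1)-\pi_0\}+o_p(n^{-1/2})$ omits the $A=-1$ contribution $+(1-\pi_0)^{-2}E\{W(\Lambda_0)I(A=-1)I(d(Z)=-1)\}\,n^{-1}\sum_j\{I(A_j=1)-\pi_0\}$, and the three pieces you add up do \emph{not} reproduce the influence function of Appendix~B, whose $\pi$-component is $-E\{A\,\xi_{\Lambda_0}u_f/g_{\pi_0}^2\}\{I(A_i=1)-\pi_0\}$ with the sign-carrying factor $A$ inside the expectation (so both strata contribute, with opposite signs). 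Since the theorem asserts asymptotic linearity with precisely that $\psi_{i,w}(f)$, and the closed-form variance estimator is built from it, the proof as written is incorrect; the fix is mechanical (differentiate $1/g_{\pi}$ on both strata), but it must be made. A minor further point: the bound $\inf_{u\le\tau}P(\tilde T\ge u)>0$ that you need for the uniform Nelson--Aalen linearization does not follow from C2, which concerns the square-integrable total variation of the benefit process; it is a separate (standard) requirement tied to C1 and C4 and the at-risk process, and should be cited as such.
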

The proof of Theorem 3 is given in Appendix A.3. This theorem implies that, for any  measurable decision function $f$, $G_{n,w}(f)$ is asymptotically normal with mean zero and variance $\sigma_w^2(f)=E\psi_{1,w}^2(f)$. This asymptotic variance can be consistently (in probability) estimated by $\hat{\sigma}_w^2(f)=n^{-1}\sum_{i=1}^n\hat{\psi}_{i,w}^2(f)$, where $\hat{\psi}_{i,w}(f)$, $i=1,\ldots,n$, are the empirical versions of the influence functions. The latter are obtained by replacing expectations by sample averages and unknown parameters by their consistent estimates in $\psi_{i,w}(f)$. The formula for the empirical influence function is provided in Appendix B. 

The final theorem characterizes the asymptotic distribution of the estimated value function of $\hat{d}_{n,w}=\textrm{sgn}(\hat{f}_{n,w})$ and provides the basis for the calculation of simultaneous confidence intervals over the preference weight set $\mathcal{W}=\{w_1,\ldots,w_M\}$. %Before stating the theorem, let $\|x\|_{\infty}\coloneqq \max_{1\leq l\leq M}|x_l|$ denote the maximum norm of a vector $x=(x_1,\ldots,x_M)'\in\mathbb{R}^M$ and $\sigma_{w}^2=E\psi_{1,w}^2(\tilde{f}_w)$ be the asymptotic variance of $\sqrt{n}\left\{\hat{\mathcal{V}}_{n,w}(\textrm{sgn}(\tilde{f}_w))-\mathcal{V}_w(\textrm{sgn}(\tilde{f}_w))\right\}$, $w\in\mathcal{W}$, in light of Theorem 3. 
\begin{theorem}
Suppose that $\mathcal{F}$ is the space of linear functions. Then, under assumptions A1--A3, conditions C1--C5 in Appendix A, the additional assumption that the optimal linear decision function $\tilde{f}_w(z)$ satisfies $P(\tilde{f}_w(Z)=0)=0$, $w\in\mathcal{W}$, and for $\lambda_n>0$ with $\lambda_n\rightarrow 0$ and $n\lambda_n\rightarrow\infty$, we have that
\[
\max_{w\in\mathcal{W}}\left|G_{n,w}(\hat{f}_{n,w})-G_{n,w}(\tilde{f}_w)\right|=o_p(1).
\]
\end{theorem}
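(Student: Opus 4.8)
Since $\mathcal{W}=\{w_1,\dots,w_M\}$ is finite, the plan is to fix $w\in\mathcal{W}$ and show $|G_{n,w}(\hat f_{n,w})-G_{n,w}(\tilde f_w)|=o_p(1)$; the maximum over the $M$ weights then follows trivially. Writing $\mathbb{G}_n g=n^{-1/2}\sum_{i=1}^n\{g(D_i)-Eg(D)\}$ for the empirical process and using the expansion of Theorem 3 together with $E\psi_{1,w}(f)=0$, I would first decompose
\[
G_{n,w}(\hat f_{n,w})-G_{n,w}(\tilde f_w)=\mathbb{G}_n\big[\psi_w(\hat f_{n,w})-\psi_w(\tilde f_w)\big]+\big[\epsilon_{n,w}(\hat f_{n,w})-\epsilon_{n,w}(\tilde f_w)\big].
\]
Because $\hat f_{n,w},\tilde f_w\in\mathcal{F}$ and $\mathcal{F}$ is the linear class, the second bracket is at most $2\sup_{f\in\mathcal{F}}|\epsilon_{n,w}(f)|=o_p(1)$ by Theorem 3, so the whole problem reduces to controlling the empirical-process term.

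For the empirical-process term the plan is to use the asymptotic equicontinuity that accompanies the $P$-Donsker property of $\{\psi_w(f):f\in\mathcal{F}\}$ established in Theorem 3: it suffices to prove $\|\psi_w(\hat f_{n,w})-\psi_w(\tilde f_w)\|_{L_2(P)}\overset{p}{\rightarrow}0$. From the explicit form of $\psi_{i,w}(f)$ in Appendix B, $f$ enters only through the indicator $I\{A=\textrm{sgn}(f(Z))\}$; the remaining factors — the inverse-censoring-weighted integral of $Y_w$, the propensity correction, and the Nelson--Aalen martingale correction — are $f$-free and, thanks to $\tau<\infty$, the boundedness of $Y_w$, the positivity conditions, and conditions C1--C4, admit a square-integrable envelope $H_w$ with $EH_w^2<\infty$. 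Hence $\psi_w(f)-\psi_w(f')$ is supported on $\{\textrm{sgn}(f(Z))\neq\textrm{sgn}(f'(Z))\}$ and bounded there by $2H_w$, so by dominated convergence the $L_2(P)$ statement follows once I show $P(\textrm{sgn}(\hat f_{n,w}(Z))\neq\textrm{sgn}(\tilde f_w(Z))\mid D_1,\dots,D_n)\overset{p}{\rightarrow}0$.

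To get that last display I would establish coefficient-level consistency of $\hat f_{n,w}$. Parametrizing $\mathcal{F}$ by $\theta=(\beta_0,\beta_1)\in\mathbb{R}^{p+1}$, the penalized criterion $\hat{\mathcal{R}}_{\phi,w}(f_\theta)+\lambda_n\|f_\theta\|^2$ is convex in $\theta$; Theorem 2 gives $\mathcal{R}_{\phi,w}(\hat f_{n,w})\overset{p}{\rightarrow}\mathcal{R}_{\phi,w}(\tilde f_w)=\inf_{f\in\mathcal{F}}\mathcal{R}_{\phi,w}(f)$, and under conditions C1--C5 the population surrogate risk $\theta\mapsto\mathcal{R}_{\phi,w}(f_\theta)$ is coercive with a unique, well-separated minimizer $\tilde\theta_w$, so closeness in surrogate risk forces $\hat\theta_{n,w}\overset{p}{\rightarrow}\tilde\theta_w$ (the conditions $\lambda_n\to0$, $n\lambda_n\to\infty$ ensure the vanishing penalty neither destroys coercivity nor perturbs the limiting minimizer). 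Given $\hat\theta_{n,w}\overset{p}{\rightarrow}\tilde\theta_w$, for any $z$ with $\tilde f_w(z)\neq0$ a sign disagreement forces $\|\hat\theta_{n,w}-\tilde\theta_w\|(1+\|z\|)\geq|\tilde f_w(z)|$, an event of probability tending to zero; integrating over $Z$ and invoking the extra assumption $P(\tilde f_w(Z)=0)=0$, one more dominated-convergence step yields the conditional-probability limit. Chaining the three pieces gives $\|\psi_w(\hat f_{n,w})-\psi_w(\tilde f_w)\|_{L_2(P)}\overset{p}{\rightarrow}0$ and hence the theorem.

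\textbf{Main obstacle.} I expect the hard part to be the coefficient-level consistency $\hat\theta_{n,w}\overset{p}{\rightarrow}\tilde\theta_w$: Theorem 2 delivers only convergence of the surrogate \emph{risk}, so one must (i) prevent $\hat\theta_{n,w}$ from escaping to infinity even though $\lambda_n\to0$ — which requires a strict-convexity / no-perfect-separation type condition (presumably part of C1--C5) making $\mathcal{R}_{\phi,w}$ coercive with a well-separated minimizer — and (ii) handle the nuisance plug-ins $\hat\Lambda_n$ and $\hat\pi_n$ inside $\hat{\mathcal{R}}_{\phi,w}$ uniformly over compact $\theta$-sets so that the empirical criterion converges uniformly to its population version. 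A lesser but still nontrivial point is verifying the square-integrable envelope $H_w$ and the two dominated-convergence arguments, where the assumption $P(\tilde f_w(Z)=0)=0$ is precisely what neutralizes the discontinuity of $\textrm{sgn}(\cdot)$ at the decision boundary.
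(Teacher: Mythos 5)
Your proposal is correct and follows essentially the same route as the paper: coefficient-level consistency of $\hat{\beta}_{n,w}$ via Theorem 2 and the uniqueness/compactness condition C5 (the paper invokes arguments as in Jiang et al.), then $L_2(P)$-convergence of the influence-function differences using $P(\tilde{f}_w(Z)=0)=0$ to neutralize the sign discontinuity, asymptotic equicontinuity from the Donsker property of $\{\psi_w(f):f\in\mathcal{F}\}$ (Lemma 3.3.5 of van der Vaart and Wellner), and finiteness of $\mathcal{W}$. The obstacle you flag is exactly what condition C5 is designed to resolve, so your outline matches the paper's proof.
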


The proof of Theorem 4 is given in Appendix A.4. The last theorem and Theorem 3 imply that $G_{n,w}(\hat{f}_{n,w})$ is asymptotically normal with zero mean and variance $\sigma_{w}^2(\tilde{f}_w)$. This variance can be consistently (in probability) estimated by $\hat{\sigma}_{w}^2(\hat{f}_{n,w})=n^{-1}\sum_{i=1}^n\hat{\psi}_{i,w}^2(\hat{f}_{n,w})$. This result can be used for the calculation of (pointwise) confidence intervals and conducting hypothesis testing regarding the performance of the estimated ITR $\mathcal{V}_w(\textrm{sgn}(\hat{f}_{n,w}))$ under the preference weight $w$. Theorems 3 and 4 can also be used for simultaneous inference. Define the vector $G_n\coloneqq(G_{n,w_1}(\hat{f}_{n,w_1}),\ldots,G_{n,w_M}(\hat{f}_{n,w_M}))'$ and let $\|x\|_{\infty}\coloneqq \max_{1\leq l\leq M}|x_l|$ denote the maximum norm of a vector $x=(x_1,\ldots,x_M)'\in\mathbb{R}^M$. Theorems 3 and 4, the Cram{\'e}r--Wold theorem, and the continuous mapping theorem, imply that, for any fixed matrix $Q$ (of appropriate dimension),
\[
\|QG_n\|_{\infty}\overset{d}\rightarrow \|QG\|_{\infty},
\]
where $G\sim N(0,\Omega)$, with $\Omega$ being a positive semidefinite matrix with elements $\omega_{l,l}=\sigma_{w_l}^2(\tilde{f}_{w_l})$, $l=1,\ldots,M$, and $\omega_{l,l'}=E\{\psi_{1,w_l}(\tilde{f}_{w_l})\psi_{1,w_{l'}}(\tilde{f}_{w_{l'}})\}$, $l\neq l'$. The covariance matrix $\Omega$ can be consistently (in probability) estimated by the matrix $\hat{\Omega}_n$ with elements $\hat{\omega}_{l,l}=\hat{\sigma}_{w_l}^2(\hat{f}_{n,w})$, $l=1,\ldots,M$, and  $\hat{\omega}_{l,l'}=n^{-1}\sum_{i=1}^n\hat{\psi}_{i,w_l}(\hat{f}_{n,w_l})\psi_{i,w_{l'}}(\hat{f}_{n,w_{l'}})$, $l\neq l'$. Setting $Q=\textrm{diag}(\sigma_{w_1}^{-1}(\tilde{f}_{w_1}),\ldots,\sigma_{w_M}^{-1}(\tilde{f}_{w_M}))$ implies that
\[
\max_{w\in\mathcal{W}}|\sigma_{w}^{-1}(\tilde{f}_w)G_{n,w}(\hat{f}_{n,w})|\overset{d}\rightarrow\|QG\|_{\infty}.
\]
The last result implies that $1-\alpha$ simultaneous confidence intervals over $\mathcal{W}$, which account for the multiplicity due to considering multiple preference weights, can be calculated as 
\[
\left[\hat{\mathcal{V}}_{n,w}(\textrm{sgn}(\hat{f}_{n,w}))-c_\alpha\frac{\hat{\sigma}_w(\hat{f}_{n,w})}{\sqrt{n}},\hat{\mathcal{V}}_{n,w}(\textrm{sgn}(\hat{f}_{n,w}))+c_\alpha\frac{\hat{\sigma}_w(\hat{f}_{n,w})}{\sqrt{n}}\right], \ \ \ \ w\in\mathcal{W},
\]
where $c_a$ is the $1-\alpha$ percentile of the distribution of $\|QG\|_{\infty}$. This percentile can be easily estimated using the following  simulation procedure. First, choose a large number $B$ (say $B=1000$) and simulate vectors $G_b\sim N(0,\hat{\Omega}_n)$, for $b=1,\ldots,B$. Then, $c_{\alpha}$ can be estimated as the empirical $1-\alpha$ percentile of the sample $\|\hat{Q}_nG_1\|_{\infty},\ldots,\|\hat{Q}_nG_B\|_{\infty}$, where $\hat{Q}_n=\textrm{diag}(\hat{\sigma}_{w_1}^{-1}(\hat{f}_{n,w_1}),\ldots,\hat{\sigma}_{w_M}^{-1}(\hat{f}_{n,w_M}))$. Simultaneous confidence intervals for the differences $\mathcal{V}_{w}(\textrm{sgn}(\hat{f}_{n,w}))-\mathcal{V}_{w}(1)$ and $\mathcal{V}_{w}(\textrm{sgn}(\hat{f}_{n,w}))-\mathcal{V}_{w}(-1)$, $w\in\mathcal{W}$, where $\mathcal{V}_{w}(1)$ and $\mathcal{V}_{w}(-1)$ are the value functions for the fixed rules $d(z)\coloneqq 1$ and $d(z)\coloneqq -1$, can be calculated similarly by expanding the vector $G_n$ to include $G_{n,w}(1)$ and $G_{n,w}(-1)$, $w\in\mathcal{W}$, and using an appropriate matrix $Q$. This is illustrated in the data application presented in Section~\ref{s:analysis}. We argue that the same inference procedures can also be used for the jackknife value estimator $\hat{\mathcal{V}}^{jk}_{n,w}(\textrm{sgn}(\hat{f}_{n,w}))$, $w\in\mathcal{W}$. This statement is justified numerically in the simulation studies.

\section{Simulation Studies}
\label{s:sims}
The finite sample performance of the proposed methods was evaluated in a series of simulation experiments. Specifically, we assessed the performance of (i) the proposed ITR estimator $\hat{d}_{n,w}$ and (ii) the proposed inference methods for the (true) value of the estimated ITR $\mathcal{V}_w(\hat{d}_{n,w})$. We considered a binary treatment variable $A\in\{-1,1\}$, a two-dimensional covariate vector $Z=(Z_1,Z_2)'$, and a multistate process $\{X(t):t\in[0,\tau]\}$ under a progressive illness-death model with state space $\mathcal{S}=\{1,2,3\}$. State 1 represented the initial disease state, state 2 the tumor response state, and state 3 the disease progression or death state, in a hypothetical oncology trial. We considered the preference weights $w_1=(0,1,0)'$ and $w_2=(1,1,0)'$, which correspond to the (restricted) mean duration of tumor response and the PFS time, respectively. Treatment was simulated with $P(A=1)=0.5$, while the covariates were simulated from the uniform distribution $U(-1,1)$. The multistate process was simulated, conditionally on $A$ and $Z$, based on the transition intensities $\alpha_{12}(A,Z)=\exp(-0.5Z_1+0.5Z_2+Af_w^*(Z))$, $\alpha_{13}(A,Z)=\exp(-0.5Z_1+0.5Z_2)/4$, and $\alpha_{23}(A,Z)=\exp(-0.5Z_1+0.5Z_2-Af_w^*(Z))/2$, where $\alpha_{hj}(a,z)$ represents the transition rate from state $h$ to state $j$ for a patient with $(A,Z')'=(a,z')'$ and $f_w^*$, $w\in\{w_1,w_2\}$, is the optimal decision function under $w$. Under the aforementioned choices, $f_{w_1}^*=f_{w_2}^*\equiv f_{w}^*$. In total, we considered four scenarios according to the form of the optimal decision function as follows:
\begin{itemize}
    \item Scenario 1: $f_w^*(Z) = Z_1 + Z_2$ 
    \item Scenario 2: $f_w^*(Z) = 2(Z_1 - Z_2)$
    \item Scenario 3: $f_w^*(Z) = 1 + Z_2 - \exp(-Z_1)$ 
    \item Scenario 4: $f_w^*(Z) = 2\log(2 - Z_1 - Z_2) - 1.4$
\end{itemize}
The right censoring time was simulated independently of the multistate process from the exponential distribution $\textrm{Exp}(\theta)$, with $\theta\in\{-1.6, -1, -0.4\}$, and the total duration of the study was set to $\tau=3$. These choices for $\theta$ and $\tau$ led on average to 28.4\%, 42.8\%, and 59.5\% right-censored observations, respectively.

For each scenario and censoring rate $\theta$, we considered the training sample sizes $n\in\{100,200,300,400\}$ and repeated the simulation 1000 times. In each training data set we applied the proposed method with the search space $\mathcal{F}$ being the space of linear functions and $\tau=3$. Note that, $f_w^*\notin\mathcal{F}$ in scenarios 3 and 4. The tuning parameter was set to $\lambda_n=n^{-1/2}$, which satisfies the requirements of Theorems 2 and 4. To evaluate the performance of the estimated ITR $\hat{d}_{n,w}$, we considered two metrics: (i) the estimated ITR value ratio $\mathcal{V}_w(\hat{d}_{n,w})/\mathcal{V}_w(d_w^*)$, where $d_w^*=\textrm{sgn}(f_w^*)$ and $\mathcal{V}_w(d_w^*)$ is the maximum possible value, and (ii) the misclassification rate, defined as the proportion of patients which were assigned by $\hat{d}_{n,w}$ to the wrong treatment. An estimated ITR value ratio close to 1 indicates that the performance of $\hat{d}_{n,w}$ is close to optimal. For each simulated training data set, the true value of the estimated ITR $\mathcal{V}_w(\hat{d}_{n,w})$ and the misclassification rate were calculated based on an independently simulated large testing data set of size 10000. To evaluate the validity of the proposed inference methods for $\mathcal{V}_w(\hat{d}_{n,w})$ we considered: (i) the average percent errors of the value function estimators $\hat{\mathcal{V}}_{n,w}(\hat{d}_{n,w})$ and $\hat{\mathcal{V}}_{n,w}^{jk}(\hat{d}_{n,w})$, defined as 
\[
\frac{1}{1000}\sum_{b=1}^{1000}\frac{\hat{\mathcal{V}}_{n,w,b}(\hat{d}_{n,w,b})-\mathcal{V}_w(\hat{d}_{n,w,b})}{\mathcal{V}_w(\hat{d}_{n,w,b})}\times 100 \ \ \textrm{and} \ \ \frac{1}{1000}\sum_{b=1}^{1000}\frac{\hat{\mathcal{V}}_{n,w,b}^{jk}(\hat{d}_{n,w,b})-\mathcal{V}_w(\hat{d}_{n,w,b})}{\mathcal{V}_w(\hat{d}_{n,w,b})}\times 100,
\]
where $\hat{\mathcal{V}}_{n,w,b}$, $\hat{\mathcal{V}}_{n,w,b}^{jk}$, and $\hat{d}_{n,w,b}$ are estimates from the $b$th simulated training data set, (ii) the average of the proposed standard error estimates relatively to the Monte Carlo standard deviation of the value estimates, and (iii) the coverage probability of the 95\% confidence intervals calculated using the proposed standard error estimator under asymptotic normality. 

The simulation results regarding the performance of the estimated ITR $\hat{d}_{n,w}$ for the time in response (i.e., $w=(0,1,0)'$) are depicted in Figure~\ref{f:sims}. The estimated ITR value ratio was above 0.9 in all cases. Thus, even in Scenarios 3 and 4 where $f_w^*$ is not linear, the performance of the estimated rule was close to optimal. The maximum fixed rule value ratio $\max\{\mathcal{V}_w(1),\mathcal{V}_w(-1)\}/\mathcal{V}_w(d_w^*)$, where $\mathcal{V}_w(1)$ and $\mathcal{V}_w(-1)$ are the values for the fixed rules $d(z)\coloneqq 1$ and $d(z)\coloneqq -1$, was close to 0.8 in all cases. This indicates that the estimated ITR $\hat{d}_{n,w}$ can lead to substantially better health outcomes on average compared to fixed, one-size-fits-all, rules. As expected, the estimated ITR value ratio was higher with larger training sample sizes and lower censoring rates. A similar pattern was observed for the misclassification rate of $\hat{d}_{n,w}$, which was lower for larger training samples and lower censoring rates. The simulation results regarding the validity of the proposed inference methods for $\mathcal{V}_w(\hat{d}_{n,w})$ are summarized in Tables~\ref{t:sims_lin} and \ref{t:sims_nonlin}. In all cases, the value estimator $\hat{\mathcal{V}}_{n,w}(\hat{d}_{n,w})$ provided slightly optimistic estimates of the true value $\mathcal{V}_w(\hat{d}_{n,w})$. The percent error was over 4\% only in a few cases with $n=100$ and a 60\% censoring rate. As expected, the percent error of the jackknife value estimator $\hat{\mathcal{V}}_{n,w}^{jk}(\hat{d}_{n,w})$ was lower than that of $\hat{\mathcal{V}}_{n,w}(\hat{d}_{n,w})$. However, the difference between the two value estimators was smaller for larger training samples and lower censoring rates. The average standard error estimates were close to the corresponding Monte Carlo standard deviation of the estimates and the coverage probabilities close to the nominal level in all cases. These results indicate the consistency of the proposed standard error estimator and support the asymptotic normality result from Theorems 3 and 4. 

\begin{figure}
\centerline{\includegraphics[width=6.8in]{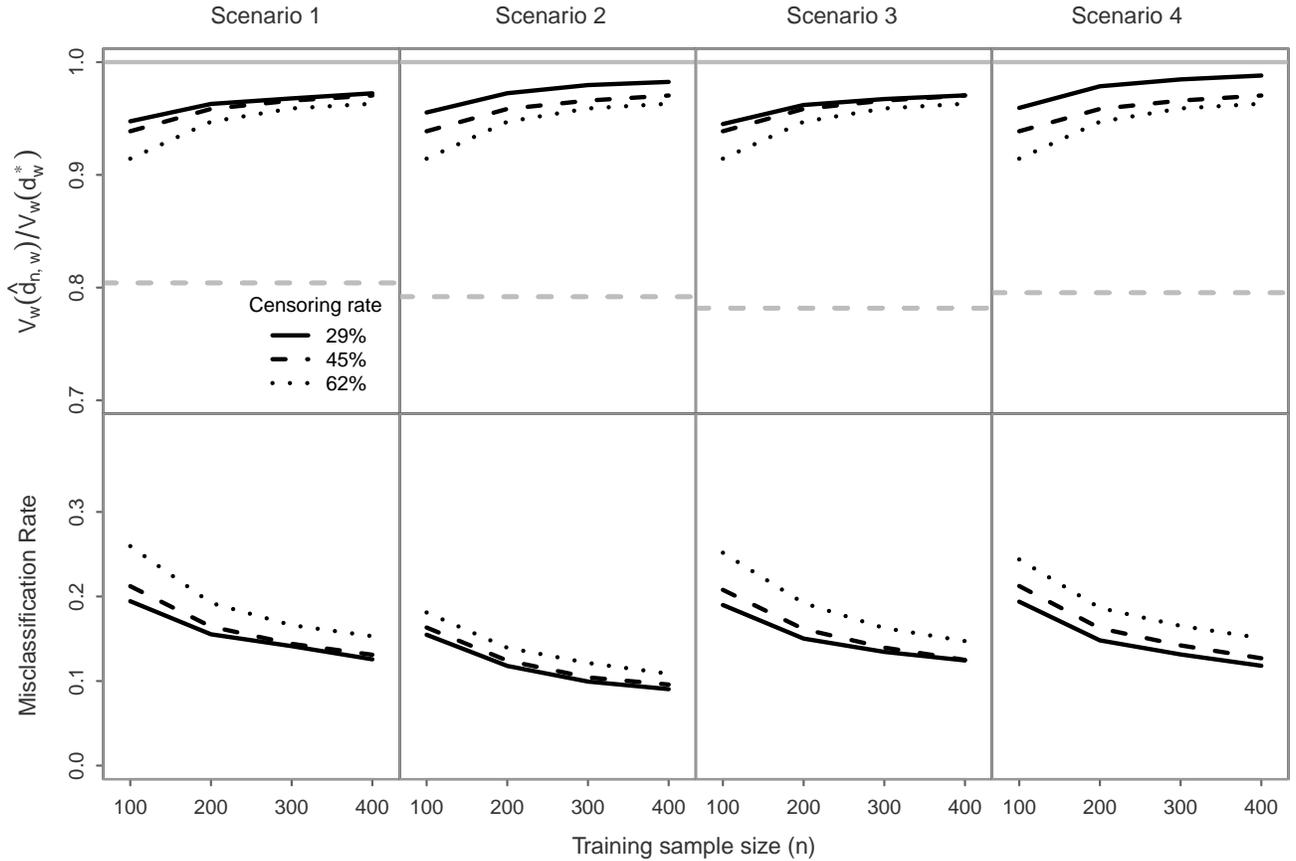}}
\caption{Simulation study: Performance of $\hat{d}_{n,w}$ for the duration of tumor response (i.e., $w=(0,1,0)'$, in terms of the average estimated ITR value ratio $\mathcal{V}_w(\hat{d}_{n,w})/\mathcal{V}_w(d_w^*)$ (top row) and misclassification rate (bottom row). The solid grey horizontal lines (top row) correspond to an optimal performance while the dashed grey horizontal lines (top row) correspond to the maximum fixed rule value ratio $\max\{\mathcal{V}_w(1),\mathcal{V}_w(-1)\}/\mathcal{V}_w(d_w^*)$, where $\mathcal{V}_w(1)$ and $\mathcal{V}_w(-1)$ are the values for the fixed rules $d(z)\coloneqq 1$ and $d(z)\coloneqq -1$.}
\label{f:sims}
\end{figure}

\begin{table}
\caption{Simulation study: Performance of the proposed inference methods for the true value of the estimated ITR $\mathcal{V}_w(\hat{d}_{n,w})$ for the duration of tumor response (i.e., $w=(0,1,0)'$), under a linear optimal decision function $f_w^*$ (scenarios 1 and 2). (Cens: right censoring rate; $n$: training sample size; MCSD: Monte Carlo standard deviation of the estimates; ASE: Average of the standard error estimates; CP: empirical coverage probability of the 95\% confidence interval)}
\label{t:sims_lin}
\begin{center}
\begin{tabular}{lllcccccccc}
\hline
&&& \multicolumn{4}{c}{$\hat{\mathcal{V}}_{n,w}(\hat{d}_{n,w})$} & \multicolumn{4}{c}{$\hat{\mathcal{V}}_{n,w}^{jk}(\hat{d}_{n,w})$} \\
\cmidrule(lr){4-7}\cmidrule(lr){8-11}
Scenario & Cens & $n$ &
\multicolumn{1}{c}{\% error} &
\multicolumn{1}{c}{MCSD} & 
\multicolumn{1}{c}{ASE} & 
\multicolumn{1}{c}{CP} &
\multicolumn{1}{c}{\% error} &
\multicolumn{1}{c}{MCSD} & 
\multicolumn{1}{c}{ASE} & 
\multicolumn{1}{c}{CP} \\ \hline
1&  28\%&100&3.418&0.224&0.215&0.954&-1.402&0.226&0.215&0.940 \\
&&200&2.129&0.151&0.155&0.961&-0.540&0.153&0.155&0.953 \\
&&400&1.062&0.109&0.110&0.954&-0.511&0.112&0.110&0.936 \\[1ex]
&  43\%&100&4.707&0.249&0.240&0.955&-0.856&0.248&0.240&0.946 \\
&&200&2.731&0.171&0.173&0.951&-0.039&0.172&0.173&0.949 \\
&&400&1.330&0.117&0.123&0.966&-0.630&0.120&0.123&0.954 \\[1ex]
&  60\%&100&5.289&0.313&0.303&0.951&-0.602&0.301&0.303&0.951 \\
&&200&3.411&0.215&0.218&0.964&0.016&0.218&0.218&0.947 \\
&&400&1.983&0.141&0.154&0.964&-0.681&0.143&0.154&0.962 \\[1.5ex]
2&  28\%&100&2.393&0.265&0.231&0.947&-0.663&0.261&0.231&0.940 \\
&&200&1.359&0.172&0.166&0.953&-0.529&0.174&0.166&0.941 \\
&&400&0.807&0.123&0.118&0.964&-0.218&0.124&0.118&0.953 \\[1ex]
&  42\%&100&2.403&0.288&0.258&0.951&-1.570&0.286&0.258&0.939 \\
&&200&1.568&0.194&0.185&0.953&-0.692&0.198&0.185&0.944 \\
&&400&1.207&0.139&0.132&0.946&-0.078&0.140&0.132&0.945 \\[1ex]
&  57\%&100&1.164&0.353&0.319&0.935&-3.364&0.343&0.319&0.931 \\
&&200&1.953&0.242&0.230&0.953&-0.785&0.236&0.230&0.944 \\
&&400&1.217&0.162&0.164&0.964&-0.492&0.165&0.164&0.955 \\
\hline
\end{tabular}
\end{center}
\end{table}

\begin{table}
\caption{Simulation study: Performance of the proposed inference methods for the true value of the estimated individualized treatment rule $\mathcal{V}_w(\hat{d}_{n,w})$ for the duration of tumor response (i.e., $w=(0,1,0)'$), under a nonlinear optimal decision function $f_w^*$ (scenarios 3 and 4). (Cens: right censoring rate; $n$: training sample size; MCSD: Monte Carlo standard deviation of the estimates; ASE: Average of the standard error estimates; CP: empirical coverage probability of the 95\% confidence interval)}
\label{t:sims_nonlin}
\begin{center}
\begin{tabular}{lllcccccccc}
\hline
&&& \multicolumn{4}{c}{$\hat{\mathcal{V}}_{n,w}(\hat{d}_{n,w})$} & \multicolumn{4}{c}{$\hat{\mathcal{V}}_{n,w}^{jk}(\hat{d}_{n,w})$} \\
\cmidrule(lr){4-7}\cmidrule(lr){8-11}
Scenario & Cens & $n$ &
\multicolumn{1}{c}{\% error} &
\multicolumn{1}{c}{MCSD} & 
\multicolumn{1}{c}{ASE} & 
\multicolumn{1}{c}{CP} &
\multicolumn{1}{c}{\% error} &
\multicolumn{1}{c}{MCSD} & 
\multicolumn{1}{c}{ASE} & 
\multicolumn{1}{c}{CP} \\ \hline
3&  29\%&100&3.706&0.240&0.216&0.958&-0.415&0.236&0.216&0.942 \\
&&200&2.242&0.160&0.156&0.963&-0.049&0.156&0.156&0.956 \\
&&400&1.136&0.113&0.111&0.956&-0.331&0.115&0.111&0.943 \\[1ex]
&  43\%&100&4.483&0.258&0.243&0.959&-0.999&0.265&0.243&0.935 \\
&&200&2.355&0.182&0.174&0.948&-0.563&0.184&0.174&0.937 \\
&&400&1.505&0.120&0.124&0.951&-0.264&0.124&0.124&0.949 \\[1ex]
&  60\%&100&4.719&0.317&0.305&0.949&-1.154&0.306&0.305&0.938 \\
&&200&3.421&0.220&0.219&0.962&0.254&0.214&0.219&0.957 \\
&&400&1.397&0.146&0.155&0.964&-0.967&0.143&0.155&0.961 \\[1.5ex]
4&  29\%&100&3.957&0.229&0.216&0.947&-0.411&0.234&0.216&0.945 \\
&&200&2.232&0.158&0.155&0.950&-0.269&0.160&0.155&0.944 \\
&&400&1.128&0.113&0.111&0.948&-0.380&0.115&0.111&0.936 \\[1ex]
&  43\%&100&4.419&0.262&0.240&0.947&0.260&0.268&0.240&0.927 \\
&&200&2.278&0.180&0.173&0.961&-0.786&0.181&0.173&0.948 \\
&&400&1.197&0.128&0.123&0.943&-0.648&0.129&0.123&0.932 \\[1ex]
&  59\%&100&5.189&0.303&0.306&0.962&-1.175&0.313&0.306&0.927 \\
&&200&2.862&0.224&0.218&0.952&-1.089&0.226&0.218&0.929 \\
&&400&1.526&0.161&0.154&0.949&-0.809&0.161&0.154&0.931 \\
\hline
\end{tabular}
\end{center}
\end{table}

Additional simulation results evaluating the effect of selecting different values of $\tau$ for the analysis are presented in Figures~1-3 in Appendix C.1. In these simulation studies we considered the values $\tau\in\{1,2,3\}$, with 3 being equal to the length of the study. In these studies, larger values of $\tau$ led to a better performance in terms of both the median value function and the variability. This is to be expected as more information is incorporated in the proposed method with a larger value of $\tau$. However, the differences between the choices $\tau=2$ and $\tau=3$ were not pronounced in general. Further simulation results on the performance of the proposed ITR estimator when $\mathcal{F}$ is the RKHS with the Gaussian kernel with $\sigma=1$ (less flexible kernel) and $\sigma=5$ (more flexible kernel), for the duration of tumor response, are presented in Figures~4-6 in Appendix C.1. A larger training sample $n$ led to a better performance in all cases, which reflects the consistency of the proposed ITR estimator. Using a more flexible kernel led to an inferior performance with smaller training sample sizes $n$. In most cases with $n=200$ or $n=400$, the performance of a less flexible kernel was the best, while the use of a linear decision function led in general to a better performance compared to two kernel choices when $n=100$. However, in many cases, the differences between a less flexible kernel and the linear decision function were not pronounced.

Simulation results regarding the performance of the estimated ITR $\hat{d}_{n,w}$ for the PFS time (i.e., $w=(1,1,0)'$) are depicted in Figures~7-9 in Appendix C.2. For comparison, these plots also illustrate the performance of the ICO and DR methods \citep{Zhao15}. To apply the DR approach, we estimated $E(T|T>t, A, Z)$ based on the semiparametric Cox model for $T$ with $A$, $Z$, and the interactions between $A$ and $Z$ as covariates, according to \citet{Zhao15}. This model is misspecified due to the complexity of the multistate process considered in the simulation studies. The performance of all methods improved with a larger sample size. Furthermore, the proposed method provided estimated ITRs with a substantially lower variability and slightly larger median value (i.e., PFS time) compared to the ICO and DR methods. The improved performance of the proposed method over the ICO and DR methods was more pronounced with a higher censoring rate. The simulation results regarding the validity of the proposed inference methods for $\mathcal{V}_w(\hat{d}_{n,w})$ are summarized in Tables~1 and 2 in Appendix C.2. Results on the ICO and DR methods are not reported there are no such inference procedures for these methods. Similarly to the simulations for the time in response, the performance of our inference methods was satisfactory in all cases, with the exception of low coverage probabilities for the jackknife estimator when $n=100$. The latter coverage probabilities were at the nominal level when $n=400$. Further simulation results for evaluating the performance of the proposed ITR estimator when $\mathcal{F}$ is the RKHS with the Gaussian kernel are illustrated in Figures~10-12 in Appendix C.2. These results revealed similar patterns to those for the duration of tumor response.

%To sum up, our simulation studies provide numerical evidence for the validity of the proposed ITR estimator $\hat{d}_{n,w}$ and inference procedures for $\mathcal{V}_w(\hat{d}_{n,w})$. In addition they suggest that, the performance of the estimated linear ITR $\mathcal{V}_w(\hat{d}_{n,w})$ may be close to optimal, and substantially better that one-size-fits-all type rules, even in cases where the optimal decision function $f_w^*$ is not linear. Finally, the simulation results support the use of the jackknife value estimator $\hat{\mathcal{V}}_{n,w}^{jk}(\hat{d}_{n,w})$ over $\hat{\mathcal{V}}_{n,w}(\hat{d}_{n,w})$, particularly with smaller training samples and higher censoring rates.

\section{SPECTRUM Trial Data Analysis}
\label{s:analysis}
The proposed methodology was applied to data from the SPECTRUM Trial \citep{Vermorken13}, a phase III randomized trial on recurrent or metastatic squamous-cell carcinoma of the head and neck. The goal of this trial was to evaluate the effectiveness of the addition of panitumumab, a fully human monoclonal antibody which inhibits the epidermal growth factor receptor, to chemotherapy as a first-line treatment approach. The data used in this analysis were obtained from \url{https://www.projectdatasphere.org/} which is maintained by Project Data Sphere and included 520 patients. Of them, 260 patients were randomly assigned to the chemotherapy+panitumumab group ($A=1$) while the remaining patients were assigned to the chemotherapy alone group ($A=-1$). In this trial, tumor response was defined as an at least 30\% decrease in the sum of the longest diameter of target lesions according to the Response Evaluation Criteria in Solid Tumors \citep{RECIST}. Throughout the follow-up period, 138 (26.5\%) patients achieved tumor response while 457 (87.9\%) experienced a progression of their disease and/or died. Among the latter patients, 120 (26.3\%) had achieved tumor response prior to their disease progression or death. The overall median (95\% CI) PFS time was 5.59 (5.29, 5.85) months. The estimates of the treatment-specific cumulative transition intensities and state occupation probabilities are depicted in Figures 13 and 14 in Appendix D. 

In this analysis, the value $\tau$ was set to 18 months (90th percentile of the follow-up times; there were not many transitions to or from the tumor response state after this timepoint) and the preference weight set was $\mathcal{W}=\{w_1,w_2,w_3\}=\{(0,1,0)',(0.5,1,0)',(1,1,0)'\}$. As mentioned in Section~\ref{s:estimation}, in this multistate process setup, the utility under the preference weight $w_1$ corresponds to the restricted mean duration of tumor response, while the choices $w_2$ and $w_3$ provide a restricted quality-adjusted lifetime (where the time spent in the initial state is reduced by $50\%$) and the restricted PFS time, respectively. The covariates considered in this analysis were centered age (in years) at randomization ($Z_{\textrm{age.c}}$), indicator that the primary tumor site is hypopharynx ($Z_{\textrm{hyp}}$), indicator of history of prior treatment for squamous-cell carcinoma of the head and neck ($Z_{\textrm{trt.hist}}$), and indicator that ECOG performance status at baseline indicates symptoms but the patient is ambulatory (vs fully active; $Z_{\textrm{ECOG}}$). The tuning parameters $\lambda_{n,w}$, $w\in\mathcal{W}$, were selected from the candidate set $\{0.001, 0.005, 0.01, 0.1, 0.25, 0.5, 1, 2.5, 5, 10, 20,50, 100\}\times n^{-1/2}$, where $n=520$, using leave-one-out cross validation. All these potential choices for $\lambda_{n,w}$ satisfy the requirements of the theorems in Section~\ref{s:properties}. The estimated optimal decision functions were 
\[
\hat{f}_{n,w_1}(z)=-1.32 -0.54z_{\textrm{age.c}} + 0.90z_{\textrm{hyp}} + 1.17z_{\textrm{trt.hist}} + 0.83z_{\textrm{ECOG}}, 
\]
$\hat{f}_{n,w_2}(z)\approx -1.40 + 2.00z_{\textrm{trt.hist}}$, and $\hat{f}_{n,w_3}(z)\approx -1.47 + 2.00z_{\textrm{trt.hist}}$, where the absolute values of the estimated coefficients of $z_{\textrm{age.c}}$, $z_{\textrm{hyp}}$, and $z_{\textrm{ECOG}}$, in $\hat{f}_{n,w_2}(z)$ and $\hat{f}_{n,w_3}(z)$ were all less than $10^{-6}$. The more complicated form of $\hat{f}_{n,w_1}$ may reflect that there is higher heterogeneity in treatment effect on tumor response compared to the treatment effect on quality-adjusted lifetime and PFS time. Intuitively, one would expect that this higher heterogeneity with respect to tumor response would be carried over to the other outcomes. However, this may not be the case because response has typically a short duration for this tumor and might not have a substantial impact on disease progression and/or death. The class of estimated ITRs is $\{\hat{d}_{n,w}=\textrm{sgn}(\hat{f}_{n,w}):w\in\{w_1,w_2,w_3\}\}$. Clearly, the estimated ITRs $\hat{d}_{n,w_2}$ and $\hat{d}_{n,w_3}$ are equivalent and assign chemotherapy+panitumumab (treatment 1) to the patients with a history of prior treatment, and chemotherapy alone (treatment -1) to those without prior treatment. In contrast, the rule $\hat{d}_{n,w_1}$ is more complicated and accounts for more covariates. Next, we estimated the performances (i.e., value functions) of the estimated optimal ITRs and compared them to those of the fixed, one-size-fits-all, rules $d(z)\coloneqq 1$ (everyone is assigned to chemotherapy+panitumumab) and $d(z)\coloneqq -1$ (everyone is assigned to chemotherapy alone). To account for the multiplicity due to conducting inference about nine parameters in total, we calculated 95\% simultaneous confidence intervals using the approach described in Section~\ref{s:properties}. The percentile $c_{0.05}$ was estimated based on $B=1000$ simulation replications and the corresponding estimate was $\hat{c}_{0.05}=2.59$. The results from this analysis are summarized in Table~\ref{t:analysis}. The estimated restricted mean (95\% CI) potential time in the tumor response state under $\hat{d}_{n,w_1}$ was 2.57 (1.76, 3.38) months. This time was significantly longer than the corresponding time under the fixed rule $d(z)\coloneqq -1$ [difference (95\% CI): 0.83 (0.03, 1.70) months]. The estimated restricted mean (95\% CI) potential quality-adjusted lifetime under $\hat{d}_{n,w_2}$ was 5.00 (4.17, 5.83) months. This time was significantly longer than that under the fixed rule $d(z)\coloneqq -1$ [difference (95\% CI): 0.95 (0.00, 1.90) months]. Finally, the estimated restricted mean (95\% CI) potential PFS time under $\hat{d}_{n,w_3}$ was 7.43 (6.38, 8.49) months. There were no significant differences in the restricted mean PFS time between the optimal rule and $d(z)\coloneqq -1$ [difference (95\% CI): 1.04  (-0.25, 2.33) months] and between $d(z)\coloneqq 1$ and $d(z)\coloneqq -1$ [difference (95\% CI): 0.97 (-0.27, 2.22) months]. Also, no significant differences were observed between the optimal rule and $d(z)\coloneqq 1$. This might be due to a potentially small difference between the effects of chemotherapy alone and chemotherapy+panitumumab among patients for which the two rules assigned different treatment.

\begin{table}
\caption{Analysis of the SPECTRUM Trial: Estimated mean potential utilities (i.e., value functions) under the estimated optimal individualized treatment rules $\hat{d}_{n,w}$ for the preference weights $w\in\{(0,1,0)',(0.5,1,0)',(1,1,0)'\}$, and comparison with those under the fixed, one-size-fits-all, rules $d(z)\coloneqq 1$ (everyone is assigned to chemotherapy+panitumumab) and $d(z)\coloneqq -1$ (everyone is assigned to chemotherapy alone). The corresponding 95\% simultaneous confidence intervals, that adjust for multiplicity due to conducting inference about nine parameters in total, are also provided. ($\mathcal{V}_w(\hat{d}_{n,w})$: mean potential utility under $\hat{d}_{n,w}$ for the preference weight $w$; $\mathcal{V}_w(1)$: mean potential utility for the preference weight $w$ under the fixed rule $d(z)\coloneqq 1$; $\mathcal{V}_w(-1)$: mean potential utility for the preference weight $w$ under the fixed rule $d(z)\coloneqq -1$)}
\label{t:analysis}
\begin{center}
\begin{tabular}{llcc}
\hline
\multicolumn{1}{l}{$w$} &  
\multicolumn{1}{l}{Parameter} &  
\multicolumn{1}{c}{Estimate} &  
\multicolumn{1}{c}{95\% CI}\\ \hline
$(0,1,0)'$&$\mathcal{V}_w(\hat{d}_{n,w})$& 2.570 & (1.763, 3.376) \\
&$\mathcal{V}_w(\hat{d}_{n,w})-\mathcal{V}_w(1)$& 0.109 & (-0.481, 0.700) \\
&$\mathcal{V}_w(\hat{d}_{n,w})-\mathcal{V}_w(-1)$& 0.862 & (0.025, 1.699) \\ [1.5ex]
$(0.5,1,0)'$&$\mathcal{V}_w(\hat{d}_{n,w})$& 4.999 & (4.168, 5.829) \\
&$\mathcal{V}_w(\hat{d}_{n,w})-\mathcal{V}_w(1)$& 0.087 & (-0.539, 0.713) \\
&$\mathcal{V}_w(\hat{d}_{n,w})-\mathcal{V}_w(-1)$& 0.949 & (0.003, 1.896) \\ [1.5ex]
$(1,1,0)'$&$\mathcal{V}_w(\hat{d}_{n,w})$& 7.434 & (6.377, 8.490) \\
&$\mathcal{V}_w(\hat{d}_{n,w})-\mathcal{V}_w(1)$& 0.071 & (-0.739, 0.881) \\
&$\mathcal{V}_w(\hat{d}_{n,w})-\mathcal{V}_w(-1)$& 1.043 & (-0.248, 2.334) \\
\hline
\end{tabular}
\end{center}
\end{table}

\section{Discussion}
\label{s:discuss}

This article addressed the issue of optimal ITR estimation in randomized clinical trials with right-censored multistate processes. To achieve this, we devised a novel objective function that can handle general nonhomogeneous multistate processes and can easily incorporate patient preferences. A key feature of the proposed methodology is that it utilizes information from both uncensored and censored observations without positing and estimating a model for the conditional expectation of the outcome given $A$ and $Z$, as opposed to the methods by \citet{Zhao15} for censored failure times. Optimization of this objective function was based on the outcome weighted learning framework \citep{Zhao12}. %The proposed method does not impose model assumptions regarding the multistate process of interest or Markov assumptions. The theoretical properties of the proposed method were established and a closed-form variance estimator for the estimated mean outcome under the estimated optimal ITR was derived. Furthermore, we proposed methodology for the calculation of simultaneous confidence intervals for the latter outcome to account for multiplicity (due to, e.g., considering multiple patient preferences). 
The simulation studies provided numerical evidence for the validity of the proposed ITR estimation approach and inference procedures. Also, these studies showed a better performance of the proposed method compared to the methods by \citet{Zhao15} for censored failure times.

There are two important practical considerations when applying the proposed approach. First, one needs to specify the preference weight set $\mathcal{W}$. This can be achieved by subject matter experts (e.g., clinicians) or via a survey in a sample of patients. For a given $\mathcal{W}$, the choice of the most appropriate rule within the class of estimated ITRs $\{\hat{d}_{n,w}:w\in\mathcal{W}\}$ for a given patient with a preference weight $w_0$ is $\hat{d}_{n,w}$, with $w=\argmin_{w\in\mathcal{W}}\|w-w_0\|$ (i.e., the closest preference weight in the set $\mathcal{W}$). Second, the class of decision functions $\mathcal{F}$ needs to be chosen. Even though the proposed ITR estimator is universally consistent when $\mathcal{F}$ is the RKHS with the Gaussian kernel, the rate of convergence under this choice may be extremely slow \citep{Steinwart08}. This means that an extremely large sample size may be needed in practice in order to obtain an ITR $\hat{d}_{n,w}$ whose value $\mathcal{V}_w(\hat{d}_{n,w})$ is reasonably close to the optimal value $\mathcal{V}_w(\hat{d}_w^*)$. For this reason, we mainly focused on the class of linear decision functions in most of this article. Under this choice, the proposed estimator is universally consistent only if the true optimal decision function $f_w^*$ is linear. However, even if $f_w^*$ is not linear, the value of the estimated ITR $\mathcal{V}_w(\hat{d}_{n,w})$ can be close to the optimal value. This was illustrated in the simulation studies presented in Section~\ref{s:sims}. In practice, one can further improve the performance of the estimated ITR when $\mathcal{F}$ is the class of linear functions by considering two-way or three-way covariate interactions \citep{Zhou17}. 

A key assumption of the proposed approach is independent censoring. This assumption is realistic in many clinical trials where accrual time is not associated with patient characteristics and censoring is mainly due to administrative reasons \citep{Zhao11,Goldberg17}. A plausible relaxation of this assumption is to allow censoring to depend on treatment $A$, since censoring rate will likely be higher among those receiving the treatment with the greater toxicity \citep{Templeton20}. %This can be trivially incorporated into the proposed methodology by simply estimating nonparametrically the cumulative hazard of censoring separately for the two treatment groups (for more details see Appendix C). 
A further relaxation of the independent censoring assumption is to allow censoring to depend on both $A$ and $Z$ by imposing a semiparametric Cox model (more details on both relaxations are provided in Appendix E). %, and use the estimated conditional hazard in the proposed objective function and value function estimators. 
 %Provided that this model is correctly specified, the theoretical properties of the proposed method still hold, with the exception that $\gamma_i(t)$ in $\psi_{i,w}(f)$ (see Appendix B) is replaced by the influence function of $\sqrt{n}\{\hat{\Lambda}_n(t)\exp(\hat{\theta}_n'(a,z')')-\Lambda(t;a,z)\}$ under partial likelihood estimation. 
It must be noted that, for the case of censored failure times, the DR method \citep{Zhao15} allows the censoring model to be misspecified (unlike the proposed approach) provided that the failure time model is correctly specified. However, the true censoring model may be of a less complicated form than the true failure time model in clinical trial applications \citep{Goldberg17} and, thus, more likely to be correctly specified.

An interesting extension of this work is to allow for interval censoring. This could be achieved by utilizing an estimator of the state occupation probabilities with interval-censored data, and deriving an appropriate objective function using, potentially, calculations similar to those in Section 2. % is important in many applications. In the SPECTRUM trial, for example, tumor human papillomavirus (HPV) status was missing for 34.8\% of the patients and, thus, it could not be utilized for ITR estimation. Nevertheless, tumor HPV status may be an important predictive factor in patients treated with chemotherapy+panitumumab \citep{Vermorken13}, and could be utilized to achieve a better performance of the estimated optimal ITRs. Second, another interesting and useful research direction is to allow treatment assignment to depend on the covariates. This will allow the estimation of ITRs in observational studies with multistate processes where confounding is ubiquitous. 
Also, extending the proposed approach for the single-decision problem to the multi-decision setting, such as a sequential multiple assignment randomized trial (SMART) \citep{Lavori00}, is both practically and methodologically important.

%  The \backmatter command formats the subsequent headings so that they
%  are in the journal style.  Please keep this command in your document
%  in this position, right after the final section of the main part of 
%  the paper and right before the Acknowledgements, Supporting Information (Supplementary %  Materials),   and References sections. 

%  This section is optional.  Here is where you will want to cite
%  grants, people who helped with the paper, etc.  But keep it short!

\section*{Acknowledgements}

This article is based on research using data obtained from \url{www.projectdatasphere.org}, which is maintained by Project Data Sphere. Neither Project Data Sphere nor the owner(s) of any information from the web site have contributed to, approved, or are in any way responsible for the contents of this article. This research was supported by grants R01AI140854 and R21AI145662 from the National Institutes of Health and the Indiana University Precision Health Initiative.\vspace*{-8pt}

\bibliographystyle{chicago}
\bibliography{ITR_msm_preprint_v2}

\newpage
\begin{appendices}

\section*{Appendix A: Proofs}
\renewcommand{\thesubsection}{A.\arabic{subsection}}
\label{s:intro}
In this Appendix we provide the proofs of the theorems stated in Section 3 of the main text.  The proposed methodology assumes the following regularity conditions.
\begin{itemize}
\item[C1.] The right censoring time $C$ is independent in the sense that $\{Y_w^*(\cdot;1),Y_w^*(\cdot;-1),A,Z\}\indep C$.
\item[C2.] The benefit process has a square-integrable total variation, i.e. $E\{\int_0^{\tau}|dY_w(t)|\}^2< \infty$.
\item[C3.] The covariate space $\mathcal{Z}$ is a compact subset of $\mathbb{R}^p$.
\item[C4.] The true cumulative baseline hazard $\Lambda_{0}(t)$ of the right censoring distribution is a continuous function on $[0,\tau]$.
\item[C5.] For a linear decision function $\tilde{f}_w(\cdot)=\tilde{\beta}_{0,w}+\langle\tilde{\beta}_{1,w},\cdot\rangle$ that minimizes $\mathcal{R}_{\phi,w}(f)$ over the space of linear functions, $(\tilde{\beta}_{0,w},\tilde{\beta}_{1,w}')'\in \mathcal{B}\subset\mathbb{R}^{p+1}$, where $\mathcal{B}$ is a compact and convex set. Moreover, letting $T_w^*=\int_0^{\tau}Y_{w}(t)I(C\geq T\wedge t)\textrm{d}m(t)$,
\[
P[T_w^*>0,A=1,\{1-(\tilde{\beta}_{0,w}+\langle\tilde{\beta}_{1,w},Z\rangle)\}\{1-(\beta_0+\langle\beta_1,Z\rangle)\}<0]
\] 
\[
+P[T_w^*>0,A=-1,\{1+(\tilde{\beta}_{0,w}+\langle\tilde{\beta}_{1,w},Z\rangle)\}\{1+(\beta_0+\langle\beta_1,Z\rangle)\}<0]>0,
\] 
for any $(\beta_0,\beta_1')'\neq (\tilde{\beta}_{0,w},\tilde{\beta}_{1,w}')'$.
\end{itemize}
Conditions C1, C2, and C4 are standard in the literature of nonparametric methods for survival and multistate process data. A plausible relaxation of condition C1 is to allow censoring to depend on treatment $A$, since censoring will likely be higher among those receiving the treatment with the greater toxicity \citep{Templeton20}. This can be trivially incorporated into the proposed methodology by simply estimating nonparametrically the cumulative hazard of censoring separately for the two treatment groups. A further relaxation of the independent censoring assumption is to allow censoring to depend on both $A$ and $Z$. In this case, one can impose a semiparametric Cox model of the form $\Lambda(t;A,Z)=\Lambda_0(t)\exp\{\theta'(A,Z')'\}$ for the right censoring time, and use the estimated conditional hazard in the proposed objective function and value function estimators. Provided that this model is correctly specified, the theoretical properties of the proposed method still hold, with the exception that $\gamma_i(t)$ in $\psi_{i,w}(f)$ (see Appendix B) is replaced by the influence function of $\sqrt{n}\{\hat{\Lambda}_n(t)\exp(\hat{\theta}_n'(a,z')')-\Lambda(t;a,z)\}$ under partial likelihood estimation. Condition C3 is another common condition that ensures that the covariates are bounded. Condition C5 guarantees the uniqueness of the optimal linear decision function $\tilde{f}_w$. A similar condition has been previously used in the literature of (unweighted) support vector machines \citep{Jiang08}.

For notational simplicity, we omit the subscript $w$, that corresponds to the preference weight, and use the more compact notations $Y(t)$, $Y^*(t;1)$, $Y^*(t;-1)$, $\mathcal{V}(d)$, $\hat{\mathcal{V}}_n(d)$, $\hat{f}_{n}$, $\hat{d}_n$, and $\psi(f)$, for the remainder of Appendix A. The proofs of Theorems 2--4 rely heavily on empirical process theory \citep{Van96, Kosorok08}. For these proofs, we use the standard empirical process theory notation
\[
\mathbb{P}_nf=\frac{1}{n}\sum_{i=1}^nf(D_i),
\]
for any measurable function of the data $f:\mathcal{D}\mapsto\mathbb{R}$, where $\mathcal{D}$ is the sample space, and
\[
Pf=\int_{\mathcal{D}} f\textrm{d}P=E\{f(D)\},
\]
where $P$ is the true probability measure on the Borel $\sigma$-algebra on $\mathcal{D}$. Furthermore, for any function $h$ in the space $D[0,\tau]$ of cadlag functions on $[0,\tau]$, we define the supremum norm $\|h\|_{[0,\tau]}=\sup_{t\in[0,\tau]}|h(t)|$. Also, we define the class of fixed functions 
\[
\mathcal{L}_{\delta}=\{\Lambda:\Lambda\in D[0,\tau] \ \textrm{and is non-decreasing with} \ \Lambda(0)=0, \|\Lambda-\Lambda_0\|_{[0,\tau]}<\delta\},
\]
for some $\delta>0$, and the data-dependent function
\[
\xi_{\Lambda}(D)=\int_0^{\tau}\frac{Y(t)I(C\geq T\wedge t)}{\exp\{-\Lambda(\tilde{T}\wedge t)\}}\textrm{d}m(t).
\]
Before providing the proofs of the theorems listed in the main text, we state and prove a useful lemma. 

\begin{lemma} If conditions C2 and C4 are satisfied, then the class of functions $\{\xi_{\Lambda}:\Lambda\in\mathcal{L}_{\delta}\}$ is $P$-Donsker. 
\end{lemma}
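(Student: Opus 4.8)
The plan is to show that the map $\Lambda \mapsto \xi_{\Lambda}$ is a Lipschitz-type transformation of $\Lambda$ and then invoke a preservation result for Donsker classes. First I would fix a representative $D$ and analyze the modulus of continuity of $\xi_{\Lambda}(D)$ in $\Lambda$. Writing $\xi_{\Lambda}(D) = \int_0^{\tau} Y(t) I(C \geq T \wedge t)\, e^{\Lambda(\tilde{T}\wedge t)}\, \textrm{d}m(t)$, and using that $x \mapsto e^x$ is Lipschitz on the bounded interval $[0, \|\Lambda_0\|_{[0,\tau]} + \delta]$ (here conditions C4 and the definition of $\mathcal{L}_{\delta}$ guarantee $\Lambda(\tilde T \wedge t)$ stays in a fixed compact set for every $\Lambda \in \mathcal{L}_{\delta}$), one gets
\[
|\xi_{\Lambda_1}(D) - \xi_{\Lambda_2}(D)| \;\leq\; L \int_0^{\tau} |Y(t)| I(C \geq T \wedge t)\, |\Lambda_1(\tilde T\wedge t) - \Lambda_2(\tilde T \wedge t)|\, \textrm{d}m(t) \;\leq\; L\,\tau\, \|\Lambda_1 - \Lambda_2\|_{[0,\tau]} \int_0^{\tau} |Y(t)|\,\textrm{d}m(t),
\]
so the envelope-type bound on the increment is $m_\xi(D) \coloneqq L\tau \int_0^\tau |Y(t)|\,\textrm{d}m(t)$ (a crude bound using $|Y(t)| \le \|Y\|_{[0,\tau]}$, or more carefully the total variation). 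By condition C2, $\int_0^\tau |dY(t)|$ has finite second moment, and since $|Y(t)| = |Y(0) + \int_0^t dY(s)| \le |Y(0)| + \int_0^\tau |dY(s)|$ with $Y(0)$ bounded (indicators times bounded weights), $m_\xi \in L_2(P)$.

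Next I would package this as a bracketing or uniform-entropy argument. The cleanest route is to note that $\{\Lambda : \Lambda \in \mathcal{L}_\delta\}$, viewed as a subset of $D[0,\tau]$ under the supremum norm, is a set of bounded monotone functions, hence has finite bracketing entropy $\log N_{[\,]}(\varepsilon, \mathcal{L}_\delta, \|\cdot\|_{[0,\tau]}) \lesssim 1/\varepsilon$; a $\varepsilon$-bracket in $\Lambda$ yields, via the Lipschitz bound above, a $2L\tau \varepsilon\, m_\xi(D)$-bracket (in $L_2(P)$, using $\|m_\xi\|_{P,2} < \infty$) for the corresponding $\xi_\Lambda$. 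Therefore $\int_0^\infty \sqrt{\log N_{[\,]}(\varepsilon, \{\xi_\Lambda : \Lambda \in \mathcal{L}_\delta\}, L_2(P))}\, \textrm{d}\varepsilon < \infty$, and the class $\{\xi_\Lambda : \Lambda \in \mathcal{L}_\delta\}$ is $P$-Donsker by the bracketing central limit theorem (e.g.\ van der Vaart and Wellner, Theorem 2.5.6, or Kosorok). Alternatively one can cite a Lipschitz-preservation lemma directly: if $\mathcal{G}$ is Donsker (here the fixed, finite-dimensional—actually index—class $\mathcal{L}_\delta$, which is trivially Donsker as a class of constants-in-$D$ would not work, so bracketing is the honest path) and $\phi$ is Lipschitz with square-integrable envelope, then $\phi \circ \mathcal{G}$ is Donsker.

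The main obstacle is handling the unbounded total variation of $Y$: the increment bound is only $L_2(P)$-integrable, not bounded, so one cannot use a pure uniform-entropy (no-envelope) argument and must carry the random envelope $m_\xi(D)$ through the bracketing computation and verify $\|m_\xi\|_{P,2} < \infty$ from condition C2. A secondary technical point is confirming that $\Lambda(\tilde T \wedge t)$ genuinely ranges over a fixed compact interval uniformly over $\mathcal{L}_\delta$ so that the Lipschitz constant $L$ of the exponential is global — this follows since $\|\Lambda\|_{[0,\tau]} \le \|\Lambda_0\|_{[0,\tau]} + \delta$ and $\Lambda_0$ is continuous on the compact $[0,\tau]$ by C4, hence bounded. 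Once these two points are secured, the bracketing-entropy bound for bounded monotone functions does the rest routinely.
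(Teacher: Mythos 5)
Your overall strategy---exploit the Lipschitz dependence of $\xi_{\Lambda}$ on $\Lambda$ together with the low bracketing entropy of monotone function classes---is close in spirit to the paper's proof (which composes $\Lambda$ with $\tilde T$, invokes Kosorok's monotone-class Donsker lemma, then uses Lipschitz and product preservation and treats the $\textrm{d}m(t)$ integral by a continuity/preservation argument). However, there is a genuine gap in your execution: the claim that $\log N_{[\,]}(\varepsilon,\mathcal{L}_{\delta},\|\cdot\|_{[0,\tau]})\lesssim 1/\varepsilon$ is false. The class of uniformly bounded nondecreasing cadlag functions is not even totally bounded in the supremum norm: the functions $\Lambda_0+(\delta/2)I_{[s,\tau]}$, $s\in(0,\tau)$, all belong to $\mathcal{L}_{\delta}$, are pairwise at sup-distance $\delta/2$, and a short argument shows that a sup-norm bracket of width less than $\delta/4$ can contain at most one of them, so the sup-norm bracketing numbers are infinite at small scales. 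The polynomial bracketing bound for bounded monotone functions (van der Vaart and Wellner, Theorem 2.7.5; Kosorok's Lemma 9.11, which the paper uses, is the packaged version) holds only in $L_r(Q)$ norms. Since your bracket-transfer step (``an $\varepsilon$-bracket in $\Lambda$ yields a bracket for $\xi_{\Lambda}$'') is fed by these nonexistent sup-norm brackets, the argument as written does not go through.

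The repair keeps your architecture. Because $Y_w(t)=w'\tilde X(t)\in[0,1]$ and is nonnegative, the map $\Lambda\mapsto\xi_{\Lambda}(D)$ is monotone, so an $L_2$-type bracket $\Lambda^{L}\leq\Lambda\leq\Lambda^{U}$ yields the bracket $\xi_{\Lambda^{L}}\leq\xi_{\Lambda}\leq\xi_{\Lambda^{U}}$; by Cauchy--Schwarz and the Lipschitz property of the exponential on $[0,\Lambda_0(\tau)+\delta]$ one gets $\|\xi_{\Lambda^{U}}-\xi_{\Lambda^{L}}\|_{P,2}\lesssim\|\Lambda^{U}-\Lambda^{L}\|_{L_2(Q)}$, where $Q$ is the normalized average over $t\in[0,\tau]$ of the law of $\tilde T\wedge t$, and the monotone-class bracketing bound in $L_2(Q)$ then gives an integrable entropy integral so the bracketing CLT applies. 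Note also that the obstacle you flag is misplaced: $Y$ is uniformly bounded by construction (preference weights in $[0,1]$ times state indicators), so no random envelope is needed, and the crude bound $|Y(t)|\leq |Y(0)|+\int_0^{\tau}|\textrm{d}Y(s)|$ is unnecessary; condition C2 enters the paper's proof only to make the time-indexed class $\{Y(t):t\in[0,\tau]\}$ Donsker in its product-and-preservation route, a step that a correctly executed direct bracketing argument on $\{\xi_{\Lambda}:\Lambda\in\mathcal{L}_{\delta}\}$ would in fact bypass.
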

\begin{proof}
The class of functions $\{Y(t):t\in[0,\tau]\}$ is $P$-Donsker as a consequence of condition C2 and Lemma 22.4 in \citet{Kosorok08}. Also, recognizing that 
\[
I(C\geq T\wedge t)=1-I(C< T)I(C< t),
\]
the class $\{I(C\geq T\wedge t):t\in[0,\tau]\}$ is $P$-Donsker by Lemma 4.1 in \citep{Kosorok08}. Next, see that the class $\mathcal{L}_{\delta}$ is uniformly bounded by $\Lambda_0(\tau)+\delta$, where $\Lambda_0(\tau)<\infty$ by condition C4, and consider the data-dependent function $f_{\Lambda}:\mathcal{D}\mapsto[0,\Lambda_0(\tau)+\delta]$ with $f_{\Lambda}(D)=\Lambda(\tilde{T})$, where $\tilde{T}\in[0,\tau]$ and $\Lambda\in\mathcal{L}_{\delta}$. Now, the class $\{f_{\Lambda}:\Lambda\in\mathcal{L}_{\delta}\}$ is $P$-Donsker as a consequence of Lemma 9.11 in \citet{Kosorok08}, as is (trivially) the fixed class $\{\Lambda(t):\Lambda\in\mathcal{L}_{\delta},t\in[0,\tau]\}$. Thus, by Corollary 9.32 in \citet{Kosorok08} and the Lipschitz continuity of the exponential function on $[0,\Lambda_0(\tau)+\delta]$, the class $\{\exp\{f_{\Lambda}\wedge\Lambda(t)\}:\Lambda\in\mathcal{L}_{\delta},t\in[0,\tau]\}$ is also $P$-Donsker. Given that the latter class is uniformly bounded by $\exp\{\Lambda_0(\tau)+\delta\}$ and that $\Lambda(\tilde{T}\wedge t)=\Lambda(\tilde{T})\wedge\Lambda(t)$, it follows that the class $\{\zeta_{\Lambda,t}: \Lambda\in\mathcal{L}_{\delta},t\in[0,\tau]\}$ with
\[
\zeta_{\Lambda,t}(D)=\frac{Y(t)I(C>T\wedge t)}{\exp\{-\Lambda(\tilde{T}\wedge t)\}},
\]
is $P$-Donsker by virtue of being a product of uniformly bounded $P$-Donsker classes. Now, the conclusion of Lemma 1 follows from the fact that 
\[
\left|\int_0^{\tau}\zeta_{\Lambda_1,t}\textrm{d}m(t)-\int_0^{\tau}\zeta_{\Lambda_2,t}\textrm{d}m(t)\right|\leq\sup_{t\in[0,\tau]}\left|\zeta_{\Lambda_1,t}-\zeta_{\Lambda_2,t}\right|\tau,
\]
which implies continuity, and Lemma 15.10 in \citet{Kosorok08}. \hfill $\square$
\end{proof}

\subsection{Proof of Theorem 1 (Fisher consistency)}
Using the same arguments to those used in \citet{Tsiatis19} for the case of a general ITR, it can be easily shown that the optimal ITR in our case satisfies
\begin{eqnarray}
d^*(z)&=&\argmax_{a\in\{-1,1\}}E\left\{\left.\int_0^{\tau}Y^*(t;a)\textrm{d}m(t)\right|Z=z\right\} \nonumber \\
&=&\textrm{sgn}\left[E\left\{\left.\int_0^{\tau}Y^*(t;1)\textrm{d}m(t)\right|Z=z\right\}-E\left\{\left.\int_0^{\tau}Y^*(t;-1)\textrm{d}m(t)\right|Z=z\right\}\right], \label{OITR}
\end{eqnarray}
for all $z\in\mathcal{Z}$. Now, by Tonelli's theorem \citep{Athreya06} and given that
\[
\frac{Y(t)I(C\geq T\wedge t)}{\exp\{-\Lambda_0(T\wedge t)\}}=\frac{Y(t)I(C\geq T\wedge t)}{\exp\{-\Lambda_0(\tilde{T}\wedge t)\}}, \ \ \ \ t\in[0,\tau],
\]
as argued in the main text, the surrogate risk $\mathcal{R}_{\phi}$ can be expressed as
\begin{eqnarray*}
\mathcal{R}_{\phi}(f)&=&\int_0^{\tau}E\left[\frac{Y(t)I(C\geq T\wedge t)\phi(Af(Z))}{\exp\{-\Lambda_0(T\wedge t)\}\{A\pi_0 + (1-A)/2\}}\right]\textrm{d}m(t) \\
&=&\int_0^{\tau}E\left[\frac{Y(t)\phi(Af(Z))E\{I(C\geq T\wedge t)|T,Y(t),A,Z\}}{\exp\{-\Lambda_0(T\wedge t)\}\{A\pi_0 + (1-A)/2\}}\right]\textrm{d}m(t).
\end{eqnarray*}
Given that condition C1 implies that 
\[
E\{I(C\geq T\wedge t)|T,Y(t),A,Z\}=E\{I(C\geq T\wedge t)|T\}=\exp\{-\Lambda_0(T\wedge t)\}, \ \ \ \ t\in[0,\tau],
\]
it follows that
\begin{eqnarray*}
\mathcal{R}_{\phi}(f)&=&\int_0^{\tau}E\left[\frac{Y(t)\phi(Af(Z))}{A\pi_0 + (1-A)/2}\right]\textrm{d}m(t) \\
&=& \int_0^{\tau} E\left\{Y(t)I(A=1)\frac{
\phi(f(Z))}{\pi_0}+Y(t)I(A=-1)\frac{\phi(-f(Z))}{1-\pi_0}\right\}\textrm{d}m(t).
\end{eqnarray*}
Next, by a second application of Tonelli's theorem and assumptions A1--A3 we have
\begin{eqnarray*}
\mathcal{R}_{\phi}(f)&=& \int_0^{\tau} E\left\{Y^*(t;1)I(A=1)\frac{\phi(f(Z))}{\pi_0}\right\}\textrm{d}m(t) \\
&&+\int_0^{\tau}E\left\{Y^*(t;-1)I(A=-1)\frac{\phi(-f(Z))}{1-\pi_0}\right\}\textrm{d}m(t) \\
&=& E\left[\left\{\int_0^{\tau}Y^*(t;1)\textrm{d}m(t)\right\} I(A=1)\frac{\phi(f(Z))}{\pi_0}\right] \\
&&+E\left[\left\{\int_0^{\tau}Y^*(t;-1)\textrm{d}m(t)\right\}I(A=-1)\frac{\phi(-f(Z))}{1-\pi_0}\right] \\
&=&E\left[E\left\{\left.\int_0^{\tau}Y^*(t;1)\textrm{d}m(t)\right|Z\right\}\phi(f(Z))\right. \\
&&+\left.E\left\{\left.\int_0^{\tau}Y^*(t;-1)\textrm{d}m(t)\right|Z\right\}\phi(-f(Z))\right].
\end{eqnarray*}
The function $\tilde{f}$ that minimizes
\[
E\left\{\left.\int_0^{\tau}Y^*(t;1)\textrm{d}m(t)\right|Z\right\}\phi(f(Z))+E\left\{\left.\int_0^{\tau}Y^*(t;-1)\textrm{d}m(t)\right|Z\right\}\phi(-f(Z))
\]
\begin{eqnarray*}
&=&E\left\{\left.\int_0^{\tau}Y^*(t;1)\textrm{d}m(t)\right|Z\right\}\max(0,1-f(Z)) \\ 
&&+E\left\{\left.\int_0^{\tau}Y^*(t;-1)\textrm{d}m(t)\right|Z\right\}\max(0,1+f(Z)) \\
&=&E\left\{\left.\int_0^{\tau}Y^*(t;1)\textrm{d}m(t)\right|Z\right\}\{1-f(Z)\}I(f(Z)\leq -1)  \\
&&+E\left\{\left.\int_0^{\tau}Y^*(t;-1)\textrm{d}m(t)\right|Z\right\}\{1+f(Z)\}I(f(Z)\geq 1)  \\
&&+\left(E\left\{\left.\int_0^{\tau}Y^*(t;1)\textrm{d}m(t)\right|Z\right\}+E\left\{\left.\int_0^{\tau}Y^*(t;-1)\textrm{d}m(t)\right|Z\right\}\right. \\
&&+\left.\left[E\left\{\left.\int_0^{\tau}Y^*(t;-1)\textrm{d}m(t)\right|Z\right\}-E\left\{\left.\int_0^{\tau}Y^*(t;1)\textrm{d}m(t)\right|Z\right\}\right]f(Z)\right) \\
&&\times I(-1<f(Z)<1),
\end{eqnarray*}
also minimizes $\mathcal{R}_{\phi}$. The latter function is a (continuous) piecewise linear function which decreases strictly on $(-\infty,-1]$ and increases strictly on $[1,\infty)$, almost surely. Therefore, the minimizer should lie in $[-1,1]$, almost surely. Consequently, for any $z\in\mathcal{Z}$, if 
\[
E\left\{\left.\int_0^{\tau}Y^*(t;1)dm(t)\right|Z=z\right\}>E\left\{\left.\int_0^{\tau}Y^*(t;-1)dm(t)\right|Z=z\right\},
\]
then $\tilde{f}(z)$ should be positive, and if 
\[
E\left\{\left.\int_0^{\tau}Y^*(t;1)dm(t)\right|Z=z\right\}<E\left\{\left.\int_0^{\tau}Y^*(t;-1)dm(t)\right|Z=z\right\},
\]
then $\tilde{f}(z)$ should be negative. Consequently, by \eqref{OITR}, $d^*(z)=\textrm{sgn}\{\tilde{f}(z)\}$ for all $z\in\mathcal{Z}$.

\subsection{Proof of Theorem 2}
First, define $g_{\pi}(D)=A\pi + (1 - A)/2$ and
\[
L_{f,\Lambda,\pi}(D)=\frac{\xi_{\Lambda}(D)}{g_{\pi}(D)}\phi(Af(Z)),
\]
for any $f\in\mathcal{F}$ (where $\xi_{\Lambda}(D)$ was defined in the beginning of Appendix A), and note that $\mathcal{R}_{\phi}(f)=PL_{f,\Lambda_0,\pi_0}$ and $\hat{\mathcal{R}}_\phi(f)=\mathbb{P}_nL_{f,\hat{\Lambda}_n,\hat{\pi}_n}$. Now, we have that
\begin{eqnarray*}
\frac{Y(t)I(C\geq T\wedge t)}{\exp\{-\hat{\Lambda}_n(\tilde{T}\wedge t)\}}&\leq& \exp\{\hat{\Lambda}_n(\tilde{T}\wedge t)\} \\
&\leq& \exp\{\hat{\Lambda}_n(\tau)\},
\end{eqnarray*}
for any $t\in[0,\tau]$ and all $n\geq 1$. This fact, along with the uniform (outer) almost sure consistency of the Nelson--Aalen estimator (guaranteed by conditions C1 and C4), the facts that 
\[
\int_0^{\tau}\exp\{\hat{\Lambda}_n(\tau)\}\textrm{d}m(t)=\tau\exp\{\hat{\Lambda}_n(\tau)\}<\infty,
\]
for all $n\geq 1$ and 
\[
\int_0^{\tau}\exp\{\Lambda_0(\tau)\}\textrm{d}m(t)=\tau\exp\{\Lambda_0(\tau)\}<\infty,
\]
and the extended dominated converenge theorem \citep{Athreya06}, lead to the conclusion that $\xi_{\hat{\Lambda}_n}(D)\overset{as}{\rightarrow}\xi_{\Lambda_0}(D)$. This result, the almost sure consistency of $\hat{\pi}_n$, and the continuous mapping theorem imply that 
\begin{equation}
\max_{1\leq i\leq n}\left|\frac{\xi_{\hat{\Lambda}_n}(D_i)}{g_{\hat{\pi}_n}(D_i)}-\frac{\xi_{\Lambda_0}(D_i)}{g_{\pi_0}(D_i)}\right|=o_{as}(1) \label{cons1}.
\end{equation}
Consequently,
\begin{equation}
\sup_{f\in\mathcal{F}}\left\{\max_{1\leq i\leq n}\left|L_{f,\hat{\Lambda}_n,\hat{\pi}_n}(D_i)-L_{f,\Lambda_0,\pi_0}(D_i)\right|\right\}=o_{as}(1), \label{cons2}
\end{equation}
Next, define 
\[
\tilde{f}\in \argmin_{f\in\mathcal{F}}\mathcal{R}_{\phi}(f).
\]
By the positivity of $\lambda_n$ and the definition of $\hat{f}_n$, we have that
\begin{eqnarray*}
\mathbb{P}_nL_{\hat{f}_n,\hat{\Lambda}_n,\hat{\pi}_n}&\leq& \mathbb{P}_nL_{\hat{f}_n,\hat{\Lambda}_n,\hat{\pi}_n}+\lambda_n\|\hat{f}_n\|^2 \\
&\leq& \mathbb{P}_nL_{\tilde{f},\hat{\Lambda}_n,\hat{\pi}_n}+\lambda_n\|\tilde{f}\|^2 \\
&=&\mathbb{P}_nL_{\tilde{f},\Lambda_0,\pi_0}+\lambda_n\|\tilde{f}\|^2+o_{as}(1)
\end{eqnarray*}
Taking limit superiors in both sides we get
\[
\limsup_{n\rightarrow\infty}\mathbb{P}_nL_{\hat{f}_n,\hat{\Lambda}_n,\hat{\pi}_n}\leq PL_{\tilde{f},\Lambda_0,\pi_0}=\mathcal{R}_{\phi}(\tilde{f}),
\]
almost surely, by the strong law of large numbers and the fact that $\lambda_n\rightarrow 0$. This implies that, fall all $n$ sufficiently large,
\[
\mathbb{P}_nL_{\hat{f}_n,\hat{\Lambda}_n,\hat{\pi}_n}\leq \mathcal{R}_{\phi}(\tilde{f})\leq PL_{\hat{f}_n,\Lambda_0,\pi_0}=\mathcal{R}_{\phi}(\hat{f}_n),
\]
almost surely, by the definition of $\tilde{f}$. Therefore, for all $n$ sufficiently large, we have 
\begin{eqnarray}
|\mathcal{R}_{\phi}(\hat{f}_n)-\mathcal{R}_{\phi}(\tilde{f})|&\leq& \left|\mathbb{P}_nL_{\hat{f}_n,\hat{\Lambda}_n,\hat{\pi}_n}-PL_{\hat{f}_n,\Lambda_0,\pi_0}\right| \nonumber \\
&\leq& \left|(\mathbb{P}_n-P)L_{\hat{f}_n,\Lambda_0,\pi_0}\right|+\mathbb{P}_n\left|L_{\hat{f}_n,\hat{\Lambda}_n,\hat{\pi}_n}-L_{\hat{f}_n,\Lambda_0,\pi_0}\right| \nonumber \\
&\leq& \left|(\mathbb{P}_n-P)L_{\hat{f}_n,\Lambda_0,\pi_0}\right|+\delta, \label{ineq1}
\end{eqnarray}
almost surely, for any $\delta>0$ by \eqref{cons2}. Next, it needs to be shown that $(\mathbb{P}_n-P)L_{\hat{f}_n,\Lambda_0,\pi_0}=o_p(1)$. By the definition of $\hat{f}_n$ we have,
\[
\mathbb{P}_nL_{\hat{f}_n,\hat{\Lambda}_n,\hat{\pi}_n}+\lambda_n\|\hat{f}_n\|^2\leq \mathbb{P}_nL_{f,\hat{\Lambda}_n,\hat{\pi}_n}+\lambda_n\|f\|^2,
\]
for any $f\in\mathcal{F}$. Selecting $f\equiv 0$, we have
\[
\mathbb{P}_nL_{\hat{f}_n,\hat{\Lambda}_n,\hat{\pi}_n}+\lambda_n\|\hat{f}_n\|^2\leq \mathbb{P}_n\frac{\xi_{\hat{\Lambda}_n}}{g_{\hat{\pi}_n}},
\]
since $\phi(0)=1$ and $\|0\|=0$. By the non-negativity of $L_{\hat{f}_n,\hat{\Lambda}_n,\hat{\pi}_n}(D)$, the previous inequality implies that
\begin{eqnarray*}
\lambda_n\|\hat{f}_n\|^2&\leq& \mathbb{P}_n\frac{\xi_{\hat{\Lambda}_n}}{g_{\hat{\pi}_n}} \\
&=&\mathbb{P}_n\frac{\xi_{\Lambda_0}}{g_{\pi_0}}+o_{as}(1) \\
&\leq& \frac{1}{n}\sum_{i=1}^n\frac{\tau\exp\{\Lambda_0(\tau)\}}{A_i\pi_0+(1-A_i)/2}+o_{as}(1) \\
&\leq&\frac{\tau\exp\{\Lambda_0(\tau)\}}{c_1\wedge(1-c_2)}+o_{as}(1),
\end{eqnarray*}
by assumption A3. Therefore, for all sufficiently large $n$ and any $\delta'>0$, we have 
\[
\lambda_n\|\hat{f}_n\|^2\leq \frac{\tau\exp\{\Lambda_0(\tau)\}}{c_1\wedge(1-c_2)}+\delta'\equiv M_{\delta'},
\]
almost surely. Now, the class of functions 
\[
\mathcal{G}_1(\delta')=\{\sqrt{\lambda_n}f:f\in\mathcal{F},\|\sqrt{\lambda_n}f\|\leq \sqrt{M_{\delta'}}\}
\]
is Donsker, for any $\delta'>0$ small. This follows from the fact that if $\mathcal{F}$ is the class of linear or polynomial functions, then $\mathcal{F}$ is Donsker by Lemma 9.6 and Theorem 9.2 in \citet{Kosorok08}, and subsets of Donsker class are also Donsker. If $\mathcal{F}$ is a RKHS, then the $\mathcal{G}_1(\delta')$ is Donsker by condition C3 and similar arguments to those used in the proof of Lemma A.9 in \citet{Hable12}. Also, the class 
\[
\mathcal{G}_2(\delta')=\{\sqrt{\lambda_n}L_{f,\Lambda_0,\pi_0}:f\in\mathcal{F},\|\sqrt{\lambda_n}f\|\leq \sqrt{M_{\delta'}}\}
\]
is Donsker by Corollary 9.32 in \citet{Kosorok08}, since $L_{f,\Lambda_0,\pi_0}$ is Lipschitz continuous in $f$, since the hinge loss is Lipschitz continuous in $f$. This implies that, for all $n$ sufficiently large, $\sqrt{\lambda_n}L_{\hat{f}_n;\Lambda_0,\pi_0}$ belongs to the Donsker class $\mathcal{G}_2(\delta')$ and therefore 
\[
\sqrt{n}(\mathbb{P}_n-P)\sqrt{\lambda_n}L_{\hat{f}_n,\Lambda_0,\pi_0}=O_p(1).
\]
Consequently,
\[
(\mathbb{P}_n-P)L_{\phi}(\hat{f}_n;\Lambda_0,\pi_0)=\frac{1}{\sqrt{n\lambda_n}}O_p(1)=o_p(1),
\]
since $n\lambda_n\rightarrow\infty$ by assumption. Substituting this result in \eqref{ineq1} we have
\[
|\mathcal{R}_{\phi}(\hat{f}_n)-\mathcal{R}_{\phi}(\tilde{f})|\leq o_p(1)+\delta.
\]
Setting $\delta=\delta_n\downarrow 0$, with $\sqrt{n}\delta_n\rightarrow\infty$, implies that 
\[
|\mathcal{R}_{\phi}(\hat{f}_n)-\mathcal{R}_{\phi}(\tilde{f})|=o_p(1),
\]
which completes the proof of the first statement of Theorem 2. For the second statement, 
it can be shown using similar arguments to those used in the proof of Theorem 3.2 in \citet{Zhao12} that
\[
\mathcal{R}(f)-\mathcal{R}(f^*)\leq \mathcal{R}_{\phi}(f)-\mathcal{R}_{\phi}(f^*),
\]
for any distribution $P$ of the data $D$ and any measurable decision function $f:\mathcal{Z}\mapsto\mathbb{R}$. Therefore, 
\begin{eqnarray}
|\mathcal{V}(\hat{d}_n)-\mathcal{V}(d^*)|&=&|\mathcal{R}(\hat{f}_n)-\mathcal{R}(f^*)| \nonumber \\
&\leq& |\mathcal{R}_{\phi}(\hat{f}_n)-\mathcal{R}_{\phi}(f^*)|. \label{ineq2}
\end{eqnarray}
Now, if $\mathcal{F}$ is the space of linear functions and $f^*\in\mathcal{F}$, then $\inf_{f\in\mathcal{F}}\mathcal{R}_{\phi}(f)=\mathcal{R}_{\phi}(\tilde{f})=\mathcal{R}_{\phi}(f^*)$. Thus, consistency follows from the first statement of Theorem 2 and \eqref{ineq2}. Next, suppose that $\mathcal{F}$ is the RKHS with the Gaussian kernel and that the marginal distribution $\mu$ of $Z$ is regular. Since the Gaussian kernel is a universal kernel, using similar arguments to those used in the proof of Lemma 3.4 in \citet{Zhou17} leads to 
\[
\inf_{f\in\mathcal{F}}\mathcal{R}_{\phi}(f)=\mathcal{R}_{\phi}(f^*).
\]
This, \eqref{ineq2}, and the first statement of Theorem 2 imply universal consistency.

\subsection{Proof of Theorem 3}
Here we provide the proof of the stronger (uniform) statement of Theorem 3 for the case where $\mathcal{F}$ is the space of linear functions. The proof of the weaker (pointwise) statement for any given measurable function $f$ is a simplified version of the proof below and is not discussed further.

First, note that, for any $f\in\mathcal{F}$, we have
\[
\mathcal{V}(\textrm{sgn}(f))=P\frac{\xi_{\Lambda_0}}{g_{\pi_0}}u_f,
\]
where 
\[
u_f(D)=I\{A=\textrm{sgn}(f(Z))\}=I\{Af(Z)\geq 0\},
\]
and 
\[
\hat{\mathcal{V}}_n(\textrm{sgn}(f))=\mathbb{P}_n\frac{\xi_{\hat{\Lambda}_n}}{g_{\hat{\pi}_n}}u_f.
\]
Now, straightforward algebra leads to
\begin{eqnarray}
\sqrt{n}\left\{\hat{\mathcal{V}}_n(\textrm{sgn}(f))-\mathcal{V}(\textrm{sgn}(f))\right\} &=& \sqrt{n}\mathbb{P}_n\left(\frac{1}{g_{\hat{\pi}_n}}-\frac{1}{g_{\pi_0}}\right)\left(\xi_{\hat{\Lambda}_n}-\xi_{\Lambda_0}\right)u_f \nonumber\\
&& +\sqrt{n}\mathbb{P}_n\frac{\xi_{\hat{\Lambda}_n}-\xi_{\Lambda_0}}{g_{\pi_0}}u_f \nonumber\\
&& +\sqrt{n}\mathbb{P}_n\xi_{\Lambda_0}\left(\frac{1}{g_{\hat{\pi}_n}}-\frac{1}{g_{\pi_0}}\right)u_f \nonumber\\
&&+\sqrt{n}(\mathbb{P}_n-P)\frac{\xi_{\Lambda_0}}{g_{\pi_0}}u_f \nonumber \\
&\equiv& B_{n,1}(f)+B_{n,2}(f)+B_{n,3}(f)+B_{n,4}(f) \label{decomp}
\end{eqnarray}
Next, for any functional $h:\mathcal{F}\mapsto\mathbb{R}$, define the supremum norm $\|h\|_{\mathcal{F}}=\sup_{f\in\mathcal{F}}|h(f)|$. For the term $B_{n,1}(f)$ in \eqref{decomp} we have
\begin{eqnarray*}
\|B_{n,1}\|_{\mathcal{F}}&\leq&\left|\sqrt{n}\mathbb{P}_n\left(\frac{1}{g_{\hat{\pi}_n}}-\frac{1}{g_{\pi_0}}\right)\left(\xi_{\hat{\Lambda}_n}-\xi_{\Lambda_0}\right)\right| \\
&\leq&\sqrt{n}\max_{1\leq i\leq n}\left|\frac{1}{g_{\hat{\pi}_n}(D_i)}-\frac{1}{g_{\pi_0}(D_i)}\right|\sup_{t\in[0,\tau]}\left|\exp\{\hat{\Lambda}_n(t)\}-\exp\{\Lambda_0(t)\}\right|\tau  \\
&\leq& \max\left\{\frac{1}{\hat{\pi}_n\pi_0},\frac{1}{(1-\hat{\pi}_n)(1-\pi_0)}\right\}\left|\sqrt{n}(\hat{\pi}_n-\pi_0)\right| \\
&&\times\sup_{t\in[0,\tau]}\left|\exp\{\hat{\Lambda}_n(t)\}-\exp\{\Lambda_0(t)\}\right|\tau.
\end{eqnarray*}
The last inequality, along with the boundedness of $\max[(\hat{\pi}_n\pi_0)^{-1},\{(1-\hat{\pi}_n)(1-\pi_0)\}^{-1}]$ for all $n$ sufficiently large, as a result of assumption A3, the fact that $\sqrt{n}(\hat{\pi}_n-\pi_0)=O_p(1)$, as a consequence of the central limit theorem, the fact that 
\[
\sup_{t\in[0,\tau]}\left|\exp\{\hat{\Lambda}_n(t)\}-\exp\{\Lambda_0(t)\}\right|=o_{as*}(1),
\]
as a result of the strong uniform consistency of the Nelson--Aalen estimator of the cumulative hazard (guaranteed by conditions C1 and C4) and the continuous mapping theorem, and the boundedness of the length of the follow-up interval $\tau$, lead to the conclusion that $\|B_{n,1}\|_{\mathcal{F}}=o_p(1)$. 

The term $B_{n,2}(f)$ can be expressed as follows
\begin{equation}
B_{n,2}(f)=\sqrt{n}(\mathbb{P}_n-P)\frac{\xi_{\hat{\Lambda}_n}-\xi_{\Lambda_0}}{g_{\pi_0}}u_f + \sqrt{n}P\frac{\xi_{\hat{\Lambda}_n}-\xi_{\Lambda_0}}{g_{\pi_0}}u_f. \label{B2_f}
\end{equation}
The class of functions 
\[
\mathcal{F}=\{f(\cdot)=\beta_0+\langle{\beta,\cdot}\rangle:\beta_0\in\mathbb{R},\beta\in\mathbb{R}^p\}
\]
is a Vapnik-{\v C}ervonenkis (VC) class by Lemma 9.6 in \citet{Kosorok08}. This along with lemma 9.9 and theorem 9.2 in \citet{Kosorok08} imply that the class $\{u_f:f\in\mathcal{F}\}$ has bounded uniform entropy integral. In addition, the latter class can be easily argued to be pointwise measurable and, therefore, this class is $P$-Donsker. By conditions C2 and C4 and Lemma 1, the fact that the class $\{\xi_{\Lambda}:\Lambda\in\mathcal{L}_{\delta}\}$ is uniformly bounded by $\tau\exp\{\Lambda_0(\tau)+\delta\}$, assumption A3, and the fact that products of uniformly bounded $P$-Donsker classes are also $P$-Donsker, it follows that the class
\[
\left\{\frac{\xi_{\Lambda}-\xi_{\Lambda_0}}{g_{\pi_0}}u_f:\Lambda\in\mathcal{L}_{\delta},f\in\mathcal{F}\right\}
\]
is $P$-Donsker for some $\delta>0$. Also, by assumption A3, we have 
\begin{eqnarray*}
\sup_{f\in\mathcal{F}}P\left(\frac{\xi_{\Lambda}-\xi_{\Lambda_0}}{g_{\pi_0}}u_f\right)^2&\leq&\left\{\frac{\tau\|\exp(\Lambda)-\exp(\Lambda_0)\|_{[0,\tau]}}{c_1\wedge(1-c_2)}\right\}^2 \\
&\leq&\left[\frac{\tau\exp\{\Lambda_0(\tau)\}\{\exp(\|\Lambda-\Lambda_0\|_{[0,\tau]})-1\}}{c_1\wedge(1-c_2)}\right]^2\rightarrow 0
\end{eqnarray*}
as $\|\Lambda-\Lambda_0\|_{[0,\tau]}\rightarrow 0$. The last two results along with the uniform consistency of the Nelson--Aalen estimator, guaranteed by conditions C1 and C4, and arguments similar to those used in the proof of Lemma 3.3.5 in \citet{Van96} lead to the conclusion that 
\[
\left\|\sqrt{n}(\mathbb{P}_n-P)\frac{\xi_{\hat{\Lambda}_n}-\xi_{\Lambda_0}}{g_{\pi_0}}u_f\right\|_{\mathcal{F}}=o_p(1).
\]
For the second term in the right side of \eqref{B2_f}, it can be shown that the map $\Lambda\mapsto Pg_{\pi_0}^{-1}\xi_{\Lambda}u_f$ is Hadamard differentiable at $\Lambda_0$ with derivative
\[
\eta_{\Lambda_0,f}'(h)=Pg_{\pi_0}^{-1}u_f\int_0^{\tau}Y(t)I(C\geq T\wedge t)\exp\{\Lambda_0(\tilde{T}\wedge t)\}h(\tilde{T}\wedge t)\textrm{d}m(t).
\]
This, along with the functional delta method \citep[Theorem 3.9.4 in][]{Van96} and the fact that 
\[
\sqrt{n}\{\hat{\Lambda}_n(t)-\Lambda_0(t)\}=\frac{1}{\sqrt{n}}\sum_{i=1}^n\gamma_i(t)+o_p(1), \ \ \ \ t\in[0,\tau],
\]
where
\[
\gamma_i(t)=\int_0^t\frac{\textrm{d}N_i(s)}{PY(s)}-\int_0^t\frac{Y_i(s)}{PY(s)}\textrm{d}\Lambda_0(s), 
\]
lead to the conclusion that 
\begin{eqnarray*}
\sqrt{n}P\frac{\xi_{\hat{\Lambda}_n}-\xi_{\Lambda_0}}{g_{\pi_0}}u_f&=&\frac{1}{\sqrt{n}}\sum_{i=1}^n\eta_{\Lambda_0,f}'(\gamma_i)+o_p(1) \\
&=&\frac{1}{\sqrt{n}}\sum_{i=1}^nPg_{\pi_0}^{-1}u_f\int_0^{\tau}Y(t)I(C\geq T\wedge t)\exp\{\Lambda_0(\tilde{T}\wedge t)\}\gamma_i(\tilde{T}\wedge t)dm(t) \\
&&+o_p(1),
\end{eqnarray*}
where $\gamma_i$ is considered fixed under the expectation operator $P$. Therefore,
\[
B_{n,2}(f)=\frac{1}{\sqrt{n}}\sum_{i=1}^n\eta_{\Lambda_0,f}'(\gamma_i)+o_p(1)+\epsilon_{n,1}(f)
\]
where 
\[
\epsilon_{n,1}(f)=\sqrt{n}(\mathbb{P}_n-P)\frac{\xi_{\hat{\Lambda}_n}-\xi_{\Lambda_0}}{g_{\pi_0}}u_f
\]
with $\|\epsilon_{n,1}\|_{\mathcal{F}}=o_p(1)$.

For the term $B_{n,3}(f)$, we have 
\begin{eqnarray}
B_{n,3}(f)&=&\sqrt{n}\mathbb{P}_n\xi_{\Lambda_0}\left(\frac{1}{g_{\hat{\pi}_n}}-\frac{1}{g_{\pi_0}}\right)u_f \nonumber \\
&=&-\left(\mathbb{P}_n\xi_{\Lambda_0}\frac{A}{g_{\pi_n^*}^2}u_f\right) \sqrt{n}(\hat{\pi}_n-\pi_0) \nonumber \\
&=& -\left(\mathbb{P}_n\xi_{\Lambda_0}\frac{A}{g_{\pi_n^*}^2}u_f-P\xi_{\Lambda_0}\frac{A}{g_{\pi_0}^2}u_f\right) \sqrt{n}(\hat{\pi}_n-\pi_0) \nonumber\\
&&-\left(P\xi_{\Lambda_0}\frac{A}{g_{\pi_0}^2}u_f\right)\sqrt{n}(\hat{\pi}_n-\pi_0) , \label{B3_f}
\end{eqnarray}
where $|\pi_n^*-\pi_0|\leq|\hat{\pi}_n-\pi_0|$. For the first term in the right side of \eqref{B3_f} we have
\begin{eqnarray*}
\sup_{f\in\mathcal{F}}\left|\mathbb{P}_n\xi_{\Lambda_0}\frac{A}{g_{\pi_n^*}^2}u_f-P\xi_{\Lambda_0}\frac{A}{g_{\pi_0}^2}u_f\right|&\leq&\sup_{f\in\mathcal{F}}\left|\mathbb{P}_n\xi_{\Lambda_0}u_FA\left(\frac{1}{g_{\pi_n^*}^2}-\frac{1}{g_{\pi_0}^2}\right)\right| \\
&&+\sup_{f\in\mathcal{F}}\left|(\mathbb{P}_n-P)\xi_{\Lambda_0}\frac{A}{g_{\pi_0}^2}u_f\right| \\
&\leq&\max\left\{\frac{1}{\pi_n^{*2}}-\frac{1}{\pi_0^2},\frac{1}{(1-\pi_n^*)^2}-\frac{1}{(1-\pi_0)^2}\right\}\tau\exp\{\Lambda_0(\tau)\} \\
&&+\sup_{f\in\mathcal{F}}\left|(\mathbb{P}_n-P)\xi_{\Lambda_0}\frac{A}{g_{\pi_0}^2}u_f\right| \\
\end{eqnarray*}
The first term in the right side of the above inequality is $o_{as}(1)$ by assumption A3, the fact that $|\pi_n^*-\pi_0|\leq|\hat{\pi}_n-\pi_0|$, the strong law of large numbers, the continuous mapping theorem, and condition C4, which guarantees the finiteness of $\Lambda_0(\tau)$. The second term in the right side of the last inequaltity is $o_{as*}(1)$ as a consequence of the $P$-Donsker property of the class 
\[
\left\{\xi_{\Lambda_0}\frac{A}{g_{\pi_0}^2}u_f:f\in\mathcal{F}\right\},
\]
which follows from similar arguments to those used in the analysis of the term $B_{n,2}(f)$ above, since this property implies that the latter class is also $P$-Glivenko--Cantelli. Therefore, using the last inequality gives
\[
\sup_{f\in\mathcal{F}}\left|\mathbb{P}_n\xi_{\Lambda_0}\frac{A}{g_{\pi_n^*}^2}u_f-P\xi_{\Lambda_0}\frac{A}{g_{\pi_0}^2}u_f\right|=o_{as*}(1).
\]
Consequently, given that $\sqrt{n}(\hat{\pi}_n-\pi_0)=O_p(1)$ by the central limit theorem, it follows from \eqref{B3_f} that
\[
B_{n,3}(f)=-\left(P\xi_{\Lambda_0}\frac{A}{g_{\pi_0}^2}u_f\right)\sqrt{n}(\hat{\pi}_n-\pi_0)+\epsilon_{n,2}(f),
\]
where 
\[
\epsilon_{n,2}(f)=-\left(\mathbb{P}_n\xi_{\Lambda_0}\frac{A}{g_{\pi_n^*}^2}u_f-P\xi_{\Lambda_0}\frac{A}{g_{\pi_0}^2}u_f\right) \sqrt{n}(\hat{\pi}_n-\pi_0),
\]
with $\|\epsilon_{n,2}\|_{\mathcal{F}}=o_{as*}(1)O_p(1)=o_p(1)$.

Taking all the pieces together, it follows from \eqref{decomp} that
\begin{eqnarray*}
\sqrt{n}\left\{\hat{\mathcal{V}}_n(\textrm{sgn}(f))-\mathcal{V}(\textrm{sgn}(f))\right\} &=&\frac{1}{\sqrt{n}}\sum_{i=1}^n\Bigg[\eta_{\Lambda_0,f}'(\gamma_i)-\left(P\xi_{\Lambda_0}\frac{A}{g_{\pi_0}^2}u_f\right)\{I(A_i=1)-\pi_0\} \\
&&+\left\{\frac{\xi_{\Lambda_0}(D_i)}{g_{\pi_0}(D_i)}u_f(D_i)-\mathcal{V}(\textrm{sgn}(f))\right\}\Bigg]+\epsilon_{n}(f) \\
&=&\frac{1}{\sqrt{n}}\sum_{i=1}^n\psi_i(f)+\epsilon_{n}(f) ,
\end{eqnarray*}
where $\epsilon_{n}(f)=B_{n,1}(f)+\epsilon_{n,1}(f)+\epsilon_{n,2}(f)+o_p(1)$ with $\|\epsilon_{n}\|_{\mathcal{F}}=o_p(1)$. Finally, the class of functions $\{\psi(f):f\in\mathcal{F}\}$ is $P$-Donsker as a consequence of the $P$-Donsker property of the class $\{u_f:f\in\mathcal{F}\}$ as argued above, Lemma 15.10 in \citet{Kosorok08}, and the fact that sums of $P$-Donsker classes which are multiplied by random variables with finite second moments are also $P$-Donsker.

\subsection{Proof of Theorem 4}
Let $\hat{f}_n(\cdot)=\hat{\beta}_{n,0}+\langle\hat{\beta}_{n,1},\cdot\rangle$, $\hat{\beta}_n=(\hat{\beta}_{n,0},\hat{\beta}_{n,1}')'$ and $\tilde{\beta}=(\tilde{\beta}_0,\tilde{\beta}_1')'$. Then, by Theorem 2, condition C5, and similar arguments to those used in \citet{Jiang08}, it follows that $\tilde{\beta}$ is unique and $\|\hat{\beta}_n-\tilde{\beta}\|_2=o_p(1)$, where $\|\cdot\|_2$ is the Euclidean norm. Next, using the notation $f_{\beta}(\cdot)=\beta_0+\langle\beta_1',\cdot\rangle$, $\beta=(\beta_0,\beta_1')'$, Thorem 3 guarantees that the class 
\[
\left\{\psi(f_{\beta})-\psi(f_{\tilde{\beta}}):\|\beta-\tilde{\beta}\|_2<\delta\right\}
\]
is $P$-Donsker for any $\delta>0$. Moreover, by the assumption that $P(f_{\tilde{\beta}}(Z)=0)=0$ which implies the continuity of the map $\beta\mapsto I(af_{\beta}(z)\geq 0)$ at $\tilde{\beta}$ for almost all $z\in\mathcal{Z}$, it follows that
\[
P\{\psi(f_{\beta})-\psi(f_{\tilde{\beta}})\}^2\rightarrow 0 \ \ \textrm{as} \ \ \beta\rightarrow\tilde{\beta}.
\]
Next, by Theorem 3 and arguments similar to those used in the proof of Lemma 3.3.5 in \citet{Van96} it follows that
\[
\left|\sqrt{n}\left\{\hat{\mathcal{V}}_n(\textrm{sgn}(f_{\hat{\beta}_n}))-\mathcal{V}(\textrm{sgn}(f_{\hat{\beta}_n}))\right\}-\sqrt{n}\left\{\hat{\mathcal{V}}_n(\textrm{sgn}(f_{\tilde{\beta}}))-\mathcal{V}(\textrm{sgn}(f_{\tilde{\beta}}))\right\}\right|=o_p(1),
\]
for any preference weight $w$ that satisfies the requirements of Section 2 in the main text. Finally, the conclusion of Theorem 4 follows from the last result and the fact that the set $\mathcal{W}$ of preference weights is finite.

\section*{Appendix B: Empirical Versions of the Influence Functions}
\renewcommand{\thesubsection}{B.\arabic{subsection}}

The influence functions $\psi_i(f)$ have the form
\begin{eqnarray*}
\psi_{i,w}(f)&=&\int_0^{\tau}\frac{Y_{i,w}(t)I(C_i\geq T_i\wedge t)I[A_i=\textrm{sgn}\{f(Z_i)\}]}{\exp\{-\Lambda_0(\tilde{T}_i\wedge t)\}\{A_i\pi_0 + (1-A_i)/2\}}\textrm{d}m(t)-\mathcal{V}_w(\textrm{sgn}(f)) \\
&& -E\left\{A\int_0^{\tau}\frac{Y_{w}(t)I(C\geq T\wedge t)I[A=\textrm{sgn}\{f(Z)\}]}{\exp\{-\Lambda_0(\tilde{T}\wedge t)\}\{A\pi_0 + (1-A)/2\}^2}\textrm{d}m(t)\right\}\{I(A_i=1)-\pi_0\} \\
&&+\eta_{\Lambda_0,f,w}'(\gamma_i), \ \ \ \ i=1,\ldots,n, \ \ w\in\mathcal{W}, \ \ f\in\mathcal{F},
\end{eqnarray*}
where
\[
\eta_{\Lambda_0,f,w}'(h)=E\left\{\int_0^{\tau}\frac{Y_{w}(t)I(C\geq T\wedge t)I[A=\textrm{sgn}\{f(Z)\}]}{\exp\{-\Lambda_0(\tilde{T}\wedge t)\}\{A\pi_0 + (1-A)/2\}}h(\tilde{T}\wedge t)\textrm{d}m(t)\right\},
\]
for $h$ in the space $D[0,\tau]$ of right continuous functions on $[0,\tau]$ with left hand limits and
\[
\gamma_i(t)=\int_0^t\frac{dN_i(s)}{EY(s)}-\int_0^t\frac{Y_i(s)}{EY(s)}\textrm{d}\Lambda_0(s), \ \ \ \ i=1,\ldots,n, \ \ t\in[0,\tau].
\]
The empirical versions of the influence functions are
\begin{eqnarray*}
\hat{\psi}_{i,w}(f)&=&\int_0^{\tau}\frac{Y_{i,w}(t)I(C_i>T_i\wedge t)I[A_i=\textrm{sgn}\{f(Z_i)\}]}{\exp\{-\hat{\Lambda}_n(\tilde{T}_i\wedge t)\}\{A_i\hat{\pi}_n + (1-A_i)/2\}}\textrm{d}m(t)-\hat{\mathcal{V}}_{n,w}(\textrm{sgn}(f)) \\
&&-\left\{\frac{1}{n}\sum_{j=1}^nA_j\int_0^{\tau}\frac{Y_{j,w}(t)I(C_j>T_j\wedge t)I[A_j=\textrm{sgn}\{f(Z_j)\}]}{\exp\{-\hat{\Lambda}_n(\tilde{T}_j\wedge t)\}\{A_j\hat{\pi}_n + (1-A_j)/2\}^2}\textrm{d}m(t)\right\}\{I(A_i=1)-\hat{\pi}_n\} \\
&&+\hat{\eta}_{\hat{\Lambda}_n,f,w}'(\hat{\gamma}_i), \ \ \ \ i=1,\ldots,n, \ \ w\in\mathcal{W}, \ \ f\in\mathcal{F},
\end{eqnarray*}
where
\[
\hat{\eta}_{\hat{\Lambda}_n,f,w}'(h)=\frac{1}{n}\sum_{i=1}^n\left\{\int_0^{\tau}\frac{Y_{i,w}(t)I(C_i>T_i\wedge t)I[A_i=\textrm{sgn}\{f(Z_i)\}]}{\exp\{-\hat{\Lambda}_n(\tilde{T}_i\wedge t)\}\{A_i\hat{\pi}_n + (1-A_i)/2\}}h(\tilde{T}_i\wedge t)\textrm{d}m(t)\right\},
\]
for $h\in D[0,\tau]$, and
\[
\hat{\gamma}_i(t)=\int_0^t\frac{\textrm{d}N_i(s)}{n^{-1}\sum_{j=1}^nY_j(s)}-\int_0^t\frac{Y_i(s)}{n^{-1}\sum_{j=1}^nY_j(s)}\textrm{d}\hat{\Lambda}_n(s), \ \ \ \ i=1,\ldots,n, \ \ t\in[0,\tau].
\]

\section*{Appendix C. Additional Simulation Results}
\renewcommand{\thesubsection}{C.\arabic{subsection}}
\setcounter{subsection}{0}

In this Appendix we provide additional simulation results. 

\subsection{Additional simulation results under $w=(0,1,0)'$ (duration of tumor response)}

Simulation results evaluating the effect of selecting different values of $\tau$ in the analysis are illustrated in Figures~\ref{f:sim_taus_low}--\ref{f:sim_taus_high}.

\begin{figure}
\centerline{\includegraphics[width=6in]{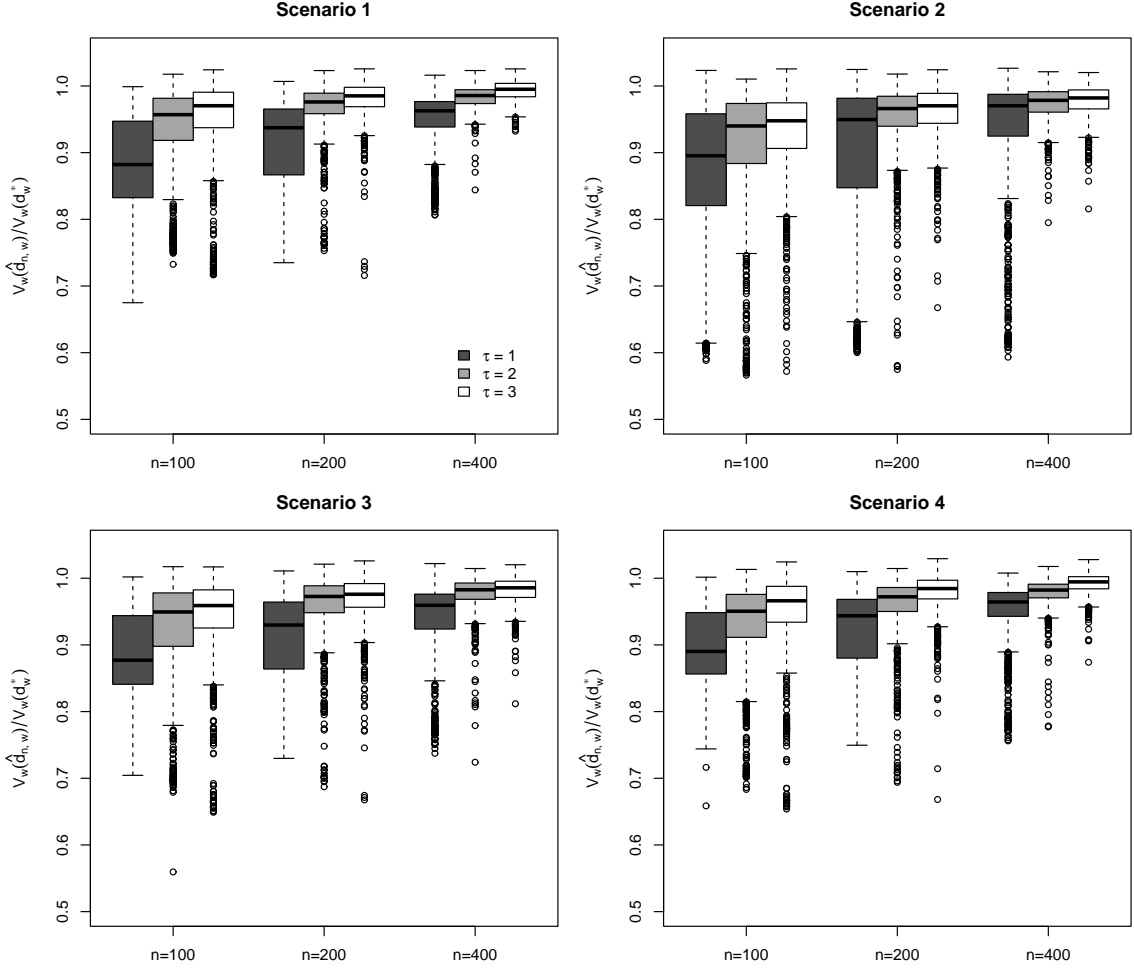}}
\caption{Simulation study: Performance of $\hat{d}_{n,w}$ for the duration of tumor response (i.e., $w=(0,1,0)'$), in terms of the estimated individualized treatment rule value ratio $\mathcal{V}_w(\hat{d}_{n,w})/\mathcal{V}_w(d_w^*)$ for different choices of the $\tau$ used in the analysis. The total length of the follow-up is equal to 3. Results under an average censoring rate of 28.4\%.}
\label{f:sim_taus_low}
\end{figure}

\begin{figure}
\centerline{\includegraphics[width=6in]{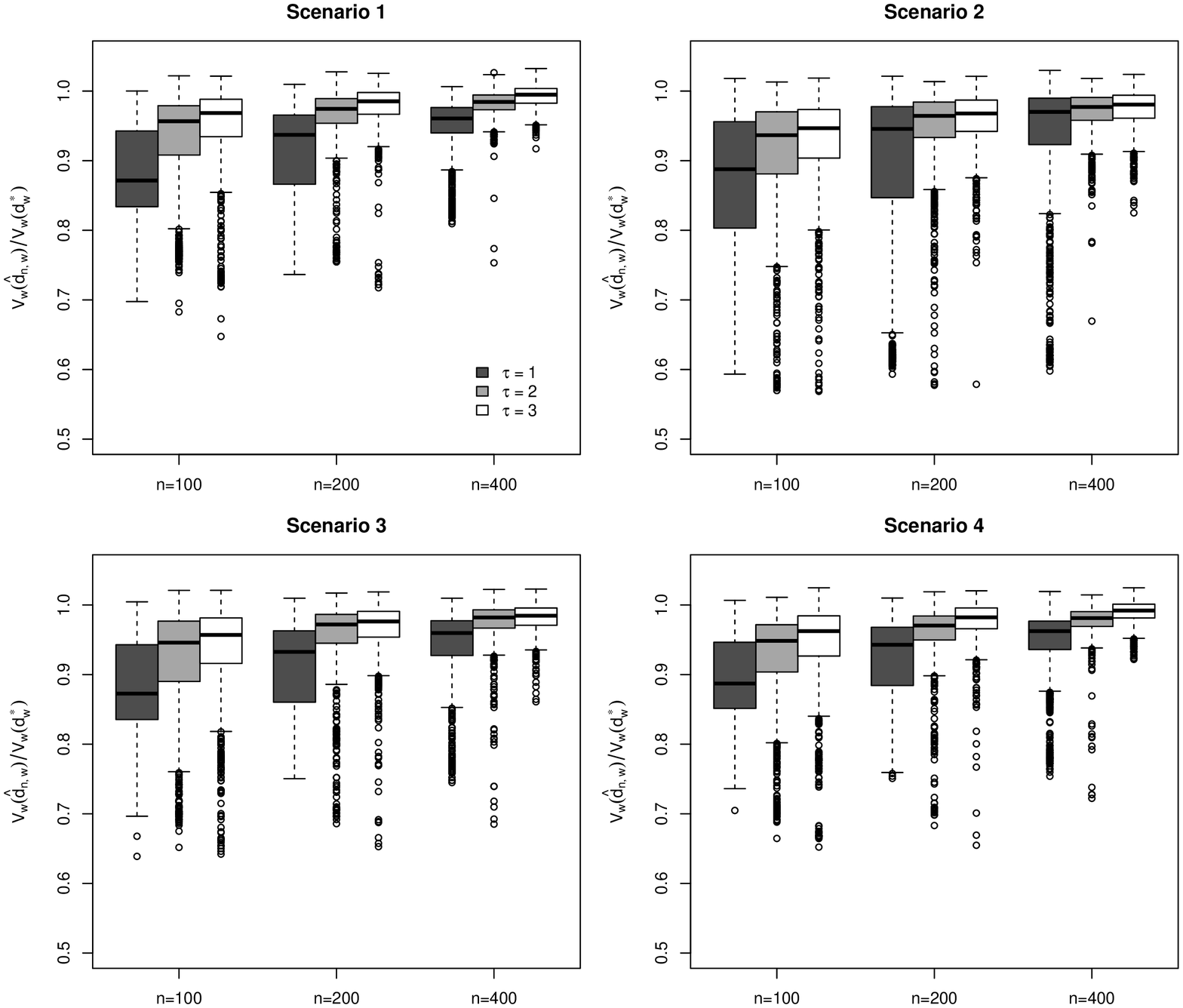}}
\caption{Simulation study: Performance of $\hat{d}_{n,w}$ for the duration of tumor response (i.e., $w=(0,1,0)'$), in terms of the estimated individualized treatment rule value ratio $\mathcal{V}_w(\hat{d}_{n,w})/\mathcal{V}_w(d_w^*)$ for different choices of the $\tau$ used in the analysis. The total length of the follow-up is equal to 3. Results under an average censoring rate of 42.8\%.}
\label{f:sim_taus_med}
\end{figure}

\begin{figure}
\centerline{\includegraphics[width=6in]{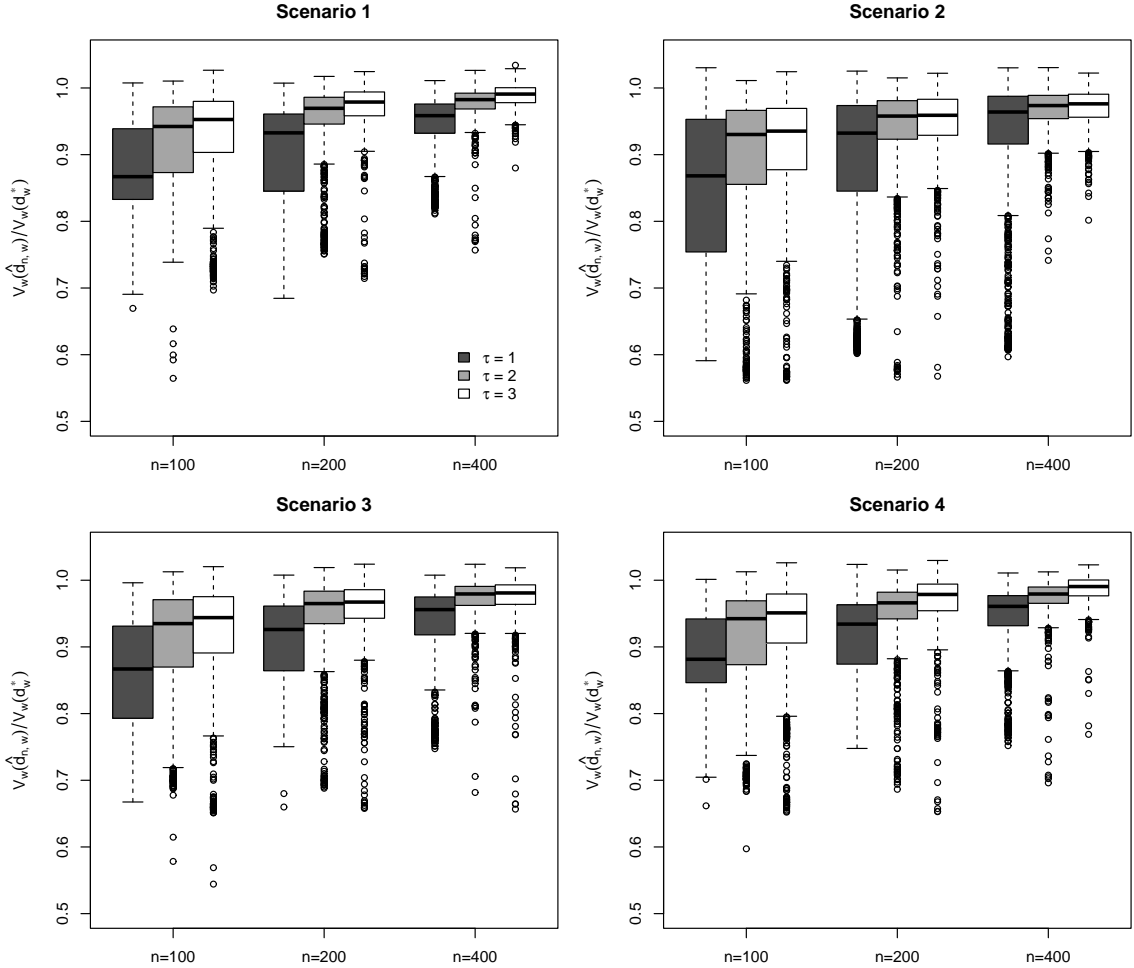}}
\caption{Simulation study: Performance of $\hat{d}_{n,w}$ for the duration of tumor response (i.e., $w=(0,1,0)'$), in terms of the estimated individualized treatment rule value ratio $\mathcal{V}_w(\hat{d}_{n,w})/\mathcal{V}_w(d_w^*)$ for different choices of the $\tau$ used in the analysis. The total length of the follow-up is equal to 3. Results under an average censoring rate of 59.5\%.}
\label{f:sim_taus_high}
\end{figure}

Simulation results on the performance of the proposed ITR estimator when $\mathcal{F}$ is the RKHS with the Gaussian kernel with $\sigma=1$ (less flexible kernel) and $\sigma=5$ (more flexible kernel), for the duration of tumor response, are depicted in Figures~\ref{f:resp_RKHS_low}--\ref{f:resp_RKHS_high}. For comparison, these figures also illustrate the performance of the proposed method when $\mathcal{F}$ is the space of linear functions. 

\begin{figure}
\centerline{\includegraphics[width=6in]{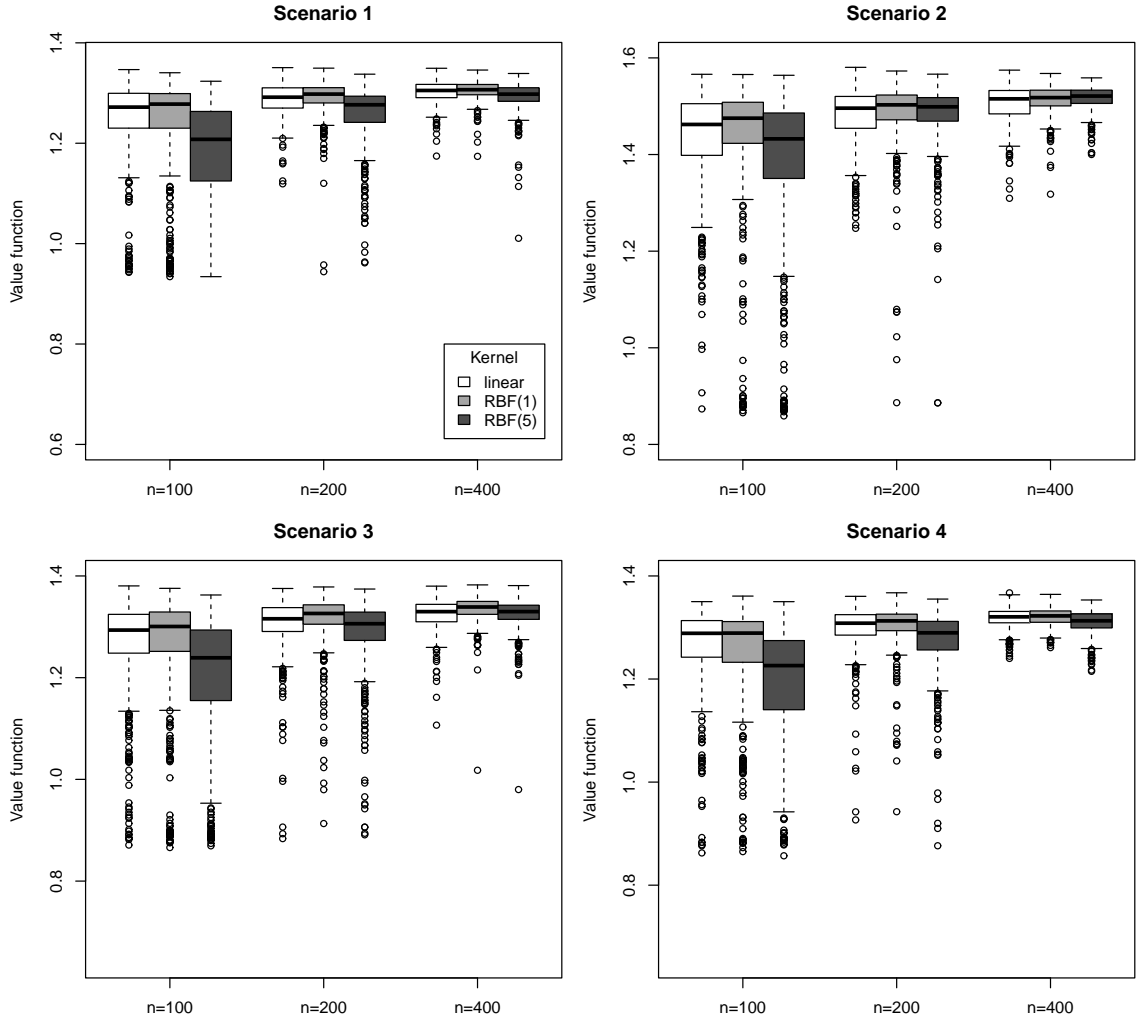}}
\caption{Simulation study: Value functions for the estimated individualized treatment rules for the duration of tumor response (i.e., $w=(0,1,0)'$), based on the proposed method with $\mathcal{F}$ being the class of linear functions or the RKHS with the Gaussian (also known as radial basis function) kernel with $\sigma=1$ (RBF(1); less flexible kernel) and $\sigma=5$ (RBF(5); more flexible kernel). Results under an average censoring rate of 28.4\%.}
\label{f:resp_RKHS_low}
\end{figure}

\begin{figure}
\centerline{\includegraphics[width=6in]{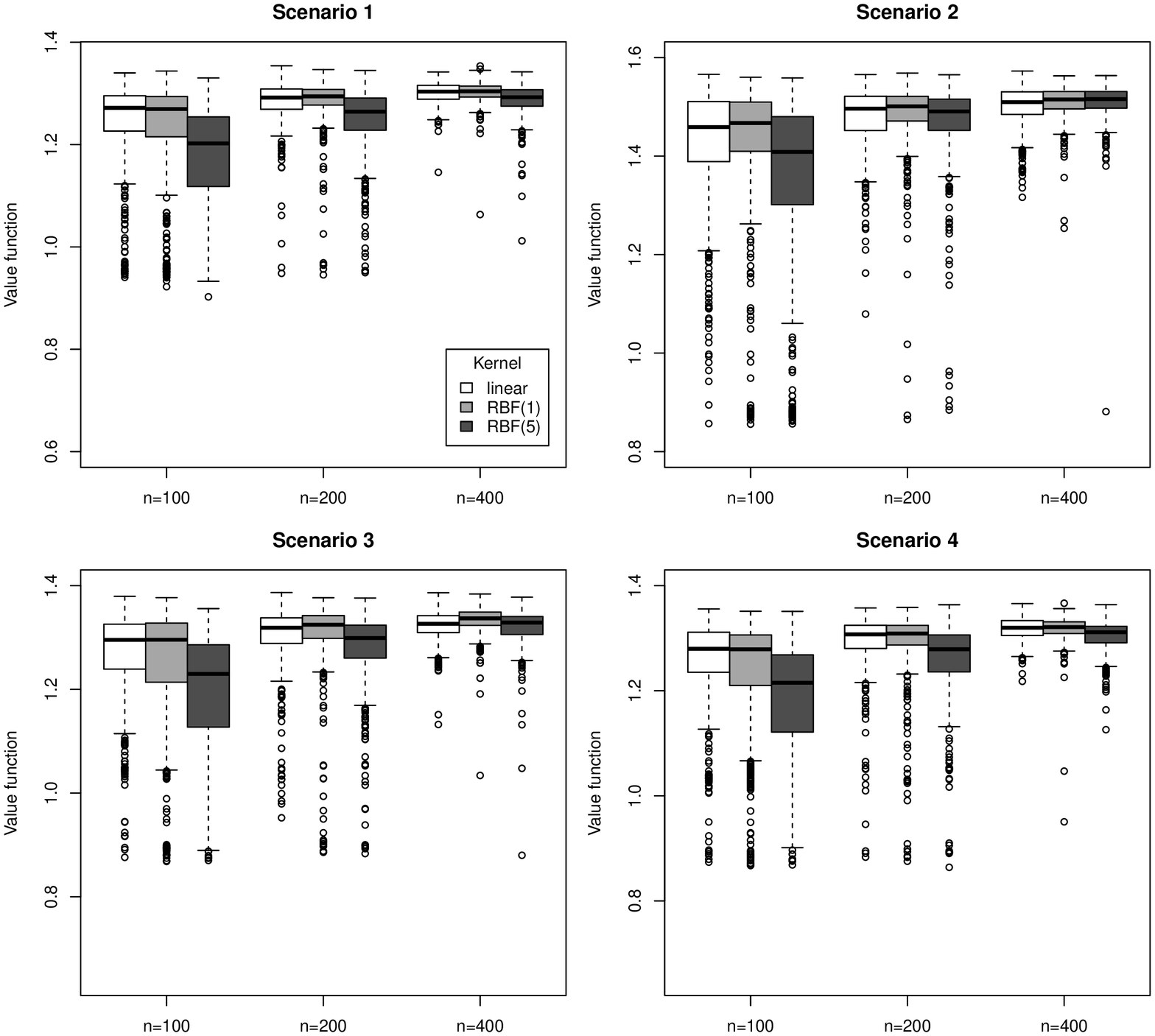}}
\caption{Simulation study: Value functions for the estimated individualized treatment rules for the duration of tumor response (i.e., $w=(0,1,0)'$), based on the proposed method with $\mathcal{F}$ being the class of linear functions or the RKHS with the Gaussian (also known as radial basis function) kernel with $\sigma=1$ (RBF(1); less flexible kernel) and $\sigma=5$ (RBF(5); more flexible kernel). Results under an average censoring rate of 42.8\%.}
\label{f:resp_RKHS_med}
\end{figure}

\begin{figure}
\centerline{\includegraphics[width=6in]{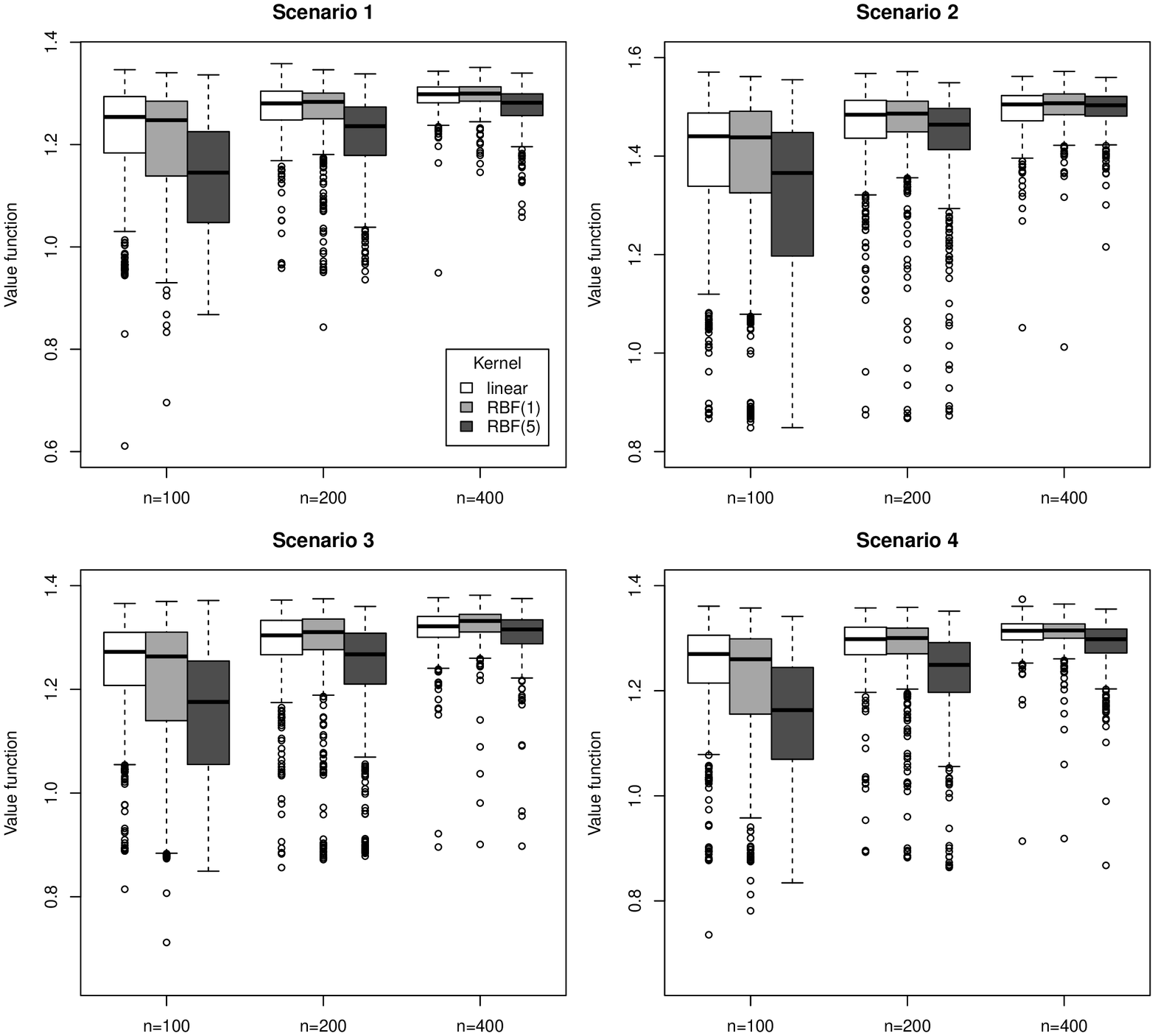}}
\caption{Simulation study: Value functions for the estimated individualized treatment rules for the duration of tumor response (i.e., $w=(0,1,0)'$), based on the proposed method with $\mathcal{F}$ being the class of linear functions or the RKHS with the Gaussian (also known as radial basis function) kernel with $\sigma=1$ (RBF(1); less flexible kernel) and $\sigma=5$ (RBF(5); more flexible kernel). Results under an average censoring rate of 59.5\%.}
\label{f:resp_RKHS_high}
\end{figure}

\subsection{Simulation results under $w=(1,1,0)'$ (progression-free survival time)}

Simulation results regarding the performance of the estimated ITR $\hat{d}_{n,w}$ for the progression-free survival time (i.e., $w=(1,1,0)'$) are depicted in Figures~\ref{f:surv_Zhao15_low}--\ref{f:surv_Zhao15_high}. For comparison, these plots also illustrate the performance of the inverse censoring weighted outcome weighted learning (ICO) and doubly robust outcome weighted learning (DR) methods by \citet{Zhao15} for censored failure times. 

\begin{figure}
\centerline{\includegraphics[width=6in]{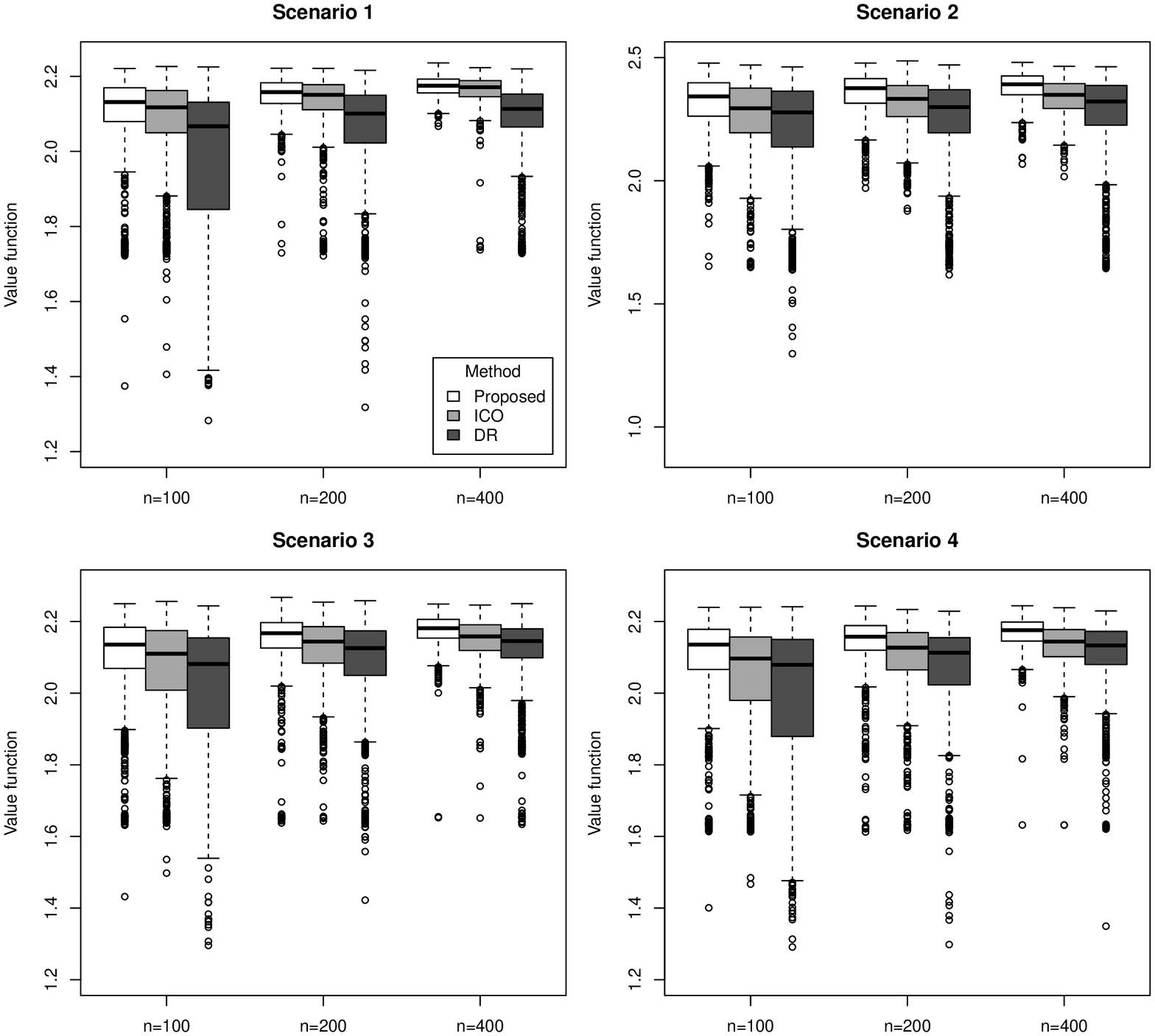}}
\caption{Simulation study: Value functions for the estimated individualized treatment rules based on the proposed method and the inverse censoring weighted outcome weighted learning (ICO) and doubly robust outcome weighted learning (DR) methods by \citet{Zhao15} for the progression-free survival time (i.e., $w=(1,1,0)'$). Results under an average censoring rate of 28.4\%.}
\label{f:surv_Zhao15_low}
\end{figure}

\begin{figure}
\centerline{\includegraphics[width=6in]{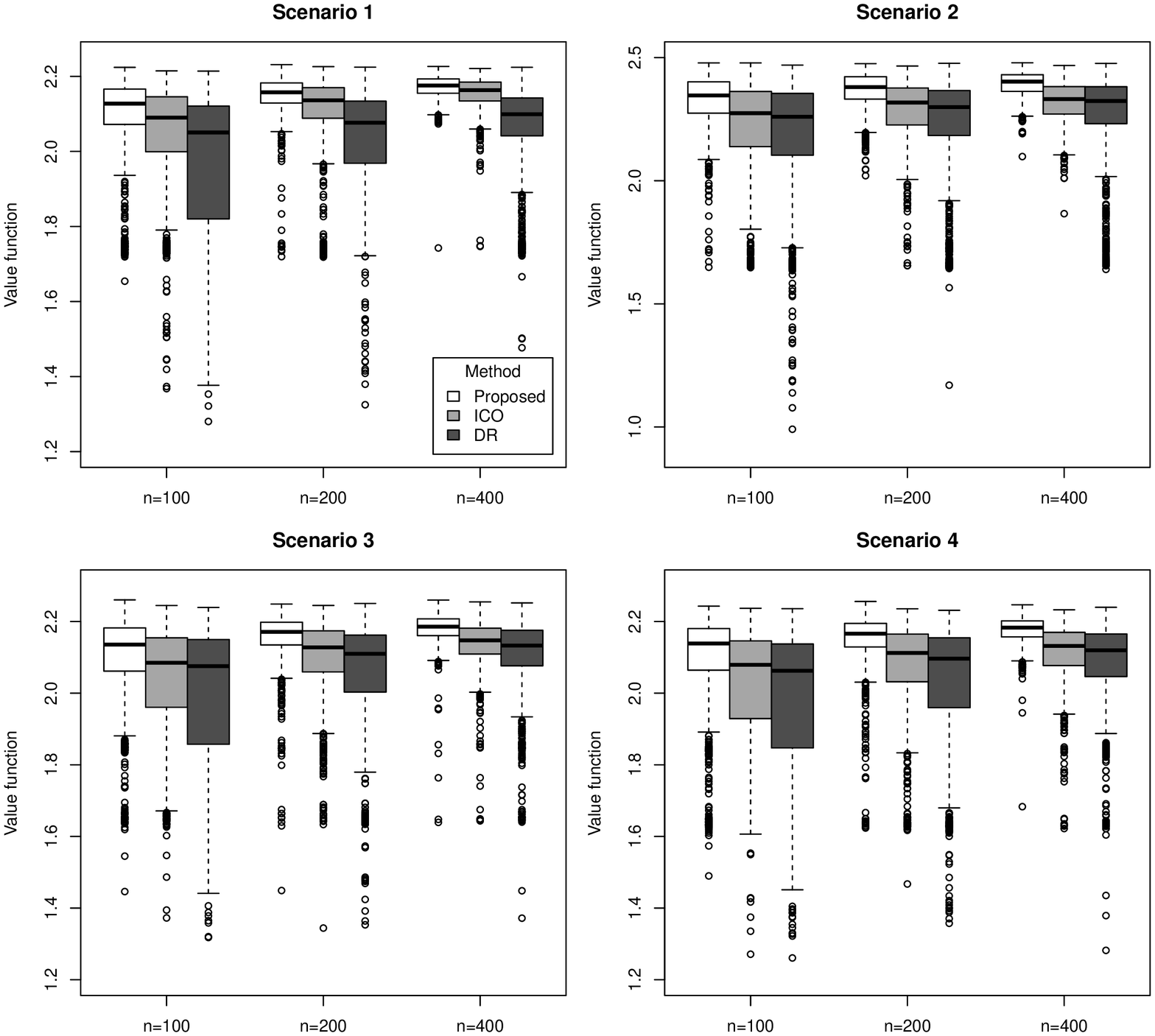}}
\caption{Simulation study: Value functions for the estimated individualized treatment rules based on the proposed method and the inverse censoring weighted outcome weighted learning (ICO) and doubly robust outcome weighted learning (DR) methods by \citet{Zhao15} for the progression-free survival time (i.e., $w=(1,1,0)'$). Results under an average censoring rate of 42.8\%.}
\label{f:surv_Zhao15_med}
\end{figure}

\begin{figure}
\centerline{\includegraphics[width=6in]{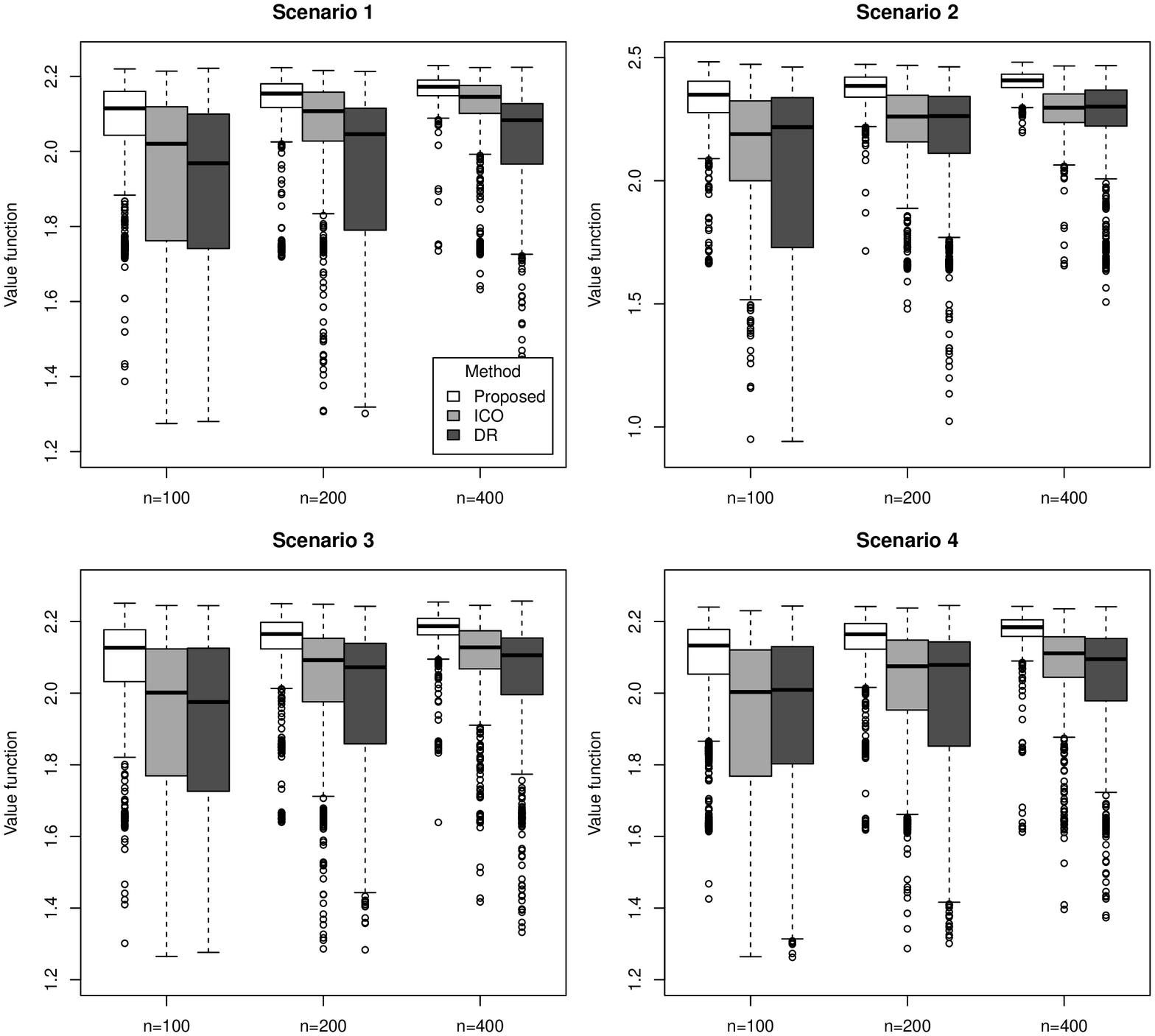}}
\caption{Simulation study: Value functions for the estimated individualized treatment rules based on the proposed method and the inverse censoring weighted outcome weighted learning (ICO) and doubly robust outcome weighted learning (DR) methods by \citet{Zhao15} for the progression-free survival time (i.e., $w=(1,1,0)'$). Results under an average censoring rate of 59.5\%.}
\label{f:surv_Zhao15_high}
\end{figure}

The simulation results regarding the validity of the proposed inference methods for $\mathcal{V}_w(\hat{d}_{n,w})$ in terms of the progression-free survival time are summarized in Tables~\ref{t:sims_lin} and \ref{t:sims_nonlin}.

\begin{table}
\caption{Simulation study: Performance of the proposed inference methods for the true value of the estimated ITR $\mathcal{V}_w(\hat{d}_{n,w})$ for the progression-free survival time (i.e., $w=(1,1,0)'$), under a linear optimal decision function $f_w^*$ (scenarios 1 and 2). (Cens: right censoring rate; $n$: training sample size; MCSD: Monte Carlo standard deviation of the estimates; ASE: Average of the standard error estimates; CP: empirical coverage probability of the 95\% confidence interval)}
\label{t:sims_lin}
\begin{center}
\begin{tabular}{lllcccccccc}
\hline
&&& \multicolumn{4}{c}{$\hat{\mathcal{V}}_{n,w}(\hat{d}_{n,w})$} & \multicolumn{4}{c}{$\hat{\mathcal{V}}_{n,w}^{jk}(\hat{d}_{n,w})$} \\
\cmidrule(lr){4-7}\cmidrule(lr){8-11}
Scenario & Cens & $n$ &
\multicolumn{1}{c}{\% error} &
\multicolumn{1}{c}{MCSD} & 
\multicolumn{1}{c}{ASE} & 
\multicolumn{1}{c}{CP} &
\multicolumn{1}{c}{\% error} &
\multicolumn{1}{c}{MCSD} & 
\multicolumn{1}{c}{ASE} & 
\multicolumn{1}{c}{CP} \\ \hline
1&  28\%&100&3.453&0.304&0.272&0.945&-2.439&0.368&0.272&0.893 \\
&&200&2.068&0.208&0.199&0.942&-0.779&0.218&0.199&0.935 \\
&&400&1.327&0.140&0.142&0.956&-0.502&0.147&0.142&0.940 \\[1ex]
&  43\%&100&4.397&0.318&0.300&0.956&-2.285&0.383&0.300&0.902 \\
&&200&3.009&0.221&0.218&0.952&-0.393&0.244&0.218&0.931 \\
&&400&1.627&0.151&0.155&0.959&-0.267&0.156&0.155&0.955 \\[1ex]
&  60\%&100&5.171&0.382&0.363&0.946&-4.364&0.476&0.363&0.886 \\
&&200&3.220&0.264&0.262&0.954&-1.236&0.315&0.262&0.914 \\
&&400&1.627&0.181&0.187&0.954&-0.856&0.200&0.187&0.939 \\[1.5ex]
2&  28\%&100&2.023&0.324&0.289&0.953&-0.910&0.331&0.289&0.941 \\
&&200&1.407&0.227&0.207&0.949&-0.270&0.228&0.207&0.938 \\
&&400&0.625&0.161&0.147&0.950&-0.249&0.162&0.147&0.944 \\[1ex]
&  42\%&100&2.472&0.344&0.319&0.948&-0.968&0.351&0.319&0.931 \\
&&200&1.839&0.236&0.227&0.947&-0.115&0.242&0.227&0.941 \\
&&400&0.901&0.171&0.161&0.941&-0.195&0.174&0.161&0.936 \\[1ex]
&  57\%&100&1.892&0.402&0.383&0.948&-3.099&0.438&0.383&0.919 \\
&&200&2.002&0.261&0.274&0.962&-0.678&0.272&0.274&0.944 \\
&&400&0.763&0.184&0.192&0.970&-0.741&0.188&0.192&0.954 \\
\hline
\end{tabular}
\end{center}
\end{table}

\begin{table}
\caption{Simulation study: Performance of the proposed inference methods for the true value of the estimated individualized treatment rule $\mathcal{V}_w(\hat{d}_{n,w})$ for the progression-free survival time (i.e., $w=(1,1,0)'$), under a nonlinear optimal decision function $f_w^*$ (scenarios 3 and 4). (Cens: right censoring rate; $n$: training sample size; MCSD: Monte Carlo standard deviation of the estimates; ASE: Average of the standard error estimates; CP: empirical coverage probability of the 95\% confidence interval)}
\label{t:sims_nonlin}
\begin{center}
\begin{tabular}{lllcccccccc}
\hline
&&& \multicolumn{4}{c}{$\hat{\mathcal{V}}_{n,w}(\hat{d}_{n,w})$} & \multicolumn{4}{c}{$\hat{\mathcal{V}}_{n,w}^{jk}(\hat{d}_{n,w})$} \\
\cmidrule(lr){4-7}\cmidrule(lr){8-11}
Scenario & Cens & $n$ &
\multicolumn{1}{c}{\% error} &
\multicolumn{1}{c}{MCSD} & 
\multicolumn{1}{c}{ASE} & 
\multicolumn{1}{c}{CP} &
\multicolumn{1}{c}{\% error} &
\multicolumn{1}{c}{MCSD} & 
\multicolumn{1}{c}{ASE} & 
\multicolumn{1}{c}{CP} \\ \hline
3&  29\%&100&3.511&0.313&0.271&0.958&-1.521&0.361&0.271&0.908 \\
&&200&1.992&0.214&0.198&0.959&-0.967&0.234&0.198&0.937 \\
&&400&1.012&0.152&0.142&0.957&-0.573&0.156&0.142&0.942 \\[1ex]
&  43\%&100&4.064&0.344&0.298&0.953&-2.430&0.401&0.298&0.905 \\
&&200&2.574&0.239&0.217&0.938&-0.673&0.266&0.217&0.914 \\
&&400&1.379&0.156&0.155&0.957&-0.461&0.162&0.155&0.947 \\[1ex]
&  60\%&100&5.864&0.389&0.362&0.947&-3.478&0.466&0.362&0.894 \\
&&200&3.404&0.272&0.261&0.962&-0.944&0.316&0.261&0.930 \\
&&400&1.826&0.179&0.186&0.961&-0.638&0.197&0.186&0.954 \\[1.5ex]
4&  29\%&100&3.767&0.323&0.271&0.930&-1.907&0.380&0.271&0.887 \\
&&200&2.016&0.222&0.198&0.941&-0.895&0.240&0.198&0.920 \\
&&400&1.419&0.146&0.142&0.953&-0.107&0.150&0.142&0.946 \\[1ex]
&  43\%&100&4.548&0.321&0.298&0.951&-1.714&0.389&0.298&0.901 \\
&&200&2.319&0.232&0.216&0.954&-0.840&0.257&0.216&0.925 \\
&&400&1.475&0.159&0.155&0.956&-0.184&0.161&0.155&0.945 \\[1ex]
&  59\%&100&5.520&0.408&0.364&0.947&-3.631&0.494&0.364&0.883 \\
&&200&3.142&0.270&0.262&0.958&-0.975&0.295&0.262&0.933 \\
&&400&2.067&0.189&0.187&0.960&-0.180&0.197&0.187&0.949 \\
\hline
\end{tabular}
\end{center}
\end{table}

Simulation results on the performance of the proposed ITR estimator when $\mathcal{F}$ is the RKHS with the Gaussian kernel with $\sigma=1$ (less flexible kernel) and $\sigma=5$ (more flexible kernel), for the duration of tumor response, are depicted in Figures~\ref{f:surv_RKHS_low}--\ref{f:surv_RKHS_high}. For comparison, these figures also illustrate the performance of the proposed method when $\mathcal{F}$ is the space of linear functions. 

\begin{figure}
\centerline{\includegraphics[width=6in]{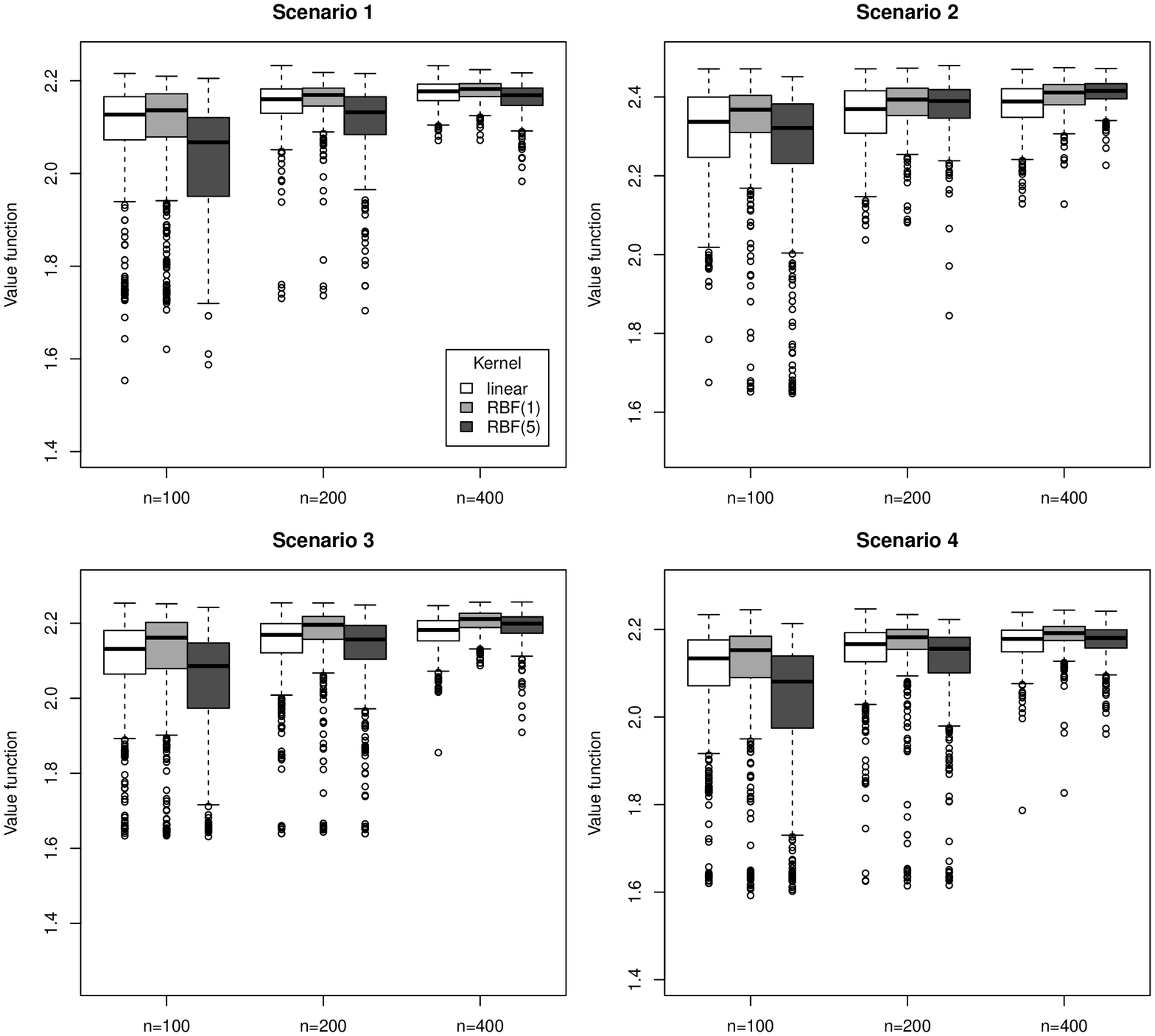}}
\caption{Simulation study: Value functions for the estimated individualized treatment rules for the progression-free survival time (i.e., $w=(1,1,0)'$), based on the proposed method with $\mathcal{F}$ being the class of linear functions or the RKHS with the Gaussian (also known as radial basis function) kernel with $\sigma=1$ (RBF(1); less flexible kernel) and $\sigma=5$ (RBF(5); more flexible kernel). Results under an average censoring rate of 28.4\%.}
\label{f:surv_RKHS_low}
\end{figure}

\begin{figure}
\centerline{\includegraphics[width=6in]{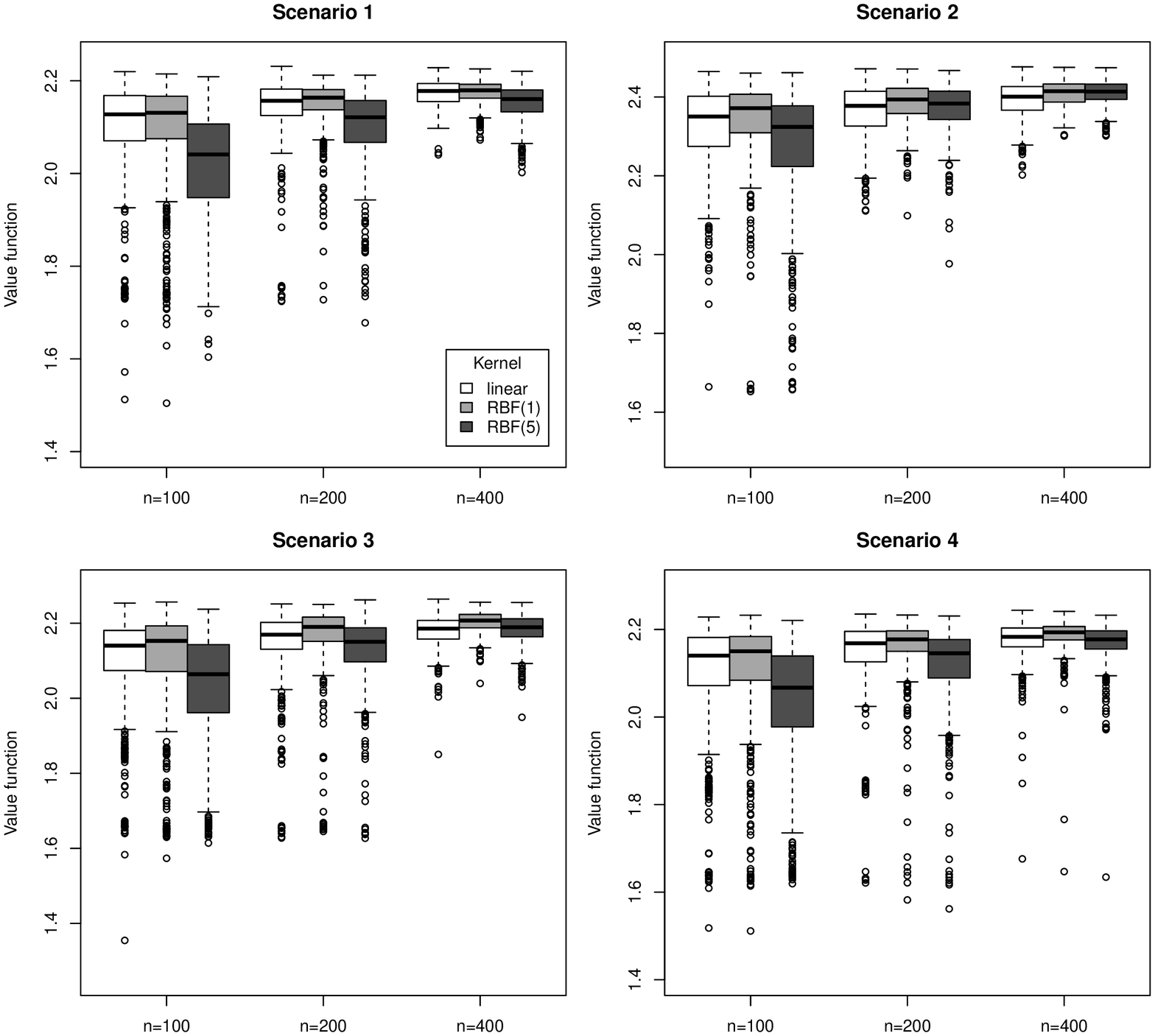}}
\caption{Simulation study: Value functions for the estimated individualized treatment rules for the progression-free survival time (i.e., $w=(1,1,0)'$), based on the proposed method with $\mathcal{F}$ being the class of linear functions or the RKHS with the Gaussian (also known as radial basis function) kernel with $\sigma=1$ (RBF(1); less flexible kernel) and $\sigma=5$ (RBF(5); more flexible kernel). Results under an average censoring rate of 42.8\%.}
\label{f:surv_RKHS_med}
\end{figure}

\begin{figure}
\centerline{\includegraphics[width=6in]{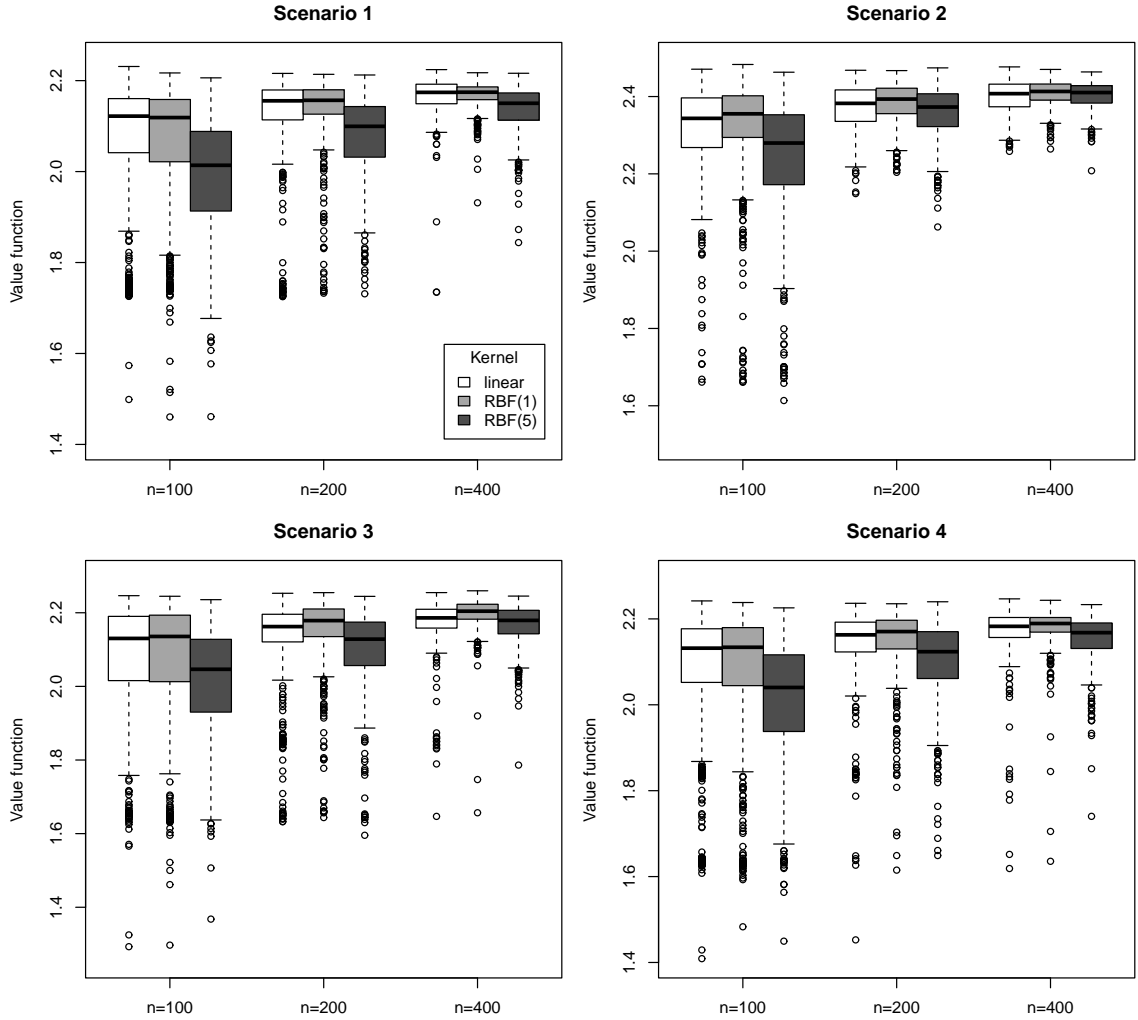}}
\caption{Simulation study: Value functions for the estimated individualized treatment rules for the progression-free survival time (i.e., $w=(1,1,0)'$), based on the proposed method with $\mathcal{F}$ being the class of linear functions or the RKHS with the Gaussian (also known as radial basis function) kernel with $\sigma=1$ (RBF(1); less flexible kernel) and $\sigma=5$ (RBF(5); more flexible kernel). Results under an average censoring rate of 59.5\%.}
\label{f:surv_RKHS_high}
\end{figure}

\section*{Appendix D. Additional Results from the SPECTRUM Trial Analysis}
\renewcommand{\thesubsection}{D.\arabic{subsection}}
\setcounter{subsection}{0}

The estimates of the treatment-specific cumulative transition intensities and state occupation probabilities based on the data from the SPECTRUM trial are depicted in Figures~\ref{f:cti} and \ref{f:sop}.

\begin{figure}
\centerline{\includegraphics[width=6in]{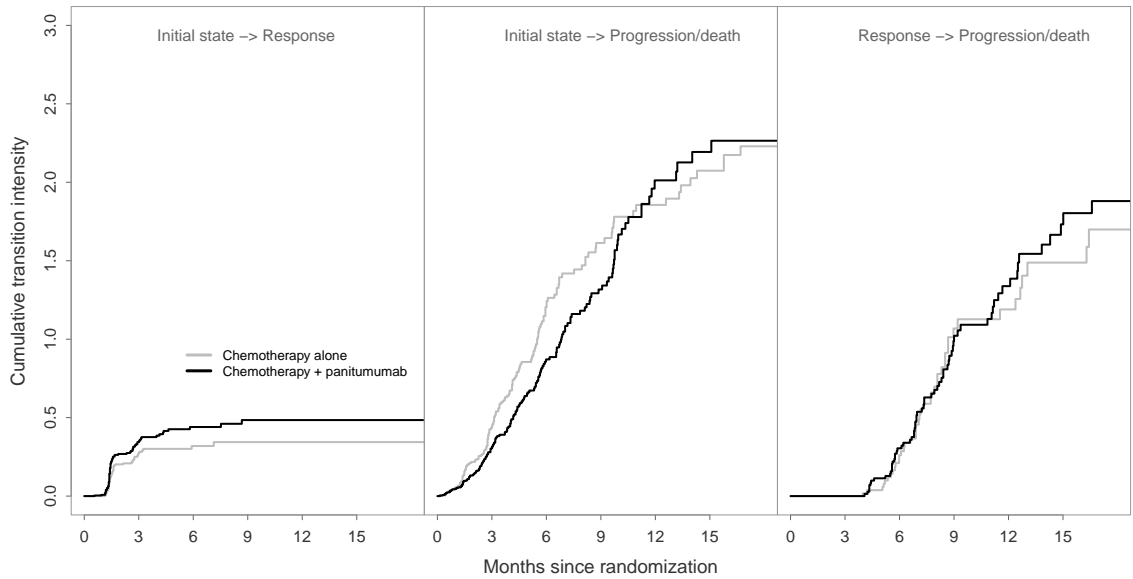}}
\caption{SPECTRUM trial analysis: Nonparametric estimates of the cumulative transition intensities by treatment arm.}
\label{f:cti}
\end{figure}

\begin{figure}
\centerline{\includegraphics[width=6in]{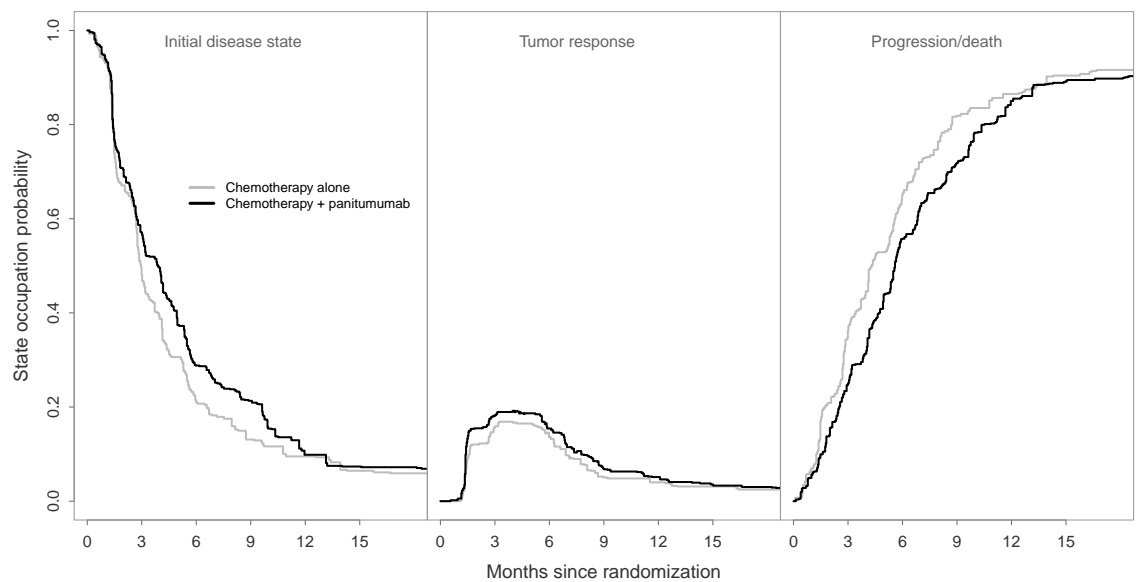}}
\caption{SPECTRUM trial analysis: Nonparametric estimates of the state occupation probabilities by treatment arm.}
\label{f:sop}
\end{figure}

\section*{Appendix E. Relaxing the independent censoring assumption}
\renewcommand{\thesubsection}{E.\arabic{subsection}}
\setcounter{subsection}{0}

A plausible relaxation of the independent censoring assumption (condition C1) is to allow censoring to depend on treatment $A$, since censoring will likely be higher among those receiving the treatment with the greater toxicity \citep{Templeton20}. This can be trivially incorporated into the proposed methodology by simply replacing $\hat{\Lambda}_n(\tilde{T}\wedge t)$ in the proposed estimators with the nonparametric estimate of the conditional cumulative hazard of censoring given $A$, that is
\[
\hat{\Lambda}_n(\tilde{T}\wedge t,A)=\hat{\Lambda}_{n,1}(\tilde{T}\wedge t)I(A=1) + \hat{\Lambda}_{n,-1}(\tilde{T}\wedge t)I(A=-1),
\]
where 
\[
\hat{\Lambda}_{n,a}(t)=\int_0^t\frac{\sum_{i=1}^nI(A_i=a)\textrm{d}N_i(s)}{\sum_{i=1}^nI(A_i=a)Y_i(s)}, \ \ \ \ a\in\{-1, 1\}, \ \ t\in[0,\tau].
\]
This estimator is uniformly consistent for the true conditional cumulative hazard of censoring given $A$
\[
\Lambda_0(\tilde{T}\wedge t,A)=\Lambda_{0,1}(\tilde{T}\wedge t)I(A=1) + \Lambda_{0,-1}(\tilde{T}\wedge t)I(A=-1),
\]
where $\Lambda_{0,a}(t)$, $a\in\{-1,1\}$, is the true cumulative hazard of censoring among those with $A=a$. In this case, all the theoretical poperties of the proposed estimators in the main manuscript still hold, with the exception that the influence functions $\psi_i(f)$ have the (slightly different) form
\begin{eqnarray*}
\psi_{i,w}(f)&=&\int_0^{\tau}\frac{Y_{i,w}(t)I(C_i\geq T_i\wedge t)I[A_i=\textrm{sgn}\{f(Z_i)\}]}{\exp\{-\Lambda_0(\tilde{T}_i\wedge t,A_i)\}\{A_i\pi_0 + (1-A_i)/2\}}\textrm{d}m(t)-\mathcal{V}_w(\textrm{sgn}(f)) \\
&& -E\left\{A\int_0^{\tau}\frac{Y_{w}(t)I(C\geq T\wedge t)I[A=\textrm{sgn}\{f(Z)\}]}{\exp\{-\Lambda_0(\tilde{T}\wedge t,A)\}\{A\pi_0 + (1-A)/2\}^2}\textrm{d}m(t)\right\}\{I(A_i=1)-\pi_0\} \\
&&+\eta_{\Lambda_{0},f,w}'(\gamma_i), \ \ \ \ i=1,\ldots,n, \ \ w\in\mathcal{W}, \ \ f\in\mathcal{F},
\end{eqnarray*}
where
\[
\mathcal{V}_w(d)=E\left(\left[\int_0^{\tau}\frac{Y_{w}(t)I(C\geq T\wedge t)}{\exp\{-\Lambda_0(\tilde{T}\wedge t,A)\}}\textrm{d}m(t)\right]\frac{I(A=d(Z))}{A\pi_0 + (1-A)/2}\right), 
\]
\[
\eta_{\Lambda_0,f,w}'(h)=E\left\{\int_0^{\tau}\frac{Y_{w}(t)I(C\geq T\wedge t)I[A=\textrm{sgn}\{f(Z)\}]}{\exp\{-\Lambda_0(\tilde{T}\wedge t,A)\}\{A\pi_0 + (1-A)/2\}}h(\tilde{T}\wedge t,A)\textrm{d}m(t)\right\},
\]
for $h$ in the space $D[0,\tau]$ of right continuous functions on $[0,\tau]$ with left hand limits and
\[
\gamma_i(t,a)=\sum_{j\in\{-1,1\}}I(a=j)\left[\int_0^t\frac{I(A_i=j)dN_i(s)}{E\{I(A=j)Y(s)\}}-\int_0^t\frac{I(A_i=j)Y_i(s)}{E\{I(A=j)Y(s)\}}\textrm{d}\Lambda_{0,a}(s)\right], \ \ \ \ t\in[0,\tau],
\]
for $i=1,\ldots,n$. The corresponding empirical versions of the latter influence functions are
\begin{eqnarray*}
\hat{\psi}_{i,w}(f)&=&\int_0^{\tau}\frac{Y_{i,w}(t)I(C_i>T_i\wedge t)I[A_i=\textrm{sgn}\{f(Z_i)\}]}{\exp\{-\hat{\Lambda}_n(\tilde{T}_i\wedge t,A_i)\}\{A_i\hat{\pi}_n + (1-A_i)/2\}}\textrm{d}m(t)-\hat{\mathcal{V}}_{n,w}(\textrm{sgn}(f)) \\
&&-\left\{\frac{1}{n}\sum_{j=1}^nA_j\int_0^{\tau}\frac{Y_{j,w}(t)I(C_j>T_j\wedge t)I[A_j=\textrm{sgn}\{f(Z_j)\}]}{\exp\{-\hat{\Lambda}_n(\tilde{T}_j\wedge t,A_j)\}\{A_j\hat{\pi}_n + (1-A_j)/2\}^2}\textrm{d}m(t)\right\}\{I(A_i=1)-\hat{\pi}_n\} \\
&&+\hat{\eta}_{\hat{\Lambda}_n,f,w}'(\hat{\gamma}_i), \ \ \ \ i=1,\ldots,n, \ \ w\in\mathcal{W}, \ \ f\in\mathcal{F},
\end{eqnarray*}
where
\[
\hat{\mathcal{V}}_{n,w}(d)=\frac{1}{n}\sum_{i=1}^n\left(\left[\int_0^{\tau}\frac{Y_{i,w}(t)I(C_i\geq T_i\wedge t)}{\exp\{-\hat{\Lambda}_n(\tilde{T}_i\wedge t,A_i)\}}\textrm{d}m(t)\right]\frac{I(A_i=d(Z_i))}{A_i\hat{\pi}_n + (1-A_i)/2}\right),
\]
\[
\hat{\eta}_{\hat{\Lambda}_n,f,w}'(h)=\frac{1}{n}\sum_{i=1}^n\left\{\int_0^{\tau}\frac{Y_{i,w}(t)I(C_i>T_i\wedge t)I[A_i=\textrm{sgn}\{f(Z_i)\}]}{\exp\{-\hat{\Lambda}_n(\tilde{T}_i\wedge t,A_i)\}\{A_i\hat{\pi}_n + (1-A_i)/2\}}h(\tilde{T}_i\wedge t,A_i)\textrm{d}m(t)\right\},
\]
for $h\in D[0,\tau]$, and
\[
\hat{\gamma}_i(t,a)=\sum_{j\in\{-1,1\}}I(a=j)\left[\int_0^t\frac{I(A_i=j)\textrm{d}N_i(s)}{n^{-1}\sum_{l=1}^nI(A_l=j)Y_l(s)}-\int_0^t\frac{I(A_i=j)Y_i(s)}{n^{-1}\sum_{l=1}^nI(A_l=j)Y_l(s)}\textrm{d}\hat{\Lambda}_{n,a}(s)\right], 
\]
for $i=1,\ldots,n$ and $t\in[0,\tau]$.

A further relaxation of the independent censoring assumption is to allow censoring to depend on both $A$ and $Z$. In this case, one can impose a semiparametric Cox model of the form $\Lambda(t;A,Z)=\Lambda_0(t)\exp\{\theta'(A,Z')'\}$ for the right censoring time, and use the estimated conditional hazard in the proposed objective function and value function estimators. Provided that this model is correctly specified, the theoretical properties of the proposed method still hold, with the exception that $\gamma_i(t)$ in $\psi_{i,w}(f)$ (see Appendix B) is replaced by the influence function of $\sqrt{n}\{\hat{\Lambda}_n(t)\exp(\hat{\theta}_n'(a,z')')-\Lambda(t;a,z)\}$ under partial likelihood estimation.

\end{appendices}

\end{document}